\newcolumntype{C}[1]{>{\centering\arraybackslash}p{#1}}
\newtheorem{definition}{Definition}
\newtheorem{theorem}{Theorem}
\newtheorem{lemma}{Lemma}
\newtheorem{remark}{Remark}
\newtheorem{algorithm}{Algorithm}
\newtheorem{example}{Example}
\newtheorem{proposition}{Proposition}
\newenvironment{proof}{{\noindent\it \textbf{Proof}}\quad}{
\hfill\rule{2mm}{2mm}\par}
\begin{document}

\title{Supervisory Control of Probabilistic Discrete Event Systems under Partial Observation}

\author{Weilin Deng, Jingkai Yang,
        and Daowen Qiu$^{\star}$ 
\thanks{Weilin Deng is  with the School of Data and Computer Science, Sun Yat-sen University,
Guangzhou, 510006, China. (e-mail:williamten@163.com)}
\thanks{Jingkai Yang is  with the School of Data and Computer Science, Sun Yat-sen University,
Guangzhou, 510006, China. (e-mail:yangjingkai711@163.com)}
\thanks{Daowen Qiu (Corresponding author) is with the School of Data and Computer Science, Sun Yat-sen
 University, Guangzhou, 510006, China. (e-mail: issqdw@mail.sysu.edu.cn)}}

\maketitle

\begin{abstract}
 The supervisory control of probabilistic discrete event systems (PDESs) is investigated
 under the assumptions that the supervisory controller (supervisor) is probabilistic and has a partial observation.
   The probabilistic P-supervisor is defined, which specifies a probability distribution on the control patterns for each observation.
  The notions of the probabilistic controllability and observability are proposed and demonstrated to
  be a necessary and sufficient conditions for the existence of the probabilistic P-supervisors.
  Moreover, the polynomial verification algorithms for the probabilistic controllability and observability  are put forward.
  In addition,
  the infimal probabilistic controllable and observable superlanguage is introduced and computed as the solution of the optimal control problem of PDESs.
  Several examples are presented to illustrate the results obtained.
\end{abstract}

%
%
 \textbf{Keywords:}  probabilistic discrete event systems(PDESs), probabilistic automata, probabilistic languages, supervisory control, optimal control.

\section{Introduction}

Discrete event systems (DESs) are the event-driven systems with discrete states \cite{desbook}.
The supervisory control of DESs was initially proposed by Ramadge and Wonham \cite{first-supevisory}.
Since then, this problem  has gained extensive attention in community (see  e.g., [3]-[13]).
Supervisory control of the partial observed DESs was first considered by Lin and Wonham \cite{observability}, and Cieslak \emph{et al.} \cite{observability2}, respectively.
The necessary and sufficient conditions for an achievable specification by supervisory control
were presented in \cite{observability}.
When these conditions are not met, that is, the given specification is unachievable,
the issue of finding an achievable approximation come forward.
The infimal controllable and observable superlanguage of the specification was considered in \cite{Rudie1}, \cite{Shayman} and \cite{Masopust}.
Moreover,
the calculating algorithm for the supremal controllable normal sublanguage was provided in \cite{Cho},
and a larger controllable and observable sublanguage was obtained in \cite{Takai}.
In particular,
the synthesis issue of the maximally permissive supervisors for the partial observed DESs
 was also well investigated in \cite{Yin1} and \cite{Yin2}.

\par
However, the (conventional) DESs model cannot characterize the probabilistic properties of the probabilistic systems,
which exist commonly in the engineering field.
In order to characterize these probabilistic systems accurately,
the \emph{probabilistic discrete event systems} (PDESs) model (or called as the \emph{stochastic  discrete event systems} model in some literature), as an extension of the DESs model, was put forward.
Besides the components in the DESs model,
the PDESs model additionally defines another component concerning probabilities,
including the transition probabilities between states and the termination probabilities at states.
The probability component offers the PDESs model the ability to characterize the uncertainties  in  probabilistic systems.
\par

In recent years, the PDESs model  has received considerable attention in the community,
especially on the issues of supervisory control  (see e.g., [13]-[22]),
 fault diagnosis  (see e.g., [23]-[25]),
 fault prediction  (see e.g., [26]-[29]),
and state detection \cite{detectability1}, \cite{detectability2}.
Moreover, the PDESs model also has been applied to many practical problems in various areas \cite{SDES-book},
such as robot control [32]-[34],
tobacco control \cite{tobacco-control}, multi-risk systems \cite{multi-risk} and so on.

\par
In this paper, we are devoted into researching the  control of PDESs, which was considered by
Lin \emph{et al.} \cite{Lin-1}, \cite{Lin-2}, and Kumar \emph{et al.} \cite{Kumar}, and Lawford \emph{et al.} \cite{Lawford}, \cite{Pantelic-1}, respectively.
In supervisory control of PDESs, the specifications can be deterministic and probabilistic.
However, the control approaches for deterministic specifications, such as safety and non-blockingness, have been well investigated for (conventional) DESs, and
those approaches also can be applied to PDESs by only considering their logic parts.
Hence, only probabilistic specifications are dealt with in [14]-[18].
For simplicity, this paper also just handles  probabilistic specifications.

\par
Lin \emph{et al.} \cite{Lin-1}, \cite{Lin-2} investigated the issue of the fault-tolerant control of PDESs with ``soft" specifications,
which can be violated within a tolerable scope.
Lin \cite{Lin-1} first considered using probability to specify what is tolerable in the fault-tolerant control.
Li \emph{et al.} \cite{Lin-2}  further developed the work presented in \cite{Lin-1},
in which the authors  discussed standard supervisor  synthesis problem and reset supervisor  synthesis problem.

\par
Kumar \emph{et al.} \cite{Kumar}
investigated the ``range" control problem of PDESs with the assumption that the supervisor has a full observation.
The specification of the ``range" control is given by  a pre-specified range,
where the upper bound is a non-probabilistic language, and the lower bound is a probabilistic language.
Similar to \cite{Lin-1} and \cite{Lin-2},
\cite{Kumar} also defined a \emph{deterministic supervisor} that always issues determinate control actions.

\par
Lawford \emph{et al.}  \cite{Lawford} demonstrated that the probabilistic supervisor generates a much larger
class of probabilistic languages than the deterministic supervisor does.
Hence, they pointed out that the probabilistic control mechanism is much more powerful than the deterministic one.
As a result, Lawford \emph{et al.} \cite{Lawford}, \cite{Pantelic-1}, defined  a probabilistic supervisor,
which is also supposed to have a full observation.

 \par
 Recently,  Pantelic \emph{et al.} \cite{Pantelic-2-optimal-control} investigated the optimal control problem of PDESs.
 The optimal control aims to synthesize a supervisor that
 minimizes the distance between the uncontrollable specification and its controllable approximation.
  In order to measure the distance,
  Pantelic \emph{et al.} \cite{Pantelic-4} proposed the notion of the \emph{pseudometric}, and its calculating algorithms.

  \par
 Chattopadhyay \emph{et. al} \cite{optimal-control} also considered the optimal control issue of PDESs.
  However, different from \cite{Pantelic-2-optimal-control},
  the optimal objective is maximizing the renormalized language measure vector for the controlled plant.
  Based on the measurement, Chattopadhyay \emph{et. al} also formulated a theory for the optimal control of PDESs.

\par
It should be pointed out that the supervisors defined in [16]-[22] are all supposed to have a full observation to the events,
which are not always satisfied in practical engineering systems.
 In this paper, we focus on the supervisory control problem of PDESs
 with the assumptions that the supervisor is probabilistic and has a partial observation to the events.

 \par
 Different from the full-observation supervisors defined in [16]-[22],
 we define a partial-observation probabilistic supervisor, called as \emph{the probabilistic P-supervisor},
 which specifies a probabilistic distribution on the control patterns to each observation.
 Intuitively, for each observation, the probabilistic P-supervisor makes a special roulette.
 The roulette issues several outcomes with the pre-specified probabilities.
  Before making a control decision,  the supervisor will ``roll" the corresponding roulette, and then adopt the $j$th control pattern if
the $j$th outcome is issued.
In addition, we demonstrate the equivalence between the probabilistic P-supervisor and the scaling-factor function.

\par
We then present the notions of the probabilistic controllability and observability, and their polynomial verification algorithms.
We also demonstrate that the probabilistic controllability and observability are the necessary and sufficient conditions for the existence of the probabilistic P-supervisors, and provide the design method of the probabilistic P-supervisor.
Moreover, we consider the optimal control problem of PDESs.
Different from [20-22],
 the \emph{infimal probabilistic controllable and observable superlanguage} is defined and computed as the solution of the optimal control problem of PDESs.

    \par
   The rest of the paper is organized as follows.
 The related notations and the necessary preliminaries are presented in Section \uppercase\expandafter{\romannumeral2}. Then
   the probabilistic P-supervisor is defined, and the equivalence of the probabilistic P-supervisor and the scaling-factor function is demonstrated in Section \uppercase\expandafter{\romannumeral3}.
  After that, in Section \uppercase\expandafter{\romannumeral4}  the supervisory control theory of PDESs under the partial observations are established.
     Finally, we investigate the optimal control problem of PDESs in Section \uppercase\expandafter{\romannumeral5},
    and summarize the main results  and mention several future research directions in Section \uppercase\expandafter{\romannumeral6}. \par

\section{Notation and Preliminaries}
    In this section, we would introduce the automata model and languages model for probabilistic discrete event systems (PDESs).
  \par
    A PDES is usually characterized by a probabilistic automaton. Formally, we present the following definition.
    \begin{definition}
        A PDES could be modeled as the following probabilistic automaton:
        \begin{equation}
             G = \{ X, x_{0}, \Sigma, \delta, \rho  \}.
        \end{equation}
    \begin{itemize}
        \item
        $X$ is the nonempty finite set of states.

        \item
        $x_{0} \in X $ is the initial state.

        \item
        $\Sigma$ is the nonempty finite set of events. $\Sigma = \Sigma_{c} \cup \Sigma_{uc}$, where the $\Sigma_{c}$ and $\Sigma_{uc}$ denote the controllable and uncontrollable events set, respectively.
        Without loss of generality, in this paper, let $|\Sigma| = n$, $|\Sigma_{c}| = m$, and $\sigma_{i} \in \Sigma_{c}$ for  $i \in [1,m]$,
        and $\sigma_{i} \in \Sigma_{uc}$ for $i \in [m+1,n]$.
        Moreover, $\Sigma = \Sigma_{o} \cup \Sigma_{uo}$, where the $\Sigma_{o}$ and $\Sigma_{uo}$ denote the observable and unobservable events sets, respectively.

        \item
        $\delta: X \times \Sigma \rightarrow X$ is the (partial) transition function.
        The function $\delta$ can be extended to $X \times \Sigma^{*} $ by the natural manner.

         \item
        $\rho: X \times \Sigma \rightarrow [0,1]$ is the  transition-probability function.
        $\rho(x,\sigma)$ is the probability of the transition $\delta(x,\sigma)$.
        If $\delta(x,\sigma)$ is defined, denoted as $\delta(x,\sigma)!$, then $\rho(x,\sigma) > 0$;
        else if $\delta(x,\sigma)$ is not defined, denoted as $\delta(x,\sigma)$\sout{!}, then $\rho(x,\sigma) = 0$.
        Particularly, $\forall x \in X$, $\sum_{\sigma \in \Sigma} \rho(x,\sigma) \leq 1 $.
        $\sum_{\sigma \in \Sigma} \rho(x,\sigma)$, called as \emph{the liveness of the state} $x$,
        characterizes the possibility of certain events occurring at the state $x$.
         $1 - \sum_{\sigma \in \Sigma} \rho(x,\sigma)$,
        called as \emph{the termination probability at the state} $x$, characterizes the probability of no event occurring at $x$.
        In particular, if  $\sum_{\sigma \in \Sigma} \rho(x,\sigma) = 1$  holds for any $x \in X$,
        then the system $G$ is called as \emph{nonterminating PDES},
        otherwise, the system $G$ is called as \emph{terminating PDES}.
    \end{itemize}

    \par
    In general, the states that are not reachable from the initial state are meaningless.
    Hence, these states and all the transitions attached to them can be removed.
    This operation is denoted by $Ac(\cdot)$.
    In this paper, we assume that any a automaton (or probabilistic automaton) $G$ is accessible, that is, $G = Ac(G)$.

    \par
    The \emph{logic part of the PDES} $G$, denoted by $logic(G)$, is obtained by dropping the probabilistic module from $G$.
    That is, $logic(G) = \{ X, x_{0}, \Sigma, \delta \}$,
    which is a (conventional) DES.
    \end{definition}

\par
Lawford and Wonham \cite{Lawford} presented a method to extend a terminating system to a nonterminating one.
    However, even if the plant is a nonterminating PDES, it would be changed into a general one during the supervisory control \cite{Kumar}.
    Hence, we would  consider the general PDESs in this paper.

\par
The combination of two PDESs could be characterized by the \emph{product} operation, which is defined as follows.
\par
\begin{definition}
  Given a pair of  PDESs $H_{i} = \{ Q_{i}, q_{0,i}, \Sigma,  \delta_{H_{i}}, \rho_{H_{i}}  \}$,
   $ i \in \{1,2\} $.
  The product of $H_{i}$, denoted by $H_{1} \times H_{2}$, is defined as follows.
    \begin{equation}
      H = H_{1} \times H_{2} = \{ Q_{1} \times Q_{2}, (q_{0,1}, q_{0,2}), \Sigma, \delta_{H}, \rho_{H}  \},
    \end{equation}
    where the $\rho_{H}$ and $\delta_{H}$ are, respectively, defined as
           \begin{equation}
                \rho_{H}((q_1,q_2),\sigma) = \min \{\rho_{H_{1}}(q_{1},\sigma),  \rho_{H_{2}}(q_{2},\sigma) \},
           \end{equation}
      and if  $\rho_{H}((q_1,q_2),\sigma)  > 0$,
          \begin{equation}
            \delta_{H}((q_1,q_2),\sigma) = (\delta_{H_{1}}(q_{1},\sigma),  \delta_{H_{2}}(q_{2},\sigma) ).
          \end{equation}
\end{definition}
\par
According to the definition, it is clear that $logic(H_{1} \times H_{2}) = logic(H_{1}) \times logic(H_{2})$.

\par
The behaviors of a PDES is characterized by the its \emph{generated language}, which is defined as follows.
\begin{definition}
    The probabilistic language generated by the PDES $G$
    is defined as the following mapping  $L_{G}: \Sigma^{*} \rightarrow [0,1]$.
    \begin{equation}
         L_{G}(\epsilon)= 1 ,
    \end{equation}
     where $\epsilon$ is the empty character,
     and for $\forall s \in \Sigma^{*}$ and $ \forall \sigma \in \Sigma$,
     \begin{align}
          L_{G}(s\sigma)  =
          \begin{cases}
           L_G(s) * \rho(\delta(x_0,s),\sigma),           & \text{ if        }  \delta(x_0,s)!,\\
           0,   & \text{ otherwise}. \\
          \end{cases}
  \end{align}

\par
Intuitively, $L_G(s)$ could be viewed as the probability that the string $s$ can be executed in plant $G$.
$L_{G}(\epsilon)= 1$ represents  that a system can always execute the empty character \cite{Kumar}.
\end{definition}

    \par
   Since $\sum_{\sigma \in \Sigma} \rho(x,\sigma) \leq 1 $, it is easy to obtain
    \begin{equation}
      \sum_{\sigma \in \Sigma} L_{G}(s\sigma)\leq L_{G}(s), \forall s \in \Sigma^{*}.
    \end{equation}

\par
 Equations (5) and (7) are exactly the conditions P1) and P2), respectively, in \cite{Kumar}.
 Therefore, the generated language of a PDES is  the \emph{probabilistic language} defined in \cite{Kumar}.
\par

\begin{definition}
  Given a pair of PDESs $H_{1}$ and $H_{2}$. $H_{1}$ and $H_{2}$ are said to be \emph{language-equivalent} if they generate the same probabilistic language,
  that is, $L_{H_{1}}(s) = L_{H_{2}}(s)$, for $\forall s \in \Sigma^{*}$.
\end{definition}

    \par
   The \emph{support language} of a probabilistic language $L$ is defined as $supp(L) =  \{s \in \Sigma^{*} | L(s) > 0\}$  \cite{Kumar}.
It should be pointed out that the support language of a probabilistic language
is always \emph{prefix-closed}.

\par
A  probabilistic language $L$ is \emph{regular} if there exists a finite state probabilistic automaton generating $L$.
Only regular probabilistic languages are considered in this paper for convenience.

\par
In this paper, the behaviors of PDESs are assumed to be partially observed by supervisors.
The partial observation can be characterized by the projection function $P:\Sigma \rightarrow \Sigma_{o}$, which is defined as follows.
  \begin{align}
      P(\sigma) =
      \begin{cases}
       \sigma,           & \text{ if        }  \sigma \in \Sigma_{o},\\
       \epsilon,   & \text{ otherwise}, \\
      \end{cases}
  \end{align}
  where $\epsilon$ is the empty character. It can be extended to $\Sigma^{*}$  by $P(\epsilon) = \epsilon$, and $P(s\sigma)=P(s)P(\sigma)$ for $s \in \Sigma^{*}$ and $\sigma \in \Sigma$.

\par
The common part of two systems behaviors could be characterized by the \emph{intersection} operation defined as follows.
\begin{definition}
  Given two probabilistic languages $L_{1}$ and $L_{2}$ over events set $\Sigma$.
  The \emph{intersection} of $L_{1}$ and $L_{2}$, denoted by $L_{1} \cap L_{2}$, is defined as follows.
         \begin{equation}
                (L_{1} \cap L_{2}) (\epsilon)  = 1;
         \end{equation}
and for $\forall s \in \Sigma^{*}, \forall \sigma \in \Sigma$,
  \begin{align}
      (L_{1} \cap L_{2})(s\sigma) =
      \begin{cases}
     (L_{1} \cap L_{2})(s) &* \min \{ \frac{L_{1}(s\sigma)}{L_{1}(s)}, \frac{L_{2}(s\sigma)}{L_{2}(s)} \}, \\
              & \text{ if    }  s \in supp(L_{1} \cap L_{2}),\\
         0,   & \text{ otherwise}.
      \end{cases}
  \end{align}
\end{definition}
\par
According to the definitions of \emph{product} ($\times$) and \emph{intersection} ($\cap$), it is obvious that
 $ L_{H_{1} \times H_{2}} = L_{H_{1}} \cap L_{H_{2}}$.
  \begin{definition}
    Given two probabilistic languages $L_{1}$ and $L_{2}$ over events set $\Sigma$, $supp(L_{1}) \subseteq supp(L_{2})$.
    $L_{1}$ is said to be a \emph{probabilistic sublanguage} of $L_{2}$, denoted as $ L_{1} \subseteq L_{2}$,
    if for $\forall s \in supp(L_{1})$ and $\forall \sigma \in \Sigma$,
        \begin{equation}
            \frac{L_{1}(s\sigma)}{L_{1}(s)} \leq \frac{L_{2}(s\sigma)}{L_{2}(s)}.
        \end{equation}
\end{definition}
\par

Since $L_{1}(\epsilon) = L_{2}(\epsilon) = 1$, by  induction on the length of the string, it is easy to prove that
$L_{1}(s) \leq L_{2}(s)$, if $ L_{1} \subseteq L_{2} $.
Moreover, it is obvious $L_{1} \subseteq L_{2} \Rightarrow supp(L_{1}) \subseteq supp(L_{2}) $.
\par
In the rest of this paper, the \emph{probabilistic sublanguage} is abbreviated to \emph{sublanguage}.
\par
By means of the definitions of intersection ($\cap$) and sublanguage ($\subseteq$),
the following proposition could be obtained immediately.

\begin{proposition}
   Given two probabilistic languages $L_{1}$ and $L_{2}$ over events set $\Sigma$.
   $L_{1} \cap L_{2} \subseteq L_{1}$; $L_{1} \cap L_{2} \subseteq L_{2}$.
\end{proposition}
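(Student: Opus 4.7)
The plan is to verify the two conditions in Definition 7 directly from the recursive definition of intersection, handling each symmetric claim in the same way; it is enough to prove $L_1 \cap L_2 \subseteq L_1$, since $L_1 \cap L_2 \subseteq L_2$ follows by interchanging the roles. The two obligations are (i) the support inclusion $\mathrm{supp}(L_1 \cap L_2) \subseteq \mathrm{supp}(L_1)$, and (ii) the pointwise ratio inequality $(L_1 \cap L_2)(s\sigma)/(L_1 \cap L_2)(s) \le L_1(s\sigma)/L_1(s)$ for every $s \in \mathrm{supp}(L_1 \cap L_2)$ and $\sigma \in \Sigma$.

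First I would dispatch (ii), which is the cleaner part. Fix $s \in \mathrm{supp}(L_1 \cap L_2)$ and any $\sigma \in \Sigma$. By clause (10) of the intersection definition, we have the identity
\begin{equation*}
\frac{(L_1 \cap L_2)(s\sigma)}{(L_1 \cap L_2)(s)} \;=\; \min\!\Bigl\{\frac{L_1(s\sigma)}{L_1(s)},\frac{L_2(s\sigma)}{L_2(s)}\Bigr\},
\end{equation*}
and the right-hand side is trivially bounded above by $L_1(s\sigma)/L_1(s)$, which is what Definition 7 demands. The use of $s \in \mathrm{supp}(L_1 \cap L_2)$ ensures that the denominator on the left is positive, so the ratio is well-defined; combined with (i) this also guarantees $L_1(s) > 0$.

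Next I would establish (i) by induction on $|s|$. The base case $s = \epsilon$ is immediate because $(L_1 \cap L_2)(\epsilon) = 1 = L_1(\epsilon)$. For the inductive step, suppose $s\sigma \in \mathrm{supp}(L_1 \cap L_2)$. Then by clause (10) we must be in the first case, which forces $s \in \mathrm{supp}(L_1 \cap L_2)$ and $\min\{L_1(s\sigma)/L_1(s), L_2(s\sigma)/L_2(s)\} > 0$; in particular $L_1(s\sigma) > 0$, so $s\sigma \in \mathrm{supp}(L_1)$, completing the induction. There is no real obstacle here: the proposition is a direct unpacking of the $\min$ operator in the recursion, and the only subtle point is keeping the denominators nonzero, which is exactly what the support hypothesis in Definition 7 is designed to supply.
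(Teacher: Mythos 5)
Your proof is correct and follows exactly the route the paper intends: the paper gives no explicit proof, stating only that the proposition ``could be obtained immediately'' from the definitions of intersection and sublanguage, and your argument is precisely that immediate unpacking (the $\min$ bound for the ratio condition, plus a short induction for the support inclusion). Nothing is missing; you have simply written out the details the paper omits.
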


\begin{definition}
  Given a pair of PDESs $H_{i} = \{ Q_{i}, q_{0,i}, \Sigma,  \delta_{H_{i}},\rho_{H_{i}}  \}$,
  $ i \in \{1,2\} $.
  $H_{1}$ is called as a \emph{probabilistic subautomaton }of $H_{2}$, denoted as $H_{1} \sqsubseteq H_{2}$,
  if   $Q_{1} \subseteq Q_{2}$, and $q_{0,1} = q_{0,2}$,
  and  for  $\forall q \in Q_{1}$, $\forall \sigma \in \Sigma$,
      \begin{equation}
      [\delta_{H_{1}}(q, \sigma) = \delta_{H_{2}}(q, \sigma)]    \wedge      [\rho_{H_{1}}(q,\sigma) \leq \rho_{H_{2}}(q,\sigma)].
      \end{equation}
\end{definition}

\par
Intuitively, the notion of \emph{probabilistic subautomaton} particularly requires that the state transition diagram of $H_1$ must be a subgraph of that of $H_2$,
and the probabilities of the corresponding transitions in $H_{1}$ must be not larger than that in $H_{2}$.

\par
In the rest of the paper, the \emph{probabilistic subautomaton} is abbreviated to \emph{subautomaton}.

\par
According to the definitions of sublanguage ($\subseteq$) and subautomaton ($\sqsubseteq$),
we have the following propositions.

\par
\begin{proposition}
    $ H \sqsubseteq G \Rightarrow L_{H} \subseteq L_{G} $;
\end{proposition}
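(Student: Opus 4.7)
The plan is to unfold the two defining conditions for $L_H \subseteq L_G$ and check each by induction on the length of the string, driving everything back to the pointwise inequality $\rho_H(q,\sigma) \leq \rho_G(q,\sigma)$ and the agreement of transitions guaranteed by $H \sqsubseteq G$.

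First I would prove the auxiliary claim that for every $s \in \Sigma^*$ with $s \in \mathrm{supp}(L_H)$, the extended transitions agree, i.e.\ $\delta_H(x_0,s) = \delta_G(x_0,s)$ (here $x_0 = q_{0,1} = q_{0,2}$) and both are defined. The base case $s = \epsilon$ is immediate since the initial states coincide. For the inductive step on $s\sigma \in \mathrm{supp}(L_H)$, the recursion for $L_H$ in Definition~3 gives $L_H(s\sigma) = L_H(s)\cdot \rho_H(\delta_H(x_0,s),\sigma) > 0$, so $s \in \mathrm{supp}(L_H)$ and $\rho_H(\delta_H(x_0,s),\sigma) > 0$, which forces $\delta_H(\delta_H(x_0,s),\sigma)$ to be defined. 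Applying the induction hypothesis and the subautomaton condition $\delta_H(q,\sigma) = \delta_G(q,\sigma)$ for $q \in Q_H$ then yields the transition equality at $s\sigma$.

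Next I would establish $\mathrm{supp}(L_H) \subseteq \mathrm{supp}(L_G)$ by the same induction. If $s\sigma \in \mathrm{supp}(L_H)$, then by the previous step $\delta_G(x_0,s) = \delta_H(x_0,s)$ is defined, and the subautomaton inequality gives $\rho_G(\delta_G(x_0,s),\sigma) \geq \rho_H(\delta_H(x_0,s),\sigma) > 0$; combined with $L_G(s) > 0$ from the inductive hypothesis, the recursion yields $L_G(s\sigma) > 0$.

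Finally, for any $s \in \mathrm{supp}(L_H)$ and $\sigma \in \Sigma$, the recursion formula (6) together with the fact that both $\delta_H(x_0,s)$ and $\delta_G(x_0,s)$ are defined and equal gives
\begin{equation*}
\frac{L_H(s\sigma)}{L_H(s)} = \rho_H(\delta_H(x_0,s),\sigma) \leq \rho_G(\delta_G(x_0,s),\sigma) = \frac{L_G(s\sigma)}{L_G(s)},
\end{equation*}
where the inequality is exactly the probability clause of $H \sqsubseteq G$. This is Definition~7, so $L_H \subseteq L_G$. The only subtle point, and the one I would highlight in the write-up, is the need to verify that $\delta_G(x_0,s)$ is defined on the support of $L_H$ before invoking the recursion for $L_G$; once the transition-agreement lemma is in hand, the rest is a direct comparison.
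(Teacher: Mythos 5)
Your proof is correct. The paper states this proposition without proof, presenting it as an immediate consequence of the definitions of sublanguage and subautomaton, and your argument --- the induction establishing transition agreement and support containment on $\mathrm{supp}(L_H)$, followed by the pointwise comparison $\rho_H(q,\sigma) \leq \rho_G(q,\sigma)$ of the one-step ratios --- is exactly the straightforward unfolding the authors leave implicit.
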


\section{partial-observation probabilistic supervisor}
In this section, we define a partial-observation probabilistic supervisor, called as the probabilistic P-supervisor,
and then demonstrate the equivalence of the probabilistic P-supervisor and the scaling-factor function.

\par

The supervisor was defined under the assumption of \emph{full observation} in [16]-[22].
In order to characterize the partial observation of the probabilistic supervisor, we would consider a \emph{partial-observation probabilistic supervisor},
which is defined as follows.
\begin{definition}
  Given plant $ G = \{ X, x_{0}, \Sigma, \delta, \rho  \}$ with $\sigma_{i} \in \Sigma_{c}$ for $i \in [1,m]$, $\sigma_{i} \in \Sigma_{uc}$ for $i \in [m+1,n]$,
  and the observable events set $\Sigma_{o}$.
  $\Theta = \{\theta_{j} | \theta_{j} = \psi \cup \Sigma_{uc}, \psi \in 2^{\Sigma_{c}} \}$ is the set of control patterns.
  The probabilistic P-supervisor $S_{p}: P(L_G) \rightarrow [0,1]^{\Theta}$ is defined as follows: $\forall s \in supp(L_G)$, $P(s) = t$,
\begin{equation}
       S_{p}(t) = \left(
                                                \begin{array}{c}
                                                   p_{0}^{t}   \\
                                                   p_{1}^{t}   \\
                                                   ...      \\
                                                   p_{2^{m}-1}^{t}
                                                \end{array}
                                             \right),
\end{equation}
 where $p_{j}^{t}, j \in [0,2^{m}-1]$, is the probability of control pattern $\theta_{j} \in \Theta$ being adopted by the supervisor when observing $t$.
 These $p_{j}^{t}$ form a probabilistic distribution. That is, $\sum_{j=0}^{2^{m}-1} p_{j}^{t} = 1$.
\end{definition}

\begin{remark}
  Intuitively, $S_{p}(t)$ could be viewed as a special roulette for the observation $t$.
  The roulette issues $2^{m}$ outcomes with the probabilities $p_{j}^{t}, j \in [0,2^{m}-1]$.
  While observing $t$, the supervisor will ``roll" the roulette corresponding to $t$,
  and then adopt the $j$th control pattern if the $j$th outcome is issued.
\end{remark}

 \par

  We could encode any a control pattern $\theta_{j}$, $j \in [0,2^m - 1]$, to an $m$-bits binary number $(b_{m}b_{m-1}....b_{2}b_{1})_{2}$, $b_{i} \in \{0, 1\}$ as follows.
  Firstly, if the controllable event $\sigma_{i} \in \theta_{j}$, then the $i$th bit of the binary number $b_{i} = 1$, otherwise $b_{i} = 0$, $i \in [1,m]$.
  Secondly, let $j = (b_{m}b_{m-1}....b_{2}b_{1})_{2}$.
  Consequently, we obtain such a simple encoding rule.

 \par
 More formally, the containment relationship between the  $i$th controllable event and the $j$th control pattern can  be represented by
    the containment matrix $[IN(i,j)]_{i \in [1,m]}^{j \in [0, 2^m - 1]}$,
    in which  $IN(i,j) = 1$ if $\sigma_{i} \in \theta_{j}$, otherwise $IN(i,j) = 0$.
 Actually, the containment matrix $IN$ is only related to the variable $m$,
 and the $j$th column of matrix $IN$ is the binary form of the decimal integer $j$.

 \par
 Note that the containment matrix $IN$ does not consider the $r$th ($r \in [m+1, n]$) event that is uncontrollable event.
  Since the uncontrollable events are always contained in any control patterns, we can obtain the complete
  containment matrix $\overline{IN}$ by adding $(n-m)$ rows with all the entries being $1$ to the matrix $IN$.

\begin{example}
  Suppose $\Sigma = \{ \sigma_{1}, \sigma_{2}, \sigma_{3} \}$, where $\sigma_{1}, \sigma_{2} \in \Sigma_{c}$ and $\sigma_{3} \in \Sigma_{uc}$.
  According to the encoding rule, since $0 = (00)_2$, $1 = (01)_2$, $2 = (10)_2$, and $3 = (11)_2$,
       the matrices $IN$ and $\overline{IN}$ are as follows.
  \[
        IN  =
        \bordermatrix{%
                    & \theta_{0} & \theta_{1} & \theta_{2} & \theta_{3}  \cr
                    \sigma_{1}   & 0          & 1          & 0     & 1  \cr
                    \sigma_{2}   & 0          & 0          & 1     & 1
        }; \qquad
       \overline{IN}  =
        \bordermatrix{%
                    & \theta_{0} & \theta_{1} & \theta_{2} & \theta_{3}  \cr
                    \sigma_{1}   & 0          & 1          & 0     & 1 \cr
                    \sigma_{2}   & 0          & 0          & 1     & 1 \cr
                    \sigma_{3}   & 1          & 1          & 1     & 1
        }.
  \]
\par
Hence, we obtain the set of control patterns $\Theta = \{ \theta_{0}, \theta_{1}, \theta_{2}, \theta_{3}\}$, where
  $\theta_{0} = \{\sigma_3\}$,
  $\theta_{1} = \{\sigma_1, \sigma_3\}$,
  $\theta_{2} = \{\sigma_2, \sigma_3\}$,
  and $\theta_{3} = \{\sigma_1, \sigma_2, \sigma_3\}$.

\end{example}

\par
Before considering how the probabilistic P-supervisor acts on a PDES, we would discuss how the deterministic P-supervisor acts on a PDES.
Let us see a real-world example first.

\begin{example}

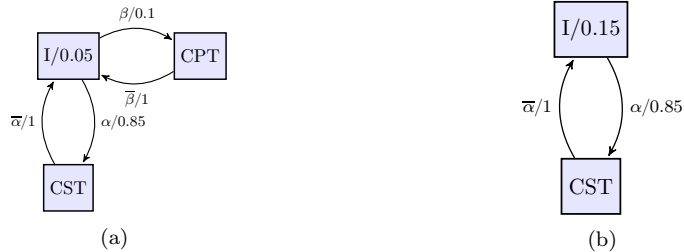
\begin{figure}[htbp]
\centering
\begin{tabular}{C{.5\textwidth}C{.5\textwidth}}
\subfigure [ ] {
    \resizebox{0.25\textwidth}{!}{%
                \centering
            \begin{tikzpicture}[scale=0.25,->,>=stealth',shorten >=1pt,auto,node distance=2.5cm, semithick,
                              every state/.style={fill=blue!10,thick,rectangle}]

                            \node[state] (x0) {I/0.05};
                            \node[state] [below   of = x0] (x1) {CST};
                            \node[state] [right   of = x0] (x2) {CPT};

                            \draw[every node/.style={font=\footnotesize}]
                                   (x0) edge [bend left] node{$\alpha/0.85$} (x1)
                                   (x1) edge [bend left] node{$\overline{\alpha}/1$} (x0)
                                   (x0) edge [bend left] node{$\beta/0.1$}   (x2)
                                   (x2) edge [bend left] node{$\overline{\beta}/1$} (x0);

                            \end{tikzpicture}
    }
} &
\subfigure [ ] {
 \resizebox{0.2\textwidth}{!}{%
           \centering
                                          \begin{tikzpicture}[scale=0.25,->,>=stealth',shorten >=1pt,auto,node distance=2.5cm, semithick,
                              every state/.style={fill=blue!10,thick,rectangle}]

                            \node[state] (x0) {I/0.15};
                            \node[state] [below   of = x0] (x1) {CST};

                            \draw[every node/.style={font=\footnotesize}]
                                   (x0) edge [bend left] node{$\alpha/0.85$} (x1)
                                   (x1) edge [bend left] node{$\overline{\alpha}/1$} (x0) ;
                            \end{tikzpicture}
    }
}
\end{tabular}
\caption{(a). The plant to be controlled; (b). The controlled plant. }
\end{figure}

  Suppose a group of customer-service staffs in a call center are responsible for answering the phones about  consultations and complaints.
  Assume that all the phones could be answered timely, but the group's workload at rush hours (the plant to be controlled) is quite heavy, which is shown in Fig. 1 - (a).
  The state ``I/0.05" denotes the customer-service staffs being available with the probability 0.05;
  and the event $\alpha/0.85$  denotes  the consultation  phones coming in,
   and the plant turning to the ``CST" state (answering the  consultation phone) with the probability $0.85$;
   and the event $\beta/0.1$ denotes  the  complaint phones coming in,
   and the plant turning to the ``CPT" state (answering the  complaint phone) with the probability $0.1$.

  \par
  To reduce the workload of the group at rush hours, an adjusting plan (the control decision) is made by the manager (the supervisor)
  such that the complaint phones are not allowed to come in at rush hours (the event $\beta$ is not allowed to occur at the state ``I").
  It is not hard to understand that such a decision would decrease the probability of the event $\beta$ to zero,
   and then increase the probability of the group being available (the termination probability at state ``I") accordingly.
   In addition, since all the consultation phones could be answered timely at first,
   hence, the decision would not affect the probability of the consultation phones coming in (the probability of the event $\alpha$).
  Therefore, we obtain the group's workload after the adjusted plan is adopted (the controlled plant), which is shown in Fig. 1 - (b).
\end{example}

\par
\begin{remark}
The aforementioned example implies the fact that
for a deterministic control decision $\theta$, if $\sigma \in \theta$, then the probability of the transition with $\sigma$ will remain the same; otherwise, it will decrease to zero.
Formally, we have
\[
    \xi(\sigma) = pr(\sigma) * pr(\sigma \text{ is enabled by } \theta),
\]
where $\xi(\sigma)$ and $pr(\sigma)$ denote the probabilities of the transition with the event $\sigma$ at the current state in the controlled plant and uncontrolled plant, respectively.
Because $\theta$ is a deterministic control decision,
we have $pr(\sigma \text{ is enabled by } \theta) = 1$, if $\sigma \in \theta$; otherwise, $pr(\sigma \text{ is enabled by } \theta) = 0$.
\end{remark}

\par
In what follows, we would generalize the case of the deterministic P-supervisor to the case of probabilistic P-supervisor.
First, we  present the notion of the \emph{controlled transition-probability function},
which characterizes how  a probabilistic P-supervisor acts on the probabilities of the transitions of the plant.
\begin{definition}
  Given a PDES $G = \{ X, x_{0}, \Sigma, \delta, \rho  \}$ with $\sigma_{i} \in \Sigma_{c}$ for $i \in [1,m]$, $\sigma_{i} \in \Sigma_{uc}$ for $i \in [m+1,n]$,
  and the observable events set $\Sigma_{o}$,
    and the  probabilistic P-supervisor: $\forall s \in supp(L_{G})$, $P(s) = t$,
   $S_{p}(t) = [p_{0}^{t} \quad  p_{1}^{t} \quad ... \quad p_{2^{m}-1}^{t}]^{T}$.
     Suppose that $\delta(x_{0}, s) = x$,
     the controlled  transition-probability function by $S_{p}$ $\xi : L_G \times \Sigma \rightarrow [0,1]$ is defined as follows:
     for $\sigma_{i} \in \Sigma$,
 \begin{align}
  \xi(s , \sigma_{i})= &  pr(\sigma_{i} \text{ is active at the state  $x$}) *
                        pr(\sigma_{i} \text{ is enable by $S_{p}(t)$}) \nonumber \\
                     = & \rho(x, \sigma_{i}) * \sum_{j=0 }^{2^m -1}  pr(\theta_{j} \text{ is adopted}) *
                         pr(\sigma_{i} \text{ is enable } | \theta_{j} \text{ is adopted})   \nonumber\\
                     = & \rho(x, \sigma_{i}) * \sum_{j=0 }^{2^m -1} \{ p_{j}^{t} * \overline{IN}(i,j)\}.
 \end{align}

\par
     Note that $\forall i \in [1,m]$, that is, $\sigma_i \in \Sigma_{c}$, and $\forall j\in [0, 2^m - 1]$, $\overline{IN}(i,j) \leq 1$.
     Hence,
            \begin{align}
                             \xi(s , \sigma_{i}) &\leq \rho(x, \sigma_{i}) * \sum_{j=0 }^{2^m -1}  p_{j}^{t}
                                                 = \rho(x, \sigma_{i}),  \text{ if $i \in [1, m]$ }.
             \end{align}

 \par
     On the other hand, if $i \in [m+1, n]$, that is, $\sigma_i \in \Sigma_{uc}$, and $\forall j\in [0, 2^m - 1]$, $\overline{IN}(i,j) = 1$.
     Hence,
            \begin{align}
                             \xi(s , \sigma_{i}) &= \rho(x, \sigma_{i}) * \sum_{j=0 }^{2^m -1}  p_{j}^{t}
                                                            = \rho(x, \sigma_{i}), \text{ if $i \in [m+1, n]$ }.
             \end{align}

 \end{definition}

 \begin{remark}
   Actually, $\xi(s , \sigma_{i})$ is the probability of the transition  with event $\sigma_{i}$ at the state reaching by sequence $s$ in the controlled system $S_{p}/G$.
    Equations (15) and (16) have the following meaning:
    under the probabilistic supervisory control, the probability of the transitions with the controllable events will decrease usually;
    however, the probability of the transitions with the uncontrollable events will remain the same.
   It meets the peoples' intuition about the supervisory control theory.
   That is, the supervisory control always limits the behaviors of the plant, unless the behaviors are uncontrollable.
 \end{remark}

 By means of the above notion, we could present the \emph{controlled probabilistic language}  $L_{S_{p}/G}$,
 which denotes the behaviors of the controlled plant $S_{p}/G$.
 \begin{definition}
   Given the PDES $G = \{ X, x_{0}, \Sigma, \delta, \rho  \}$ with $\sigma_{i} \in \Sigma_{c}$ for $i \in [1,m]$, $\sigma_{i} \in \Sigma_{uc}$ for $i \in [m+1,n]$, and the observable events set $\Sigma_{o}$,
    and the probabilistic P-supervisor $S_{p}$.
     The controlled probabilistic language  $L_{S_{p}/G}$ is defined by the following recursive manner:
         \begin{equation}
                L_{S_{p}/G}(\epsilon) = 1,
         \end{equation}
and $\forall s \in \Sigma^{*}, \forall \sigma_{i} \in \Sigma$, $i \in [1,n]$,
        \begin{equation}
             L_{S_{p}/G}(s\sigma_{i})  =  L_{S_{p}/G}(s) * \xi(s , \sigma_{i}). 
         \end{equation}
 \end{definition}

\par
    Note that the controlled probabilistic language $L_{S_{p}/G}$ is only determined by the controlled transition-probability function $\xi$ defined in Equation (14), which can be computed by using the probabilistic P-supervisor $S_{p}$ and the plant $G$. Obviously,
    the controlled probabilistic language is usually a terminating one, even the uncontrolled plant is a nonterminating system.

\par
     For each observation $t \in P(L_G)$, the probabilistic P-supervisor has defined a  vector
     that has a quite big size dimensions ($2^{m}$).
     At the end of this section, we would consider how to construct  a more compact form for the probabilistic P-supervisor.

\par
\begin{definition}
   Given the PDES $G = \{ X, x_{0}, \Sigma, \delta, \rho  \}$ with $\sigma_{i} \in \Sigma_{c}$ for $i \in [1,m]$, $\sigma_{i} \in \Sigma_{uc}$ for $i \in [m+1,n]$, and the observable events set $\Sigma_{o}$.
   The scaling-factor function $K: P(L_G) \rightarrow [0,1]^{\Sigma} $ is defined as follows:  $\forall s \in supp(L_{G})$ such that
   $\delta(x_{0}, s) = x$,   $P(s) = t$,
        \begin{equation}
               K(t) = \left(
                                                        \begin{array}{c}
                                                           k_{1}^{t}   \\
                                                           k_{2}^{t}   \\
                                                           ...      \\
                                                           k_{n}^{t}
                                                        \end{array}
                                                     \right),
        \end{equation}
        where
 $ 0 \leq k_{i}^{t} \leq 1$ for $i \in [1, m]$, and $k_{i}^{t} = 1$ for $i \in [m+1, n]$.
 $ K(t)(\sigma_i) = k_{i}^{t}$, $i \in [1, n]$, is called as the \emph{scaling-factor of the transition with $\sigma_{i}$ at $x$}.

\end{definition}

\par
The following notion characterizes the controlled behaviors of the plant by the scaling-factor function $K$.

\par
 \begin{definition}
   Given the PDES $G = \{ X, x_{0}, \Sigma, \delta, \rho  \}$ with $\sigma_{i} \in \Sigma_{c}$ for $i \in [1,m]$, $\sigma_{i} \in \Sigma_{uc}$ for $i \in [m+1,n]$, and the observable events set $\Sigma_{o}$,
    and the  scaling-factor function $K: P(L_G) \rightarrow [0,1]^{\Sigma} $.
    The controlled probabilistic language  $L_{K/G}$ by $K$ is defined by the following recursive manner:
         \begin{equation}
                L_{K/G}(\epsilon) = 1,
         \end{equation}
and  $\forall s \in \Sigma^{*}$, $P(s) = t$, such that $\delta(x_0, s) = x$, and $\forall \sigma_{i} \in \Sigma$,
        \begin{equation}
             L_{K/G}(s\sigma_{i})  =  L_{K/G}(s) *\rho(x, \sigma_{i}) * K(t)(\sigma_{i}).
         \end{equation}
 \end{definition}

\par
By means of Definition 8 and Definition 10, if
\begin{equation}
          K(t)(\sigma_i) = \sum_{j=0 }^{2^m -1} \{ p_{j}^{t} * \overline{IN}(i,j)\}, i \in [1,n],
\end{equation}
then $L_{S_{p}/G} = L_{K/G}$, that is, the controls by  the probabilistic P-supervisor $S_{p}$ and the scaling-factor function  $K$ are equivalent to each other.

\par
In what follows, we would prove this equivalence by proving the solvability of Equation (22).
The celebrated \emph{Farkas's Lemma} in linear algebra is necessary for the proof.

\par
\begin{lemma}
(\textbf{Farkas' Lemma \cite{Farkas}}) Let $\mathbf{A} \in \mathbb{R}^{p \times q}$ and $\mathbf{b} \in \mathbb{R}^{p}$. Then exactly one of the following two statements is true:
        \begin{enumerate}
                \item
                        There exists an $\mathbf {x} \in \mathbb {R} ^{q}$, such that $\mathbf {A \times x} =\mathbf {b}$ and $\mathbf {x} \geq 0$.
                \item
                        There exists a $\mathbf {y} \in \mathbb {R} ^{p}$, such that $\mathbf {A} ^{\mathsf {T}} \times \mathbf {y} \geq 0$ and $\mathbf {b} ^{\mathsf {T}} \times \mathbf {y} <0$.
        \end{enumerate}
\end{lemma}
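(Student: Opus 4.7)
The plan is to prove Farkas' Lemma in two stages: first, show that the two alternatives cannot hold simultaneously, and then apply a separating-hyperplane argument to the finitely generated convex cone
$C = \{\mathbf{A}\mathbf{x} : \mathbf{x} \in \mathbb{R}^{q},\ \mathbf{x} \geq 0\} \subseteq \mathbb{R}^{p}$
to establish that at least one alternative holds.

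Mutual exclusion is a one-line computation: if both held for the same data, then $0 \leq (\mathbf{A}^{\mathsf{T}}\mathbf{y})^{\mathsf{T}}\mathbf{x} = \mathbf{y}^{\mathsf{T}}\mathbf{A}\mathbf{x} = \mathbf{y}^{\mathsf{T}}\mathbf{b} = \mathbf{b}^{\mathsf{T}}\mathbf{y} < 0$, a contradiction. So I would dispatch this direction first and devote the remaining effort to the converse.

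For the converse I would assume (1) fails, so $\mathbf{b} \notin C$. Granting for the moment that $C$ is closed, the strict separating-hyperplane theorem for a point and a closed convex set yields a nonzero $\mathbf{y} \in \mathbb{R}^{p}$ and a scalar $\alpha$ with $\mathbf{y}^{\mathsf{T}}\mathbf{z} \geq \alpha$ for every $\mathbf{z} \in C$ and $\mathbf{y}^{\mathsf{T}}\mathbf{b} < \alpha$. Since $0 \in C$ we have $\alpha \leq 0$; since $C$ is a cone, replacing $\mathbf{z}$ by $\lambda \mathbf{z}$ and letting $\lambda \to \infty$ forces $\mathbf{y}^{\mathsf{T}}\mathbf{z} \geq 0$ for every $\mathbf{z} \in C$. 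Specialising to the columns of $\mathbf{A}$ (that is, taking $\mathbf{z} = \mathbf{A}\mathbf{e}_{i}$) gives $\mathbf{A}^{\mathsf{T}}\mathbf{y} \geq 0$ entrywise, while $\mathbf{b}^{\mathsf{T}}\mathbf{y} < \alpha \leq 0$ supplies the strict inequality. This yields alternative (2).

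The main obstacle, and the step I would dwell on, is verifying that the finitely generated cone $C$ is closed; without this the strict separation is unavailable and the argument collapses to a non-strict version that cannot conclude $\mathbf{b}^{\mathsf{T}}\mathbf{y} < 0$. My approach would be a Carath\'eodory-type reduction: every element of $C$ can be written as a non-negative combination of a linearly independent subset of the columns of $\mathbf{A}$. Hence $C = \bigcup_{I} C_{I}$, a finite union over linearly independent column-index sets $I$, where each $C_{I}$ is the image of $\mathbb{R}^{|I|}_{\geq 0}$ under an injective linear map into $\mathbb{R}^{p}$. Such a map is a homeomorphism onto its (closed) range and therefore carries the closed set $\mathbb{R}^{|I|}_{\geq 0}$ to a closed subset of $\mathbb{R}^{p}$. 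A finite union of closed sets is closed, so $C$ is closed, completing the argument.
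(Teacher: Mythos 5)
Your proof is correct, but there is nothing in the paper to compare it against: the paper states Farkas' Lemma as an imported classical result, citing Gale--Kuhn--Tucker, and gives no proof of it, using it only as a black box in the proof of Theorem~1. Your argument is the standard self-contained one: mutual exclusion by the computation $0 \leq \mathbf{y}^{\mathsf{T}}\mathbf{A}\mathbf{x} = \mathbf{y}^{\mathsf{T}}\mathbf{b} < 0$, and existence via strict separation of $\mathbf{b}$ from the finitely generated cone $C$. You correctly identify the one genuinely nontrivial step --- that $C$ is closed, without which only non-strict separation is available and the conclusion $\mathbf{b}^{\mathsf{T}}\mathbf{y}<0$ cannot be extracted --- and your Carath\'eodory-type reduction to images of orthants under injective linear maps handles it properly (an injective linear map on a finite-dimensional space is proper, hence closed, and a finite union of closed sets is closed). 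The only step left implicit is the Carath\'eodory reduction itself (perturbing along a linear dependency among the actively used columns to drive one coefficient to zero), which is standard and easily supplied. In short: the paper delegates this lemma to the literature; you have supplied a complete and correct proof of it.
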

Here the notation $ \mathbf {x} \geq 0 $  means that all components of the vector $ \mathbf {x} $ are nonnegative.

\begin{theorem}
   Given a PDES $G = \{ X, x_{0}, \Sigma, \delta, \rho  \}$ with $\sigma_{i} \in \Sigma_{c}$ for $i \in [1,m]$, $\sigma_{i} \in \Sigma_{uc}$ for $i \in [m+1,n]$, and the observable events set $\Sigma_{o}$,
   and  $L \subseteq L_{G}$.
  There exists a probabilistic P-supervisor $S_{p}$ synthesizing the controlled probabilistic language $L$,
   if and only if there exists a scaling-factor function $K$ synthesizing the controlled probabilistic language $L$.
\end{theorem}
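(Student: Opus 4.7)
The plan is to reduce the theorem to the solvability, at every observation $t$, of the system in Equation (22) subject to $p_j^t \geq 0$ and $\sum_j p_j^t = 1$. The direction $(\Rightarrow)$ is immediate: given $S_p$, I would define $K(t)(\sigma_i) := \sum_{j=0}^{2^m-1} p_j^t \overline{IN}(i,j)$ for every $t$ and every $i$. Equations (15) and (16) already show that the resulting $K$ lies in $[0,1]^{\Sigma}$ and equals $1$ on uncontrollable coordinates, so it is a legitimate scaling-factor function, and a one-line comparison of Definitions 8 and 10 yields $L_{S_p/G} = L_{K/G}$. The substantive work is the direction $(\Leftarrow)$.

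For $(\Leftarrow)$ I fix an observation $t$ and write $k_i := K(t)(\sigma_i)$. Because $\overline{IN}(i,j)=1$ for every $j$ and every $i \in [m+1,n]$ and $k_i = 1$ on these rows, the uncontrollable part of (22) collapses to the single normalisation $\sum_j p_j^t = 1$. Consequently (22) together with the probability-distribution constraint is equivalent to the linear feasibility problem $\mathbf{A}\mathbf{p} = \mathbf{b}$, $\mathbf{p}\geq 0$, where $\mathbf{A}\in\mathbb{R}^{(m+1)\times 2^m}$ is obtained by stacking $IN$ on top of an all-ones row and $\mathbf{b}=(k_1,\dots,k_m,1)^T$. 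I then invoke Farkas' Lemma: either a nonnegative $\mathbf{p}$ exists (in which case we are done and I assemble $S_p$ observation by observation), or there is a certificate $\mathbf{y}=(y_1,\dots,y_m,y_{m+1})^T$ with $\mathbf{A}^T \mathbf{y}\geq 0$ and $\mathbf{b}^T \mathbf{y}<0$. The goal is to rule out the second alternative.

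The key combinatorial observation is that the columns of $IN$ enumerate the indicator vectors of \emph{all} subsets of $\{1,\dots,m\}$, so $\mathbf{A}^T \mathbf{y}\geq 0$ is exactly the family of inequalities $\sum_{i\in S} y_i + y_{m+1} \geq 0$ for every $S\subseteq\{1,\dots,m\}$. Choosing $S^\star := \{i : y_i < 0\}$ gives the sharpest bound $y_{m+1} \geq -\sum_{i\in S^\star} y_i$. On the other hand, splitting $\sum_{i=1}^m k_i y_i$ according to the sign of $y_i$ and using $0\leq k_i\leq 1$ yields $k_i y_i \geq y_i$ when $y_i<0$ and $k_i y_i \geq 0$ when $y_i\geq 0$, whence $\sum_i k_i y_i \geq \sum_{i\in S^\star} y_i$. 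Adding $y_{m+1}$ to both sides produces $\mathbf{b}^T\mathbf{y} = \sum_i k_i y_i + y_{m+1} \geq 0$, contradicting $\mathbf{b}^T\mathbf{y}<0$. Hence the first Farkas alternative holds and a probability distribution $p^t$ realising $K(t)$ exists.

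The main obstacle I anticipate is the tightness argument in the previous paragraph: recognising that enumerating the columns of $IN$ over all $0/1$-vectors is precisely what allows one to pick the adversarial subset $S^\star$ at which the Farkas dual inequality bites exactly where the bounds $0\leq k_i \leq 1$ are needed. Once the distributions $\{p^t : t\in P(L_G)\}$ have been produced in this way, the supervisor $S_p$ assembled from them satisfies (22) at every observation, so $\xi$ from Definition 8 coincides with $\rho(x,\sigma_i)\cdot K(t)(\sigma_i)$ from Definition 10, and a straightforward induction on the length of $s$ using the recursions (18) and (21) completes $L_{S_p/G} = L_{K/G} = L$.
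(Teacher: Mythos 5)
Your proposal is correct and follows essentially the same route as the paper: the forward direction is the direct substitution via Equation (22), and the backward direction reduces to the feasibility of the system formed by $IN$ together with the all-ones normalisation row (the paper's $\widehat{IN}$), settled by Farkas' Lemma. Your subset $S^\star=\{i: y_i<0\}$ argument is just a cleaner restatement of the paper's $MaxSumSub(\{-y_j\}_{j=1}^m)$ bound, so the two proofs coincide in substance.
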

\begin{proof}
  For necessity,
  it is sufficient to show that given a probabilistic distribution $p_{j}^{t}, j \in [0, 2^m -1]$,
  the Equation (22) has  solutions for $ K(t)(\sigma_{i})$, $i \in [1,n]$, satisfying $ 0 \leq  K(t)(\sigma_{i}) \leq 1$ for $i \in [1, m]$, and $K(t)(\sigma_{i}) = 1$ for $i \in [m+1, n]$.
   It is obvious.

  \par
  For sufficiency, it is sufficient to show that given $K(t)(\sigma_{i})$, $i \in [1,n]$, satisfying $ 0 \leq  K(t)(\sigma_{i}) \leq 1$ for $i \in [1, m]$, and $K(t)(\sigma_{i}) = 1$ for $i \in [m+1, n]$,
  the Equation (22) has nonnegative solutions  for $p_{j}^{t}, j \in [0, 2^m -1]$.

   \par
   We construct an $(m + 1) \times 2^{m}$  matrix $\widehat{IN}$ by choosing the first $(m+1)$ rows of the matrix $\overline{IN}$.
   Then the Equation (22) has nonnegative solutions for $p_{j}^{t}, j \in [0, 2^m -1]$,
   if and only if given $K(t)(\sigma_{i})$, $i \in [1,m + 1]$, satisfying $ 0 \leq  K(t)(\sigma_{i}) \leq 1$ for $i \in [1, m]$ and $K(t)(\sigma_{m + 1}) = 1$,
   \begin{equation}
          K(t)(\sigma_i) = \sum_{j=0 }^{2^m -1} \{ p_{j}^{t} * \widehat{IN}(i,j)\}, i \in [1,m+1],
    \end{equation}
  has nonnegative solutions for $p_{j}^{t}, j \in [0, 2^m -1]$.

  \par
  Let $\widehat{K}(t) = ( K(t)(\sigma_1) \quad \ldots \quad  K(t)(\sigma_{m+1}) )  ^{\mathsf {T}} $.
  We could rewrite the Equation (23) as the following matrix form.
  \begin{equation}
                \widehat{IN} \times S_{p} = \widehat{K}(t).
  \end{equation}
     \par
Then by means of Lemma 1 (Farkas's Lemma), to prove Equation (23) has nonnegative solutions for $p_{j}^{t}, j \in [0, 2^m -1]$,
it is sufficient to show that
 \begin{equation}
    [\widehat{IN} ^{\mathsf {T}} \times \mathbf {y} \geq 0 ]  \wedge    [\widehat{K}(t) ^{\mathsf {T}} \times \mathbf {y} < 0],
 \end{equation}
 has no solution for $\mathbf {y} \in \mathbb{R} ^{m+1}$.

 \par
 Actually, $\widehat{IN} ^{\mathsf{T}} = ( IN^{\mathsf {T}} \quad  \textbf{1} )$, where $\textbf{1}$ is a $2^{m}$ rows vector with all components being 1.
 Suppose $\mathbf {y} = (y_{1} \quad \ldots $ $\quad y_{m+1})^{\mathsf{T}}$.
 Then $ \widehat{IN} ^{\mathsf{T}} \times \mathbf{y} \geq 0$ implies that
 \begin{multline}
  IN^{\mathsf {T}} \times   (y_{1}   \quad \ldots \quad y_{m-1} \quad y_{m})^{\mathsf{T}} +
                            (y_{m+1} \quad \ldots \quad y_{m+1} \quad y_{m+1})^{\mathsf{T}}
                                            \geq 0.
 \end{multline}
According to the definition of $IN$,
$IN^{\mathsf {T}} \times   (y_{1} \quad \ldots \quad y_{m-1} $ $ \quad y_{m})^{\mathsf{T}}$ is a $2^{m}$ rows vector,
 in which the components enumerate all the sums of the subsets of $\{ y_{j}\}_{j=1}^{m}$.
 We denote the maximum value of all the sums of the subsets of set $B$ as $MaxSumSub(B)$.
 That is, $MaxSumSub(B) = \max \Big \{ \sum_{ a \in A }  a   | A \subseteq B \Big \}$.
 Then Equation (26) means that
 \begin{equation}
  y_{m+1} \geq MaxSumSub ( \{ -y_{j} \}_{j=1}^{m} ).
  \end{equation}

\par
On the other hand, since $K(t)(\sigma_{m+1}) = 1$, $\widehat{K}(t) ^{\mathsf {T}} \times \mathbf {y} < 0$ means that
\begin{equation}
  y_{m+1} < \sum_{i=1}^{m} \{ K(t)(\sigma_{i}) * (-y_{i}) \}.
\end{equation}
Let $p_{i} = 1$, if $(-y_{i}) \geq 0$; otherwise $p_{i} = 0$, $i \in [1, m]$. Then we have
\begin{align}
  y_{m+1} &<     \sum_{i=1}^{m} \{ K(t)(\sigma_{i}) * (-y_{i}) \} \nonumber \\
          &\leq  \sum_{i=1}^{m} \{ p_{i} * (-y_{i}) \}  \text{ (according to }  0 \leq K(t)(\sigma_{i}) \leq 1) \nonumber \\
          &\leq   MaxSumSub ( \{ -y_{j} \}_{j=1}^{m} ),
\end{align}
 which contradicts Equation (27). Thus, Equation (25) has no solution for $\mathbf {y} \in \mathbb{R} ^{m+1}$.
     Hence, Equation (23) has nonnegative solutions for $p_{j}^{t}, j \in [0, 2^m -1]$.
     As a result, Equation (22) also has nonnegative solutions for $p_{j}^{t}, j \in [0, 2^m -1]$.
    This completes the proof of the sufficiency.
\end{proof}

\begin{remark}
  Theorem 1 demonstrates that the scaling-factor function $K: P(L_G) \rightarrow [0,1]^{\Sigma} $
   is exactly a compact form of the probabilistic P-supervisor $S_{p}: P(L_G) \rightarrow [0,1]^{\Theta}$ actually.
  For the simplicity in the supervisory control of PDESs, we could compute the scaling-factor function $K$ first,
  and then obtain the probabilistic P-supervisor $S_{p}$ by solving the Equation (22) when it is needed.
\end{remark}

\section{Probabilistic Supervisory Control Theory of PDESs}
In this section, we first present the notions of the probabilistic controllability and observability,
 and then show that these notions serve as the necessary and sufficient conditions for the existence of the probabilistic P-supervisors.
 Moreover, we would present two polynomial algorithms to verify the probabilistic controllability and observability.

\subsection{Probabilistic Controllability and Observability Theorem}

\begin{definition}
  Given a plant $G = \{ X, x_{0}, \Sigma, \delta, \rho  \}$ with the controllable events set $\Sigma_{c}$, and the probabilistic specification
  $L$, such that $L \subseteq  L_{G}$. Suppose $L$ is generated by the probabilistic automaton $H = \{ Q, q_{0}, \Sigma, \delta_{H}, \rho_{H}  \}$,
  that is, $L = L_{H}$.
   The specification $L$ and its generator $H$ are said to be probabilistic controllable w.r.t. $G$ and $\Sigma_{c}$,
   if  $\forall s \in supp(L_{H})$ such that $\delta( x_0 , s ) = x $ and $\delta_{H}( q_0 , s ) = q $, and $\forall \sigma \in \Sigma_{uc}$,
        \begin{equation}
           \rho_{H}(q,\sigma) = \rho(x,\sigma).
    \end{equation}
\end{definition}

\par
The notion of the \emph{probabilistic controllability} characterizes an important principle of the supervisory control theory that the supervisory control cannot limit the uncontrollable behaviors of the plant.

\par

\begin{definition}
   Given a plant $G = \{ X, x_{0}, \Sigma, \delta, \rho  \}$
   with the controllable events set $\Sigma_{c}$, the observable events set $\Sigma_{o}$,
   and the probabilistic specification $L$, such that $L \subseteq  L_{G}$.
   Suppose $L$ is generated by the probabilistic automaton $H = \{ Q, q_{0}, \Sigma, \delta_{H}, \rho_{H}  \}$,
    that is, $L = L_{H}$.
   The specification $L$ and its generator $H$
    are said to be probabilistic observable w.r.t. $G$, $\Sigma_{c}$ and $\Sigma_{o}$,
     if $\forall s_{1},s_{2} \in supp(L_{H})$, such that $P(s_{1}) = P(s_{2})$, and
    $\delta( x_0 , s_{i} ) = x_{i} $ and $\delta_{H}( q_0 , s_{i} ) = q_{i} $, $i = \{1 , 2\}$,
    and $\forall \sigma \in \Sigma_{c}$,
    \begin{equation}
             \rho(x_{1},\sigma) * \rho_{H}(q_{2},\sigma)  = \rho(x_{2},\sigma) * \rho_{H}(q_{1},\sigma).
    \end{equation}
\end{definition}

\par
If $\rho(x_{1},\sigma) \neq 0$ and $\rho(x_{2},\sigma) \neq 0$, Equation (31) could be rewritten as $\frac{\rho_{H}(q_{1},\sigma)}{\rho(x_{1},\sigma)} = \frac{\rho_{H}(q_{2},\sigma)}{\rho(x_{2},\sigma)}$,
which  characterizes another principle of the supervisory control theory that
if the supervisor cannot differentiate between two states, then these states should require the same control action.
\par

\begin{remark}
  In order to reflect the intuitive meanings of  the \emph{probabilistic controllability} and \emph{probabilistic observability} more clearly,
  their definitions are presented by the automata form.
  According to the definitions, if the generator $H$ is probabilistic controllable (observable), then any its language-equivalent generator $H^{*}$ is also
  probabilistic controllable (observable).
\end{remark}

 \par
 It should be pointed out that
  the notions of the \emph{probabilistic controllability} and \emph{probabilistic observability} introduced here are the extensions of the notions of the \emph{controllability \cite{first-supevisory}} and \emph{observability \cite{observability},\cite{observability2}}, respectively.
  The relations between these notions will be discussed in Section \uppercase\expandafter{\romannumeral5}.

\par
The following theorem demonstrates that the probabilistic controllability and observability are the necessary and sufficient conditions for
the existence of the probabilistic P-supervisors.

\par

\begin{theorem}(\textbf{Probabilistic Controllability and Observability Theorem})
 Given a plant $G = \{ X, x_{0}, \Sigma, \delta, \rho  \}$ with the controllable events set $\Sigma_{c}$ and
     the observable events set $\Sigma_{o}$,
     and the probabilistic specification $L$, such that $L \subseteq  L_{G}$.
     There exists a probabilistic P-supervisor $S_{p}:P(L_{G}) \rightarrow [0,1]^{\Theta} $, such that  $L_{S_{p}/G} = L$,
     if and only if the specification $L$ is probabilistic controllable w.r.t. $G$ and $\Sigma_{c}$,
      and probabilistic observable w.r.t. $G$, $\Sigma_{c}$ and $\Sigma_{o}$.
\end{theorem}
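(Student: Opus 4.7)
The plan is to leverage Theorem~1 to replace ``probabilistic P-supervisor'' by ``scaling-factor function'': by Theorem~1, there exists $S_p$ with $L_{S_p/G} = L$ if and only if there exists a scaling-factor function $K$ with $L_{K/G} = L$. This reduction is convenient because $K$ operates directly on individual events via Equation~(21), making it easy to read off constraints on transition probabilities one event at a time.

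For the necessity direction, I would assume $K$ satisfies $L_{K/G} = L = L_H$. For any $s \in supp(L_H)$ with $\delta(x_0,s) = x$ and $\delta_H(q_0,s) = q$, comparing the recursion defining $L_{K/G}(s\sigma)$ with the recursion defining $L_H(s\sigma)$ gives $\rho_H(q, \sigma) = \rho(x, \sigma) \cdot K(P(s))(\sigma)$. Since $K(t)(\sigma) = 1$ for $\sigma \in \Sigma_{uc}$ by definition, this immediately collapses to $\rho_H(q, \sigma) = \rho(x, \sigma)$, which is probabilistic controllability. For $\sigma \in \Sigma_c$ and two strings $s_1, s_2 \in supp(L_H)$ with $P(s_1) = P(s_2) = t$, applying the identity to each $s_i$ gives $\rho_H(q_i, \sigma) = \rho(x_i, \sigma) \cdot K(t)(\sigma)$; multiplying crosswise to eliminate the common factor $K(t)(\sigma)$ yields exactly Equation~(31), which is probabilistic observability.

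For the sufficiency direction, I would explicitly construct $K$ from $L$ and $G$. Set $K(t)(\sigma) = 1$ for every $\sigma \in \Sigma_{uc}$. For $\sigma \in \Sigma_c$ and $t \in P(supp(L_H))$, pick any $s \in supp(L_H)$ with $P(s) = t$ for which the reached plant state $x$ satisfies $\rho(x, \sigma) > 0$, and let $K(t)(\sigma) = \rho_H(q, \sigma)/\rho(x, \sigma)$; if no such $s$ exists, set $K(t)(\sigma)$ to any value in $[0,1]$, as it will never actually be used. For $t \notin P(supp(L_H))$, the value of $K(t)$ is irrelevant since $L_{K/G}(s)$ will already equal zero by the time that observation is reached. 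Then I would verify $L_{K/G} = L_H$ by induction on string length: the inductive step splits into the uncontrollable case (handled by controllability) and the controllable case (handled by the definition of $K$ combined with $L_H(s\sigma) = L_H(s)\cdot\rho_H(q,\sigma)$), while strings that leave $supp(L_H)$ through a controllable event are absorbed because the scaling factor for that departing event is zero (as $\rho_H(q,\sigma) = 0$ there).

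The main obstacle is the well-definedness of $K(t)(\sigma)$ for $\sigma \in \Sigma_c$: different witnesses $s_1, s_2$ mapping to the same observation $t$ must produce the same ratio $\rho_H(q,\sigma)/\rho(x,\sigma)$. This is precisely what probabilistic observability guarantees, since dividing Equation~(31) by $\rho(x_1,\sigma)\rho(x_2,\sigma)$ (when both are positive) makes the two ratios equal; the degenerate cases where one of the plant probabilities vanishes are reconciled by $L_H \subseteq L_G$, which forces $\rho_H(q,\sigma) \leq \rho(x,\sigma)$ and thus also certifies $K(t)(\sigma) \in [0,1]$. Once well-definedness and the range condition are secured, the remaining induction is routine; the only delicate bookkeeping is distinguishing between strings that stay inside $supp(L_H)$, strings that exit it through a controllable event, and observations unreached by any string in $supp(L_H)$.
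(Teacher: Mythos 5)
Your proposal is correct and follows essentially the same route as the paper: reduce to the scaling-factor function via Theorem~1, obtain necessity by reading $\rho_H(q,\sigma)=\rho(x,\sigma)\cdot K(P(s))(\sigma)$ off the two recursions, and obtain sufficiency by defining $K(t)(\sigma)$ as the ratio $\rho_H(q,\sigma)/\rho(x,\sigma)$ (well-defined by probabilistic observability, equal to $1$ on $\Sigma_{uc}$ by controllability) followed by induction on string length with the same three cases. Your explicit check that $K(t)(\sigma)\in[0,1]$ via $L\subseteq L_G$ is a small detail the paper leaves implicit.
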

\begin{proof}
By means of Theorem 1, it is sufficient to prove the following claim.

\par
\emph{
There exists a scaling-factor function $K$, such that  $L_{K/G} = L$,
     if and only if $L$ is probabilistic controllable w.r.t. $G$ and $\Sigma_{c}$,
      and probabilistic observable w.r.t. $G$, $\Sigma_{c}$ and $\Sigma_{o}$.
      }

\par
In what follows, we would prove the correctness this claim.

\par
Suppose $L$ is generated by the probabilistic automaton $H = \{ Q, q_{0}, \Sigma, \delta_{H}, \rho_{H}  \}$, that is, $L = L_{H}$.

\par
For necessity, if there exists a scaling-factor function $K$, such that $L_{K/G} = L_{H}$,
     then we need to prove that the specification $L_{H}$ is probabilistic controllable and observable.
     \par
     First of all, we show that $L_{H}$ is probabilistic controllable.
     For any $s \in supp(L_{H})$, suppose $\delta(x_0,s) = x$ and $\delta_{H}(q_0,s) = q$, and $P(s) = t$. Then
     \begin{align}
       \rho_{H}(q,\sigma) = & L_{H}(s\sigma)/L_{H}(s) \text{  (by the definition of }      \text{ probabilistic languages) }           \nonumber    \\
                          = & L_{K/G}(s\sigma)/L_{K/G}(s)    \text{  (by  $L_{K/G} = L_{H}$) } \nonumber    \\
                          = & \rho(x, \sigma) * K(t)(\sigma) \text{  (by  Equation (21)). }
     \end{align}

     \par
     Since $K(t)(\sigma) = 1$ holds for $\forall \sigma \in \Sigma_{uc}$,
      Thus $\rho_{H}(q,\sigma) = \rho(x, \sigma)$ for  $\forall  \sigma \in \Sigma_{uc}$.
       Therefore, $L_{H}$ is probabilistic controllable.

       \par
       Secondly, we show that $L_{H}$ is probabilistic observable. For $\forall s_{1}, s_{2} \in supp(L_{H})$ and $\forall \sigma \in \Sigma_{c}$,
       suppose $\delta(x_0,s_{i}) = x_{i}$ and $\delta_{H}(q_0,s_{i}) = q_{i}$ and $P(s_{i}) = t$,
       $i = \{1, 2\}$.  Then similar to Equation (32), we obtain
       \begin{subequations}
              \begin{gather}
                          \begin{align}
                                     \rho_{H}(q_{1},\sigma) &= \rho(x_{1}, \sigma) * K(t)(\sigma), \\
                                     \rho_{H}(q_{2},\sigma) &= \rho(x_{2}, \sigma) * K(t)(\sigma).
                          \end{align}
              \end{gather}
      \end{subequations}
      We have
             \begin{subequations}
              \begin{gather}
                          \begin{align}
                                     \rho(x_{2}, \sigma) * \rho_{H}(q_{1},\sigma) &= \rho(x_{2}, \sigma) * \rho(x_{1}, \sigma) * K(t)(\sigma), \\
                                     \rho(x_{1}, \sigma) * \rho_{H}(q_{2},\sigma) &= \rho(x_{1}, \sigma) * \rho(x_{2}, \sigma) * K(t)(\sigma).
                          \end{align}
              \end{gather}
      \end{subequations}
     Hence, $\forall s_{1}, s_{2} \in supp(L_{H})$, $P(s_{1}) = P(s_{2})$, and $\forall \sigma \in \Sigma_{c}$, we have
     \begin{equation}
       \rho(x_{1}, \sigma) * \rho_{H}(q_{2},\sigma) = \rho(x_{2}, \sigma) * \rho_{H}(q_{1},\sigma).
     \end{equation}
     Therefore, $L_{H}$ is probabilistic observable. This completes the proof of the necessity.

     \par
     For sufficiency, we need to show if $L_{H}$ is probabilistic controllable and observable, then there exists  a scaling-factor function $K$ such that $L_{K/G} = L_{H}$.

     \par
     For $\forall t \in P(supp(L_{G}))$ and $\forall \sigma \in \Sigma$, define the scaling-factor function $K(t)$ as follows.
      \begin{align}
          K(t)(\sigma) =
              \begin{cases}
               1,           & \text{ if        }  \sigma \in \Sigma_{uc};\\
               \frac{\rho_{H}( \delta_{H}(q_0,s) ,\sigma)}{\rho( \delta(x_0,s),\sigma)},   & \text{ else if } \exists s \in supp(L_{G}), \text{ such that } \\
               & P(s) = t, \delta(x_0,s\sigma)! \text{ and } \delta_{H}(q_0,s)!; \\
               0,           & \text{ else. }
              \end{cases}
      \end{align}

      \par
      First of all, we need to show the scaling-factor function $K$ defined in Equation (36) is well-defined.
      It is sufficient to show  $\forall s_{i} \in supp(L_{G})$ with $P(s_{i})= t$, $\delta(x_0,s_{i}\sigma)!$  and $\delta_{H}(q_0,s_{i})! $, $i \in \{1,2\}$, and $\forall \sigma \in \Sigma_{c}$,
      \begin{equation}
            \frac{\rho_{H}( \delta_{H}(q_0,s_1) ,\sigma)}{\rho( \delta(x_0,s_1),\sigma)} = \frac{\rho_{H}( \delta_{H}(q_0,s_2) ,\sigma)}{\rho( \delta(x_0,s_2),\sigma)}.
      \end{equation}
      By the probabilistic  observability of $L_{H}$, we immediately obtain Equation (37).

      \par
      Secondly, we would show that with the scaling-factor function $K$ defined in Equation (36), $L_{K/G} = L_{H}$.
      The proof is by induction on the length of the string $s \in \Sigma^{*}$.

      \par
      The base case is for $|s| = 0$.  $L_{K/G}(\epsilon) = L_{H}(\epsilon) = 1$. Hence, the base case holds.

      \par
      Suppose for $|s| \leq n$, $L_{K/G}(s) = L_{H}(s) $ holds. Then we need to show $\forall \sigma \in \Sigma$, $L_{K/G}(s\sigma) = L_{H}(s\sigma)$.
      By Equation (21) and $L_{K/G}(s) = L_{H}(s)$, we have
      \begin{equation}
            L_{K/G}(s\sigma)  = L_{H}(s) *   \rho( \delta(x_{0}, s)  ,\sigma) * K(P(s))(\sigma).
      \end{equation}

      \par
      We prove $L_{K/G}(s\sigma) = L_{H}(s\sigma)$ by dividing into the following three cases.
      \begin{enumerate}
            \item
                If $\sigma \in \Sigma_{uc}$, then $K(P(s))(\sigma) = 1$.
                According to the probabilistic controllability of $L_{H}$, we obtain $\rho( \delta(x_{0}, s)  ,\sigma)  = \rho_{H}( \delta_{H}(q_{0}, s)  ,\sigma)$.
                Thus,
                \begin{align*}
                L_{K/G}(s\sigma) & = L_{H}(s) *  \rho( \delta(x_{0}, s)  ,\sigma) * K(P(s))(\sigma)\\
                                 & = L_{H}(s) * \rho_{H}( \delta_{H}(q_{0}, s)  ,\sigma) * 1 \\
                                 & = L_{H}(s\sigma).
                \end{align*}

            \item
                If $\sigma \in \Sigma_{c}$, and $\delta(x_0,s\sigma)!$  and  $\delta_{H}(q_0,s)!$,
                \begin{align*}
                   L_{K/G}(s\sigma) & = L_{H}(s) *   \rho( \delta(x_{0}, s)  ,\sigma) * K(P(s))(\sigma) \\
                                                & = L_{H}(s) *   \rho( \delta(x_{0}, s)  ,\sigma) * \frac{\rho_{H}( \delta_{H}(q_0,s) ,\sigma)}{\rho( \delta(x_0,s),\sigma)} \\
                                                & = L_{H}(s) * \rho_{H}( \delta_{H}(q_{0}, s)  ,\sigma) \\
                                                & = L_{H}(s\sigma).
                \end{align*}

            \item
                If $\sigma \in \Sigma_{c}$, and $\delta(x_0,s\sigma)$\sout{!} or  $\delta_{H}(q_0,s)$\sout{!},
                then $\delta_{H}(q_0,s\sigma)$\sout{!}.
                Hence, $L_{H}(s\sigma) = 0$.
                On the other hand, $ L_{K/G}(s\sigma) = L_{H}(s) *   \rho( \delta(x_{0}, s)  ,\sigma) * K(P(s))(\sigma)  = 0$.
                 Therefore, $ L_{K/G}(s\sigma) = L_{H}(s\sigma)$.
      \end{enumerate}

      \par
       This completes the proof of the sufficiency.
\end{proof}
\begin{remark}
  Theorem 2 not only demonstrates that the probabilistic controllability and observability are the necessary and
  sufficient conditions for the existence of the probabilistic P-supervisors,
  but also provides the design method of the probabilistic P-supervisor.
  Since Equation (36) formulates a scaling-factor function that can synthesize the desired specification,
  we can obtain the probabilistic P-supervisor by solving Equation (22).
\end{remark}

\par
The following simple example illustrates how to verify the probabilistic controllability and observability by definitions,
and how to use the probabilistic supervisor to achieve the desired probabilistic specification.

\begin{figure}[htp]
  \centering
  \includegraphics[width=0.35\textwidth]{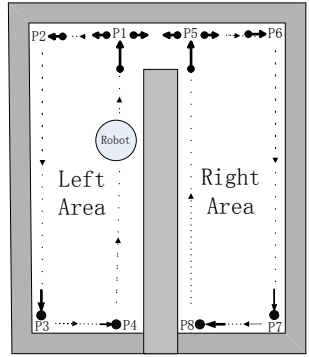}
  \caption{mobile robot with sensors}
  \label{fig1}
\end{figure}

\begin{figure}[htbp]
\centering
\begin{tabular}{C{.4\textwidth}C{.4\textwidth}}
\resizebox{0.28\textwidth}{!}{%
              \begin{tikzpicture}[scale=0.25,->,>=stealth',shorten >=1pt,auto,node distance=3.8cm, semithick,
                              every state/.style={fill=blue!10,thick,font=\LARGE}]

                            \node[state] (x0) {$x_0$};
                            \node[state] [above    of = x0] (x1) {$x_1$};
                            \node[state] [below left   of = x0] (x2) {$x_2$};
                            \node[state] [below right  of = x0] (x3) {$x_3$};

                            \draw[every node/.style={font=\large}][align=center]
                                   (x0) edge  [bend left] node{$\sigma_{3}/0.25$} (x1)
                                   (x0) edge   node{$\sigma_{4}/0.375$} (x2)
                                   (x0) edge  [bend left] node{$\sigma_{5}/0.375$} (x3)
                                   (x1) edge  [bend left] node {$\sigma_{1}/0.5; $ \\ $ \sigma_{2}/0.5$} (x0)
                                   (x2) edge  [bend left] node {$\sigma_{2}/1 $} (x0)
                                   (x3) edge   node {$\sigma_{1}/1 $} (x0);
                            \end{tikzpicture}
}
&
\resizebox{0.28\textwidth}{!}{%
              \begin{tikzpicture}[scale=0.25,->,>=stealth',shorten >=1pt,auto,node distance=3.8cm, semithick,
                              every state/.style={fill=blue!10,thick,font=\LARGE}]

                            \node[state] (q0) {$q_0$};
                            \node[state] [above    of = x0] (q1) {$q_1$};
                            \node[state] [below left   of = q0] (q2) {$q_2$};
                            \node[state] [below right  of = q0] (q3) {$q_3$};

                            \draw[every node/.style={font=\large}][align=center]
                                   (q0) edge  [bend left] node{$\sigma_{3}/0.25$} (q1)
                                   (q0) edge   node{$\sigma_{4}/0.375$} (q2)
                                   (q0) edge  [bend left] node{$\sigma_{5}/0.375$} (q3)
                                   (q1) edge  [bend left] node {$\sigma_{1}/0.4; $ \\ $ \sigma_{2}/0.5$} (q0)
                                   (q2) edge  [bend left] node {$\sigma_{2}/1 $} (q0)
                                   (q3) edge   node {$\sigma_{1}/1 $} (q0);
                            \end{tikzpicture}
}
\end{tabular}
\caption{(a). $G$ in Example 3; (b). $H$ in Example 3. }
\end{figure}
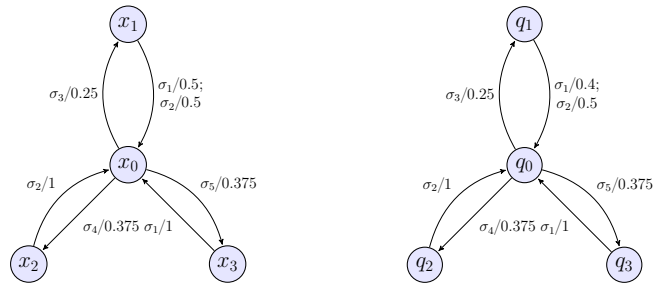

\par
\begin{example}
The example is obtained by slightly altering the example in \cite{Swarm}.
Consider a robot equipped with sensors that can detect obstacles.
Assume the robot starts moving upwards in the left area, as shown in Fig. 2.
We model the robot as the PDES  $G = \{ X, x_{0}, \Sigma, \delta, \rho  \}$,   $\Sigma = \{ \sigma_{i} \}$, $i \in [1,5]$, as shown in Fig. 3 - (a).
    Suppose $\Sigma_{c} = \Sigma_{o} =  \{\sigma_{1}, \sigma_{2} \}$.
   The uncontrollable events $\sigma_{3}$, $\sigma_{4}$ and $\sigma_{5}$ denote the sensing of an obstacle ``in front",``in front and on the right" and ``in front and on the left", respectively.
   The controllable events $\sigma_{1}$ and $\sigma_{2}$ represent that the robot choose the actions of ``turning right" and ``turning left", respectively.
   The initial state $x_{0}$ means the robot moves keeping the original direction.
   The state $x_{1}$ means the robot might be at the positions ``P1" or ``P5", in which the alternative actions of ``turning right" (denoted by event ``$\sigma_{1}$")
    and ``turning left" (denoted by event ``$\sigma_{2}$") are randomly chosen.
   In addition, the state $x_{2}$ means the robot might be at the positions ``P2", ``P3" or ``P4", in which the only action can be chosen is  ``turning left".
   Moreover, the state $x_{3}$ means the robot might be at the positions ``P6", ``P7" or ``P8", in which the only action can be chosen is  ``turning right".

\par
Obviously, the uncontrolled robot explores the left area and right area with the same probability.
Suppose the desired ratio of the probabilities of searching the left area and right area is $\frac{5}{4}$.
Then the probabilistic specification could be characterized by $H = \{ Q, q_{0}, \Sigma, \delta_{H}, \rho_{H}  \}$, as shown in Fig. 3 - (b).

\par
The deterministic supervisors cannot realize such a probabilistic specification, as the control pattern $\{\sigma_{1},\sigma_{2},\sigma_{3},\sigma_{4},\sigma_{5}\}$ at state $x_{1}$
will do not change the original probabilities;
and the control patterns $\{ \sigma_{1}, \sigma_{3},\sigma_{4},\sigma_{5} \}$ and $\{\sigma_{2}, \sigma_{3},\sigma_{4},\sigma_{5} \}$ will make the robot only explore at the right area and left area, respectively;
and the control patterns $\{ \sigma_{3},\sigma_{4},\sigma_{5} \}$ will block the plant at state $x_{1}$.
  \par

  However, the probabilistic P-supervisors can be competent this control task, as the specification is probabilistic and only partial events can be observed.

\par
  We discuss the  probabilistic controllability and observability of the specification $H$.
  Firstly, the probabilistic controllability of $H$ obviously holds, as the probabilities of the corresponding transitions with uncontrollable events in $G$ and $H$  are all equal to each other.
  We continue to investigate the probabilistic observability of $H$.
  Note that there exist the events sequences
  $s_{1} = \sigma_{3} $ and $s_{2} = \sigma_{5} $, such that $P(s_{1}) = P(s_{2})$, reaching the states  $q_{1}$ and $q_{3}$, respectively.
  Note that $0.4 = \rho_{H}(q_{1},\sigma_{1}) * \rho(x_{3},\sigma_{1}) \neq \rho_{H}(q_{3}, \sigma_{1}) * \rho(x_{1}, \sigma_{1}) = 0.5$.
  Thus, $H$ is not probabilistic observable.
\par

Suppose the observable events set is revised to $\Sigma_{o} =  \{\sigma_{1}, \sigma_{2}, \sigma_{3} \}$.
Then it can be verified that $H$ is probabilistic observable by the definition.
   We could construct a probabilistic P-supervisor $S_{p}$ such that $L_{S_{p}/G} = L_{H}$ as follows.
  First of all, by Equation (36), the scaling-factor function $K$ could be computed:
  for $t_{1} \in (\sigma_{2} | \sigma_{1}  )^{*}\sigma_{3}$,
         $$K(t_{1}) = ( 0.8 \quad 1 \quad 1 \quad 1 \quad 1)^{\mathsf {T}}, $$
  and for $\forall t_{2} \in P(supp(L(G)))\backslash (\sigma_{2} | \sigma_{1}  )^{*}\sigma_{3} $,
         $$K(t_{2}) = ( 1 \quad 1 \quad 1 \quad 1 \quad 1 )^{\mathsf {T}}. $$
  Secondly, by solving the Equation (21), we could obtain one of the probabilistic P-supervisors as follows:
        $$S_{p}(t_{1}) = (0 \quad 0 \quad 0.2 \quad 0.8)^{\mathsf {T}},$$
        $$S_{p}(t_{2}) =(0 \quad 0 \quad 0 \quad 1 )^{\mathsf {T}}.$$
\end{example}

\subsection{Verification Algorithms of the probabilistic controllability and observability }
In Example 3, we have illustrated how to verify the probabilistic controllability and observability by definitions.
However, it is difficult to do so in a large scale system.
Hence, we would present two polynomial algorithms to verify the probabilistic controllability and observability in this subsection.
 \par
 First, we present a verification algorithm for the probabilistic controllability as follows.
 \par
\begin{algorithm}
 Given a plant $G = \{ X, x_{0}, \Sigma, \delta, \rho  \}$ with the controllable events set $\Sigma_{c}$,
    and the probabilistic specification $L$, such that $L \subseteq  L_{G}$.
    Suppose $L$ is generated by the probabilistic automaton $H = \{ Q, q_{0}, \Sigma, \delta_{H}, \rho_{H}  \}$, that is, $L = L_{H}$.
    \begin{enumerate}
      \item
      Construct the testing automaton $G_{tc}$ for the probabilistic controllability  as follows.
        \begin{equation}
            G_{tc} = \{ (X \times Q) \cup \{ d \}, (x_0, q_0), \Sigma, \delta_{tc} \}.
        \end{equation}
        $(X \times Q) \cup \{ d \}$ is the set of states.
    The (partial) transition function $\delta_{tc} : (X \times Q)  \times \Sigma \rightarrow (X \times Q) \cup \{ d \} $ is defined as follows.
     \begin{align}
          \delta_{tc}((x,q),\sigma) =
              \begin{cases}
                       (\delta(x,\sigma), \delta_{H}(q,\sigma)),           & \text{ if }    c_{1},     \\
                       d,                                                  & \text{ if }    c_{2}.
              \end{cases}
      \end{align}
     \par
     Here the $c_{1}$ denotes the condition:
        $[\delta_{H}(q,\sigma)! \wedge  \delta(x,\sigma)! \wedge \sigma \in \Sigma_{uc} \wedge \rho(x,\sigma) = \rho_{H}(q,\sigma)]
            \vee
        [\delta_{H}(q,\sigma)! \wedge  \delta(x,\sigma)! \wedge  \sigma \in \Sigma_{c} ]$; 
     and $c_{2}$ denotes the condition:
      $[  \sigma \in \Sigma_{uc} \wedge \rho(x,\sigma) \neq \rho_{H}(q,\sigma)]$

      \item
      Check whether or not the state ``$d$" is reachable from the initial state $(x_0,q_0)$.
      If the answer is yes, then $L_{H}$ is not probabilistic controllable;
      otherwise, $L_{H}$ is probabilistic controllable.
    \end{enumerate}

\end{algorithm}

\par
The basic idea of Algorithm 1 is capturing all the violations of the probabilistic controllability by reaching the state ``$d$" of $G_{tc}$.
Note that $|G_{tc}| = ((|X|*|Q|+1)*|\Sigma|)$. As a result, the complexity of Algorithm 1 is $O(|X|*|Q|*|\Sigma|)$.

\par
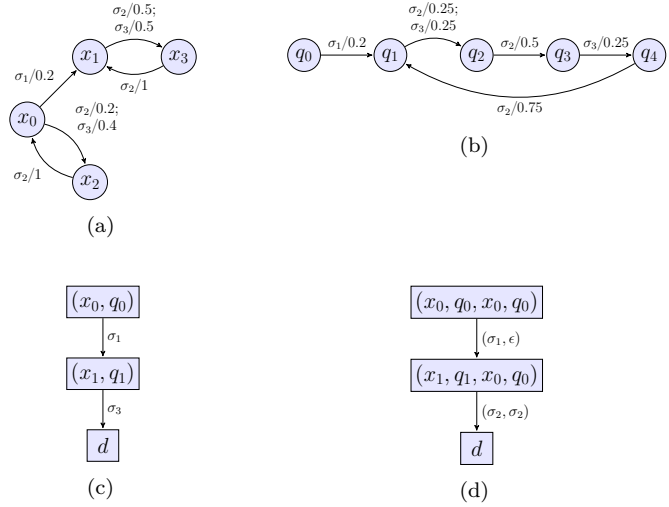
\begin{figure}[htbp]
\centering
\begin{tabular}{C{.25\textwidth}C{.5\textwidth}}

\subfigure [ ] {
    \resizebox{0.21\textwidth}{!}{%
                \centering
              \begin{tikzpicture}[scale=0.25,->,>=stealth',shorten >=1pt,auto,node distance=2.5cm, semithick,
                              every state/.style={fill=blue!10,thick,font=\LARGE}]

                            \node[state] (x0) {$x_0$};
                            \node[state] [above right  of = x0] (x1) {$x_1$};
                            \node[state] [below right of = x0] (x2) {$x_2$};
                            \node[state] [right       of = x1] (x3) {$x_3$};

                            \draw[every node/.style={font=\large}][align=center]
                                   (x0) edge  node{$\sigma_{1}/0.2$} (x1)
                                   (x0) edge [bend left] node {$\sigma_{2}/0.2;$ \\ $ \sigma_{3}/0.4$} (x2)
                                   (x2) edge [bend left] node {$\sigma_{2}/1$} (x0)
                                   (x1) edge [bend left] node{$\sigma_{2}/0.5;$ \\ $ \sigma_{3}/0.5$} (x3)
                                   (x3) edge [bend left] node{$\sigma_{2}/1$} (x1);
                            \end{tikzpicture}
    }
} &
\subfigure [ ] {
 \resizebox{0.42\textwidth}{!}{%
           \centering
              \begin{tikzpicture}[scale=0.25,->,>=stealth',shorten >=1pt,auto,node distance=2.5cm, semithick,
                              every state/.style={fill=blue!10,thick,font=\LARGE}]

                            \node[state] (q0) {$q_0$};
                            \node[state] [right  of = q0] (q1) {$q_1$};
                            \node[state] [right  of = q1] (q2) {$q_2$};
                            \node[state] [right  of = q2] (q3) {$q_3$};
                            \node[state] [right  of = q3] (q4) {$q_4$};

                            \draw[every node/.style={font=\large}][align=center]
                                   (q0) edge  node {$\sigma_{1}/0.2$} (q1)
                                   (q1) edge  [bend left] node {$\sigma_{2}/0.25;$ \\ $\sigma_{3}/0.25$} (q2)
                                   (q2) edge  node {$\sigma_{2}/0.5$} (q3)
                                   (q3) edge  node {$\sigma_{3}/0.25$} (q4)
                                   (q4) edge [bend left] node{$\sigma_{2}/0.75$} (q1);

                            \end{tikzpicture}
    }
 }\\
    \subfigure [ ] {
         \resizebox{0.085\textwidth}{!}{%
                    \centering
                           \begin{tikzpicture}[scale=0.25,->,>=stealth',shorten >=1pt,auto,node distance=2cm, semithick,
                              every state/.style={fill=blue!10,thick,rectangle,font=\LARGE}]

                            \node[state] (x0) {$(x_0, q_0)$};
                            \node[state] [below   of = x0] (x1) {$(x_1, q_1)$};
                            \node[state] [below   of = x1] (d) {$d$};

                            \draw[every node/.style={font=\large}]
                                   (x0) edge  node{$\sigma_{1}$} (x1)
                                   (x1) edge  node{$\sigma_{3}$} (d) ;
                            \end{tikzpicture}
         }
    }\qquad\qquad\qquad\qquad\qquad &
    \subfigure [ ] {
         \resizebox{0.15\textwidth}{!}{%
                    \centering
                           \begin{tikzpicture}[scale=0.25,->,>=stealth',shorten >=1pt,auto,node distance=2cm, semithick,
                              every state/.style={fill=blue!10,thick,rectangle,font=\LARGE}]

                            \node[state] (x0) {$(x_0, q_0,x_0, q_0)$};
                            \node[state] [below   of = x0] (x1) {$(x_1, q_1,x_0, q_0)$};
                            \node[state] [below   of = x1] (d) {$d$};

                            \draw[every node/.style={font=\large}]
                                   (x0) edge  node{$(\sigma_{1}, \epsilon)$} (x1)
                                   (x1) edge  node{$(\sigma_{2}, \sigma_{2})$ } (d) ;
                            \end{tikzpicture}
         }
    }
\end{tabular}
\caption{(a). $G$ in Example 4; (b). $H$ in Example 4; (c). A part of $G_{tc}$ in Example 4; (d). A part of $G_{to}$ in Example 5.}
\end{figure}

\par

\begin{example}
  Given a plant $G = \{ X, x_{0}, \Sigma, \delta, \rho  \}$, $\Sigma = \{ \sigma_{1}, \sigma_{2}, \sigma_{3}\}$, with
  $\Sigma_{c} = \{\sigma_{1}, \sigma_{2}\}$ and $\Sigma_{o} = \{\sigma_{2}, \sigma_{3}\}$,
    and the specification $H = \{ Q, q_{0}, \Sigma, \delta_{H}, \rho_{H}  \}$, as shown in Fig. 3 - (a) and (b), respectively.
According to Algorithm 1, we could construct the testing automaton $G_{tc}$. A part of $G_{tc}$ is shown in Fig. 3 - (c).
Note that the state ``$d$" is reachable from $(x_{0}, q_{0})$.
Hence, the specification $L_{H}$ is not probabilistic controllable.
The reachability of the state ``$d$" in Fig. 3 - (c) explains  the fact that
there exists an $s = \sigma_{1}$ such that $\delta(x_{0},s) = x_{1}$ and $\delta_{H}(q_{0},s) = q_{1}$, and for $\sigma_{3} \in \Sigma_{uc}$, $0.5 = \rho(x_{1},\sigma_{3}) \neq \rho_{H}(q_{1},\sigma_{3}) = 0.25$, which violates the definition of the probabilistic controllability.

\end{example}

\par
The following theorem demonstrates the correctness of Algorithm 1.

\par

 \begin{theorem}
    Given a plant $G = \{ X, x_{0}, \Sigma, \delta, \rho  \}$ with the controllable events set $\Sigma_{c}$,
        and the probabilistic specification $L$, such that $L \subseteq  L_{G}$.
    Suppose $L$ is generated by the probabilistic automaton $H = \{ Q, q_{0}, \Sigma, \delta_{H}, \rho_{H}  \}$, that is, $L = L_{H}$.
    $L$ is not probabilistic controllable if and only if the state ``$d$" of the automaton $G_{tc}$ is reachable from the initial state $(x_0,q_0)$.
 \end{theorem}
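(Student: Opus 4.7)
My plan is to prove the biconditional by constructing, from any reachability path to $d$ in $G_{tc}$, an explicit witness string that violates probabilistic controllability, and conversely turning any such witness into a reachability path in $G_{tc}$. The two directions are essentially mirror images of each other, and both amount to induction on the length of a string read along $G_{tc}$.

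For the ``if'' direction, suppose $d$ is reachable via some path $(x_0,q_0) \xrightarrow{\sigma_1} (x_1,q_1) \xrightarrow{\sigma_2} \cdots \xrightarrow{\sigma_{k-1}} (x_{k-1},q_{k-1}) \xrightarrow{\sigma_k} d$. Each non-terminal transition uses condition $c_1$, which forces $x_i = \delta(x_{i-1},\sigma_i)$ and $q_i = \delta_H(q_{i-1},\sigma_i)$ together with $\rho_H(q_{i-1},\sigma_i) > 0$. A short induction then yields $x_{k-1} = \delta(x_0,s)$, $q_{k-1} = \delta_H(q_0,s)$, and $L_H(s) > 0$, i.e.\ $s \in supp(L_H)$, where $s = \sigma_1 \sigma_2 \cdots \sigma_{k-1}$. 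The final transition invokes $c_2$, so $\sigma_k \in \Sigma_{uc}$ and $\rho(x_{k-1},\sigma_k) \neq \rho_H(q_{k-1},\sigma_k)$, which is exactly a violation of Definition 11. Hence $L$ is not probabilistic controllable.

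For the ``only if'' direction, assume $L$ is not probabilistic controllable, so by Definition 11 there exist $s \in supp(L_H)$, $\sigma \in \Sigma_{uc}$ with $x = \delta(x_0,s)$, $q = \delta_H(q_0,s)$ and $\rho(x,\sigma) \neq \rho_H(q,\sigma)$. I would walk along $s$ in $G_{tc}$ from $(x_0,q_0)$, maintaining the invariant that after reading a prefix $s'$ we are either at $d$ (in which case we are done) or at the pair $(\delta(x_0,s'),\delta_H(q_0,s'))$. To justify one step of the invariant, note that since $supp(L_H) \subseteq supp(L_G)$ and support languages are prefix-closed, both $\delta$ and $\delta_H$ are defined on every prefix; consequently each next event in $s$ satisfies either $c_1$ or $c_2$ at the current pair state, and the step goes through accordingly. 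If we never reach $d$ while processing $s$, then we arrive at $(x,q)$, and the extra event $\sigma$ triggers $c_2$, taking us to $d$.

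The argument is largely bookkeeping, and I expect the main obstacle to be the invariant-maintenance step in the ``only if'' direction, specifically the need to argue that both $\delta(x_0, s')!$ and $\delta_H(q_0, s')!$ hold along every prefix of $s$. This requires invoking $L \subseteq L_G$, the definition of probabilistic sublanguage, and the prefix-closure of supports to conclude $supp(L_H) \subseteq supp(L_G)$; once this is established, everything reduces to checking the disjunctive conditions $c_1$ and $c_2$ case-by-case on whether $\sigma \in \Sigma_c$ or $\sigma \in \Sigma_{uc}$ with matching/mismatching probabilities.
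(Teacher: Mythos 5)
Your proposal is correct and takes essentially the same approach as the paper: the forward direction reads the violating witness string off the path to $d$, and the converse walks $s$ through $G_{tc}$, either reaching $d$ at some earlier prefix (the paper's ``there must exist $s' \in \overline{s}$'' case) or arriving at $(x,q)$ and firing $c_2$ on $\sigma$. Your invariant-maintenance phrasing, together with the explicit appeal to $supp(L_H) \subseteq supp(L_G)$ and prefix-closure, is simply a more careful rendering of the paper's terser argument.
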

 \begin{proof}
    According to the definition of $G_{tc}$, if the state ``$d$" is reachable from the initial state $(x_0,q_0)$,
    then there exists an $s \in supp(L_{H})$, such that $\delta_{H}(q_{0},s) = q$ and $\delta(x_{0},s) = x$, and $\sigma \in \Sigma$,  the following condition holds:
      $$ \sigma \in \Sigma_{uc} \wedge \rho(x,\sigma) \neq \rho_{H}(q,\sigma).$$

    \par
    Therefore, by the definition of the probabilistic controllability, $L_{H}$ is not probabilistic controllable.

    \par
    Conversely, if $L_{H}$ is not probabilistic controllable, then by the definition of the probabilistic controllability,
    there exist $s \in supp(L_{H})$ and $\sigma \in \Sigma_{uc}$,
     such that $\delta( x_0 , s ) = x $ and $\delta_{H}( q_0 , s ) = q $,
    and            $\rho_{H}(q,\sigma) \neq \rho(x,\sigma)$.
    Hence, by the definition of $G_{tc}$, the state ``$d$" is reachable from $(x,q)$.
    If $(x,q)$ can be reachable from $(x_0,q_0)$, then ``$d$" is reachable from $(x_0,q_0)$.
    If $(x,q)$ cannot be reachable from $(x_0,q_0)$, then by the definition of $G_{tc}$, there must exist an $s^{'} \in \overline{s}$ such that
    state ``$d$"  is reachable by $s^{'}$ from $(x_0,q_0)$.
 \end{proof}

\par
In the following, we present a verification algorithm for the probabilistic observability, and then prove
its correctness.

\par

\begin{algorithm}
 Given a plant $G = \{ X, x_{0}, \Sigma, \delta, \rho  \}$ with the controllable events set $\Sigma_{c}$ and the observable events set $\Sigma_{o}$,
    and the specification $H = \{ Q, q_{0}, \Sigma, \delta_{H}, \rho_{H}  \}$, such that $L_{H} \subseteq  L_{G}$.
    \begin{enumerate}
      \item
       Construct the testing automaton $G_{to}$ for the probabilistic observability  as follows.
     \begin{multline}
      G_{to} = \{ (X \times Q \times X \times Q ) \cup \{ d \}, (x_0, q_0,x_0, q_0), \\
                                (\Sigma \cup \epsilon) \times (\Sigma \cup \epsilon), \delta_{to} \}.
        \end{multline}
    Here $(X \times Q \times X \times Q ) \cup \{ d \}$ is the set of states.
     The (partial) transition function $\delta_{to} : (X \times Q \times X \times Q ) \times ((\Sigma \cup \epsilon) \times (\Sigma \cup \epsilon)) \rightarrow (X \times Q \times X \times Q ) \cup \{ d \} $ is defined as follows.
    \begin{itemize}
            \item
                For each $\sigma \in \Sigma$,
                 \begin{align}
                   \delta_{to}((x_{1},q_{1},x_{2},q_{2}),(\sigma,\sigma)) = \qquad\qquad\qquad\qquad\quad & \nonumber \\
                          \begin{cases}
                                   (\delta(x_1,\sigma),\delta_{H}(q_1,\sigma),\delta(x_2,\sigma),\delta_{H}(q_2,\sigma)), &\text{if } c_{1}^{'}, \\
                                   d,                                                  & \text{if }   c_{2}^{'}.
                          \end{cases}
                  \end{align}
             Here the $c_{1}^{'}$ denotes  the condition: $[\delta(x_1,\sigma)!]$ $\wedge$ $[\delta_{H}(q_1,\sigma)!]$ $\wedge$ $[\delta(x_2,\sigma)!]$ $\wedge$ $[\delta_{H}(q_2,\sigma)!]$
             $\wedge$
             $[ ( \rho(x_{1},\sigma) * \rho_{H}(q_{2},\sigma)  = \rho(x_{2},\sigma) * \rho_{H}(q_{1},\sigma) \wedge \sigma \in \Sigma_{c} ) \vee
              \sigma \in \Sigma_{uc} ]$;
             $c_{2}^{'}$ denotes  the condition: $[\sigma \in \Sigma_{c}]$
             $\wedge$ $[\rho(x_{1},\sigma) * \rho_{H}(q_{2},\sigma)  \neq \rho(x_{2},\sigma) * \rho_{H}(q_{1},\sigma)]$.

            \item
                Particularly, if $\sigma \in \Sigma_{uo}$, the additional transitions are defined as follows.
                    \begin{subequations}
                          \begin{gather}
                                      \begin{align}
                                                \delta_{to}((x_{1},q_{1},x_{2},q_{2}),(\sigma,\epsilon)) =
                                                (\delta(x_1,\sigma), \delta_{H}(q_1,\sigma),x_{2}, q_{2}), & \text{ if }    c_{3}^{'},\\
                                                \delta_{to}((x_{1},q_{1},x_{2},q_{2}),(\epsilon,\sigma)) =
                                                (x_1, q_1,\delta(x_2,\sigma), \delta_{H}(q_2,\sigma)),     & \text{ if }    c_{4}^{'}.
                                      \end{align}
                          \end{gather}
                    \end{subequations}
                Here the $c_{3}^{'}$ denotes  the condition: $[\delta(x_1,\sigma)!]$ $\wedge$ $[\delta_{H}(q_1,\sigma)!]$;
                     the $c_{4}^{'}$ denotes the condition:  $[\delta(x_2,\sigma)!]$ $\wedge$ $[\delta_{H}(q_2,\sigma)!]$.
    \end{itemize}

      \item
       Check whether or not the state ``$d$" is reachable from the initial state $(x_0,q_0,x_0,q_0)$.
      If the answer is yes, then $L_{H}$ is not probabilistic observable;
      otherwise, $L_{H}$ is probabilistic observable.

    \end{enumerate}

\end{algorithm}

\par
 Similar to Algorithm 1, the basic idea of Algorithm 2 is
capturing all the violations of the probabilistic observability by reaching the state ``$d$" in automaton $G_{to}$.
Note that $|G_{to}| = ((|X|^{2}*|Q|^{2}+1)*(|\Sigma|+1)^{2}$.
Hence, the complexity of Algorithm 2 is $O(|X|^{2}*|Q|^{2}*|\Sigma|^{2})$.

\par
The following example illustrates how to verify the  probabilistic observability according to Algorithm 2.

\par
\begin{example}
  The plant $G = \{ X, x_{0}, \Sigma, \delta, \rho  \}$, and specification
  $H = \{ Q, q_{0}, \Sigma, \delta_{H}, \rho_{H}  \}$ considered here are the same as those adopted in Example 4.
  \par

  According to Algorithm 2, we could construct the testing automaton $G_{to}$ for probabilistic observable.
  A part of $G_{to}$ is shown in Fig. 3 - (d).
  Note that the state ``$d$" is reachable from $(x_0, q_0, x_0, q_0)$.
  Hence, the specification is not probabilistic observable.
  The reachability of the state ``$d$" in Fig. 3 - (d) explains the fact that
   there exist $s_{1} = \sigma_{1}$ and $s_{2} = \epsilon$, $P(s_1) = P(s_2)$, such that
   $\delta(x_{0},s_1) = x_{1}$,  $\delta_{H}(q_{0},s_1) = q_{1}$,
    and $\delta(x_{0},s_2) = x_{0}$, $\delta_{H}(q_{0},s_2) = q_{0}$,
    and for $\sigma_{2} \in \Sigma_{c}$, $0 = \rho(x_{1},\sigma_{2}) * \rho_{H}(q_{0},\sigma_{2}) \neq \rho(x_{0},\sigma_{2}) * \rho_{H}(q_{1},\sigma_{2}) = 0.05$, which violates the definition of the probabilistic observability.

\end{example}

\par
The following theorem  demonstrates the correctness of Algorithm 2.

\par

 \begin{theorem}
    Given a plant $G = \{ X, x_{0}, \Sigma, \delta, \rho  \}$ with the controllable events set $\Sigma_{c}$ and the observable events set $\Sigma_{o}$,
    and the specification $H = \{ Q, q_{0}, \Sigma, \delta_{H}, \rho_{H}  \}$.
    $L_{H}$ is not probabilistic observable if and only if the state ``$d$" of the automaton $G_{to}$ is reachable from the initial state $(x_0,q_0,x_0,q_0)$.
 \end{theorem}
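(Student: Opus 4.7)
The plan is to prove the equivalence by matching runs of $G_{to}$ against pairs of strings $(s_1,s_2)$ that would witness a violation of probabilistic observability. Before tackling either direction I would record an invariant about any run of $G_{to}$ that avoids the sink ``$d$'': projecting the edge labels onto the first and second coordinates of $(\Sigma\cup\epsilon)\times(\Sigma\cup\epsilon)$ produces two strings $s_1,s_2\in\Sigma^{*}$, and the current $G_{to}$-state equals $(\delta(x_0,s_1),\delta_H(q_0,s_1),\delta(x_0,s_2),\delta_H(q_0,s_2))$. Because the asynchronous edges $(\sigma,\epsilon)$ and $(\epsilon,\sigma)$ are permitted only for $\sigma\in\Sigma_{uo}$, the same invariant also records $P(s_1)=P(s_2)$. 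Both claims follow by a direct induction on the length of the run.

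For the ``if'' direction I would start from any path from $(x_0,q_0,x_0,q_0)$ into ``$d$''. Reading off the edge labels gives $s_1,s_2$ satisfying $P(s_1)=P(s_2)$ by the invariant. The last edge before ``$d$'' must be a synchronous $(\sigma,\sigma)$-edge satisfying $c_{2}'$, i.e.\ $\sigma\in\Sigma_c$ together with $\rho(x_1,\sigma)\rho_H(q_2,\sigma)\neq\rho(x_2,\sigma)\rho_H(q_1,\sigma)$. Since every $\delta_H$-transition used along the run has positive probability by the convention on $\rho_H$, we also obtain $s_1,s_2\in supp(L_H)$. Combining these facts produces exactly the negation of probabilistic observability.

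For the ``only if'' direction I would begin from a violation: $s_1,s_2\in supp(L_H)$, $P(s_1)=P(s_2)=t_1\cdots t_k$ with each $t_j\in\Sigma_o$, and a controllable $\sigma\in\Sigma_c$ at the reached states $(x_i,q_i)$ for which equation~(31) fails. Writing $s_i=u_0^{i}t_1 u_1^{i}\cdots t_k u_k^{i}$ with $u_j^{i}\in\Sigma_{uo}^{*}$, I would feed $G_{to}$ the interleaved label sequence $(u_0^{1},\epsilon)(\epsilon,u_0^{2})(t_1,t_1)(u_1^{1},\epsilon)(\epsilon,u_1^{2})\cdots(t_k,t_k)(u_k^{1},\epsilon)(\epsilon,u_k^{2})$ letter by letter. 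Each asynchronous step carries an event in $\Sigma_{uo}$ and therefore fulfils $c_{3}'$ or $c_{4}'$ because the corresponding plant and specification transitions already exist along $s_1$ or $s_2$. Each synchronous step either already triggers $c_{2}'$ and reaches ``$d$'' (finishing the proof), or satisfies $c_{1}'$ and the run proceeds. In the latter case the run arrives at $(x_1,q_1,x_2,q_2)$, and the final synchronous edge $(\sigma,\sigma)$ satisfies $c_{2}'$ and enters ``$d$''.

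The main obstacle I anticipate is the bookkeeping in the ``only if'' direction, where I must verify that the proposed interleaving is a legal run of $G_{to}$ at every prefix and that intermediate $(t_j,t_j)$-edges cannot derail the construction. The second point is disposed of by the remark that an early hit of ``$d$'' already yields the conclusion, so we may assume without loss that every synchronous step before the last satisfies $c_{1}'$. A minor notational subtlety is that the witnessing controllable event $\sigma$ need not itself be observable, but this causes no difficulty since the synchronous $(\sigma,\sigma)$-edges of $G_{to}$ are defined for every $\sigma\in\Sigma$, not only for observable ones.
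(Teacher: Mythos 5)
Your proposal is correct and follows essentially the same route as the paper: reading a violating pair $(s_1,s_2)$ off any path into ``$d$'' for one direction, and building a run of $G_{to}$ from a violating pair for the other, with the early-hit-of-``$d$'' case absorbing the paper's fallback to prefixes $s_1'\in\overline{s_1}$, $s_2'\in\overline{s_2}$. Your explicit run invariant and interleaving $(u_0^{1},\epsilon)(\epsilon,u_0^{2})(t_1,t_1)\cdots$ merely make precise what the paper asserts tersely ``by the definition of $G_{to}$''.
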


 \begin{proof}
    Suppose $(s_1, s_2) $ is the string tuple that reaches $(x_1, q_1, x_2, q_2)$ from the initial state of $G_{to}$: $(x_0, q_0,x_0,q_0)$.
    Assume $\delta(x_0,s_i) = x_i$ and $\delta_{H}(q_0,s_i) = q_i$, $i \in \{1,2\}$.
    According to the definition of $G_{to}$, we have $s_1, s_2 \in supp(L_{H})$, and $P(s_{1}) = P(s_2)$.
    If state ``$d$" is reachable by event $\sigma$ from the reachable state $(x_1, q_1, x_2, q_2)$,
    then we obtain $\sigma \in \Sigma_{c}$ and
    $\rho(x_{1},\sigma) * \rho_{H}(q_{2},\sigma)  \neq \rho(x_{2},\sigma) * \rho_{H}(q_{1},\sigma)$.
    Hence, $L_{H}$ is not probabilistic observable.

    \par
    On the other hand, if $L_{H}$ is not probabilistic observable, then according to the definition of the probabilistic observability,
    there exist $s_{1},s_{2} \in supp(L_{H})$, such that $P(s_{1}) = P(s_{2})$, and
    $\delta( x_0 , s_{i} ) = x_{i} $ and $\delta_{H}( q_0 , s_{i} ) = q_{i} $, $i = \{1 , 2\}$,
    and $\exists \sigma \in \Sigma_{c}$,
   $ \rho(x_{1},\sigma) * \rho_{H}(q_{2},\sigma)  \neq \rho(x_{2},\sigma) * \rho_{H}(q_{1},\sigma)$.
   By the definition of $G_{to}$, the state ``$d$" is reachable from $(x_1,q_1,x_2,q_2)$ by $\sigma$.
   If $(x_1,q_1,x_2,q_2)$ can be reachable from $(x_0,q_0,x_0,q_0)$,
   then the state ``$d$" can be reachable from $(x_0,q_0,x_0,q_0)$;
   otherwise, by the definition of $G_{to}$,
   there must exist $s_1^{'} \in \overline{s_1}$, $s_2^{'} \in \overline{s_2}$ such that $P(s_1^{'}) = P(s_2^{'})$,
   the state ``$d$" is reachable from $(x_0, q_0, x_0, q_0)$ by $(s_1^{'}, s_2^{'})$.
 \end{proof}

\section{Infimal probabilistic controllable and observable superlanguage}
When the given specification is unachievable (not probabilistic controllable or observable), it is natural to pursue the ``best" achievable approximation.
In this section, we present an general procedure to compute the \emph{infimal probabilistic controllable and observable superlanguage} for an unachievable sublanguage.

\par
In the last section, we have presented the definitions of probabilistic controllability and observability in the context of automata form.
For the convenience of the discussion in this section, the equivalent notions in the context of languages form are defined as follows.
\par

\begin{definition}
Let $L$ and $M$ be probabilistic languages over events set $\Sigma$, and $L \subseteq M$.
Let $\Sigma_{c}$ be the controllable events set.
   $L$ is said to be probabilistic controllable w.r.t. $M$ and $\Sigma_{c}$,
   if $\forall s \in supp(L)$, $\forall \sigma \in \Sigma_{uc}$,
        \begin{equation}
           \frac{L(s\sigma)}{L(s)} = \frac{M(s\sigma)}{M(s)}.
    \end{equation}
\end{definition}

\begin{definition}
   Let $L$ and $M$ be probabilistic languages over events set $\Sigma$, and $L \subseteq M$.
Let $\Sigma_{c}$ be the controllable events set, and $\Sigma_{o}$ be the observable events set.
   $L$ is said to be probabilistic observable w.r.t. $M$, $\Sigma_{c}$ and $\Sigma_{o}$,
     if $\forall s_{1},s_{2} \in supp(L)$, $P(s_{1}) = P(s_{2})$,
    and $\forall \sigma \in \Sigma_{c}$,
    \begin{equation}
            \frac{M(s_{1}\sigma)}{M(s_{1})} * \frac{L(s_{2}\sigma)}{L(s_{2})} =
            \frac{M(s_{2}\sigma)}{M(s_{2})} * \frac{L(s_{1}\sigma)}{L(s_{1})}.
    \end{equation}
\end{definition}

\par
According to the definitions, the following propositions could be obtained immediately.

\begin{proposition}
  Suppose $M$ and $L$ are generated by probabilistic automata $G$ and $H$, respectively.
  $L$ is probabilistic controllable (observable) w.r.t. $M$, $\Sigma_{c}$ (and $\Sigma_{o}$) $\Leftrightarrow$
  $H$ is probabilistic controllable (observable) w.r.t. $G$, $\Sigma_{c}$ (and $\Sigma_{o}$).
\end{proposition}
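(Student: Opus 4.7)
The plan is to show that each language-form condition (Definitions 13, 14) is literally the same equation as the corresponding automaton-form condition (Definitions 11, 12), once one translates one-step transition probabilities into ratios of generated-language values via Definition 3.

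First, I would establish a dictionary lemma: for any probabilistic automaton $G=\{X,x_0,\Sigma,\delta,\rho\}$ and any $s\in supp(L_G)$ with $\delta(x_0,s)=x$, Definition 3 gives $L_G(s\sigma)=L_G(s)\cdot\rho(x,\sigma)$ whenever $\delta(x_0,s)!$, hence
\[
\frac{L_G(s\sigma)}{L_G(s)}=\rho(x,\sigma).
\]
Symmetrically, with $\delta_H(q_0,s)=q$, $\frac{L_H(s\sigma)}{L_H(s)}=\rho_H(q,\sigma)$. Since $L\subseteq M$ forces $supp(L)\subseteq supp(M)$, any $s\in supp(L)=supp(L_H)$ simultaneously satisfies $\delta(x_0,s)!$ and $\delta_H(q_0,s)!$, so both ratios are well defined and the dictionary applies on both sides.

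For the controllability equivalence, I would fix an arbitrary $s\in supp(L)$ and $\sigma\in\Sigma_{uc}$, put $x=\delta(x_0,s)$, $q=\delta_H(q_0,s)$, and apply the dictionary to rewrite Definition 13's equation $\frac{L(s\sigma)}{L(s)}=\frac{M(s\sigma)}{M(s)}$ as $\rho_H(q,\sigma)=\rho(x,\sigma)$, which is precisely Definition 11. Since $s$ was arbitrary and each $(x,q)$ reachable under the joint dynamics of $G$ and $H$ arises from some such $s$, the two universally quantified statements coincide. The observability direction is identical: pick $s_1,s_2\in supp(L)$ with $P(s_1)=P(s_2)$, $\sigma\in\Sigma_c$, set $x_i=\delta(x_0,s_i)$, $q_i=\delta_H(q_0,s_i)$, and substitute the four ratios in Definition 14, obtaining exactly
\[
\rho(x_1,\sigma)\cdot\rho_H(q_2,\sigma)=\rho(x_2,\sigma)\cdot\rho_H(q_1,\sigma),
\]
which is Definition 12.

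The only real subtlety — and the one I would be careful about — is the status of transitions that are undefined. If $\delta(x,\sigma)$ is undefined then $\rho(x,\sigma)=0$ by Definition 1 and $L_G(s\sigma)=0$ by Definition 3, so the dictionary still holds with the value $0$ on both sides; analogously for $H$. This keeps the substitution valid in the boundary cases. Beyond this bookkeeping about supports and undefined transitions, no further argument is needed: the proposition is essentially a translation between two notations for the same one-step conditional probabilities, so I expect the entire proof to be a short calculation with no genuinely hard step.
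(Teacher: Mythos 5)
Your proposal is correct and matches the paper's intent: the paper offers no explicit proof, stating only that the proposition follows immediately from the definitions, and your dictionary $\frac{L_G(s\sigma)}{L_G(s)}=\rho(\delta(x_0,s),\sigma)$ is precisely the obvious translation that makes it immediate. Your attention to the boundary cases (undefined transitions giving $0$ on both sides, and $supp(L)\subseteq supp(M)$ guaranteeing all ratios are well defined) is sound and fills in exactly the bookkeeping the paper leaves implicit.
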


\begin{proposition}
  $L$ is probabilistic controllable (observable) w.r.t. $M$, $\Sigma_{c}$ (and $\Sigma_{o}$) $\Rightarrow$
  $supp(L)$ is controllable (observable) w.r.t. $supp(M)$, $\Sigma_{c}$ (and $\Sigma_{o}$).
\end{proposition}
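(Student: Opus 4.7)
The plan is to convert the numerical identities of Definitions 11 and 12 into positivity (zero-versus-nonzero) assertions, relying on the elementary observation recorded after Definition 6 that $L \subseteq M$ forces $supp(L) \subseteq supp(M)$.

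For the controllability half, I would pick any $s \in supp(L)$ and any $\sigma \in \Sigma_{uc}$ with $s\sigma \in supp(M)$. Since $s \in supp(L) \subseteq supp(M)$, both $L(s)$ and $M(s)$ are strictly positive; combined with $M(s\sigma) > 0$, this gives $M(s\sigma)/M(s) > 0$. Equation (44) then forces $L(s\sigma)/L(s) > 0$, hence $L(s\sigma) > 0$, i.e., $s\sigma \in supp(L)$. This is exactly the classical controllability of $supp(L)$ with respect to $supp(M)$ and $\Sigma_c$.

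For the observability half, suppose $s_1, s_2 \in supp(L)$ with $P(s_1) = P(s_2)$ and $\sigma \in \Sigma_c$, and assume the classical hypothesis $s_1\sigma \in supp(L)$ and $s_2\sigma \in supp(M)$; the target conclusion is $s_2\sigma \in supp(L)$. Since $s_1, s_2 \in supp(M)$ and $s_1\sigma \in supp(L) \subseteq supp(M)$, three of the four ratios appearing in Equation (45)---namely $M(s_1\sigma)/M(s_1)$, $M(s_2\sigma)/M(s_2)$, and $L(s_1\sigma)/L(s_1)$---are strictly positive. The equality then takes the form $c_1 \cdot L(s_2\sigma)/L(s_2) = c_2 \cdot c_3$ with $c_1, c_2, c_3 > 0$, so $L(s_2\sigma)/L(s_2) > 0$, yielding $s_2\sigma \in supp(L)$.

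No step is technically difficult; the proposition is essentially a bookkeeping lemma that translates numerical equalities into logical implications. The only subtlety is to pair the factors in Equation (45) so that the three ratios guaranteed to be positive by the classical hypothesis and $supp(L) \subseteq supp(M)$ sit on the opposite side from the unknown $L(s_2\sigma)/L(s_2)$, which then inherits positivity from the equality without any further structural argument.
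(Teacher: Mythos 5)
Your proof is correct and follows the only natural route: the paper itself offers no written proof, merely asserting that Propositions~3 and~4 ``could be obtained immediately'' from the definitions, and your argument supplies exactly the missing bookkeeping, converting the ratio identities (44) and (45) into the set-theoretic implications of classical controllability and observability via positivity of the known factors and the inclusion $supp(L) \subseteq supp(M)$. Both halves are sound, including the key observation that $s_1\sigma \in supp(L) \subseteq supp(M)$ makes $M(s_1\sigma)/M(s_1)$ positive so that $L(s_2\sigma)/L(s_2)$ inherits positivity from the equality.
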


\par
We present the definition of the class of \emph{probabilistic controllable and observable superlanguages} as follows.
\begin{multline}
  PCO(L) = \{ R | L \subseteq R \subseteq M \wedge R \text{ is probabilistic} \\
       \text{controllable and observable w.r.t. } M, \Sigma_{c} \text{ and } \Sigma_{o}  \}.
\end{multline}
\par
It is obvious that this class is not empty, as $M \in PCO(L)$.
In general, people are more interested in the infimal element of the class $PCO(L)$, as it could be viewed the ``best" achievable approximation of an unachievable sublanguage.
\par
Formally, the \emph{infimal probabilistic controllable and observable superlanguage} of $L$, denoted as $\inf \{ PCO(L)\}$, could be defined as follows.
\begin{multline}
        [ \inf \{ PCO(L) \} \in PCO(L) ] \wedge
          [ \forall R \in PCO(L), \inf \{ PCO(L) \} \subseteq R].
\end{multline}

\par
Firstly, it is necessary to investigate the existence of the infimal element $\inf \{ PCO(L) \}$.

\par%

\par
The following two propositions show that
the probabilistic controllability and observability are both closed under the  intersection operation (``$\cap$") of probabilistic languages.

\begin{proposition}
  Given three probabilistic languages $L_{1}$, $L_{2}$ and $M$ over events set $\Sigma$.
  If $L_{1}$ and $L_{2}$ are both probabilistic controllable w.r.t. $M$  and $\Sigma_{c}$,
  then so is $L_{1} \cap L_{2}$.
\end{proposition}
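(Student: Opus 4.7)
The plan is to verify the two requirements of Definition 12 for $L_1 \cap L_2$ with respect to $M$: first, that $L_1 \cap L_2 \subseteq M$, and second, that the transition ratio matches $M$'s ratio for every uncontrollable event along every string in the support.

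For the containment $L_1 \cap L_2 \subseteq M$, I would simply chain Proposition 1 (which gives $L_1 \cap L_2 \subseteq L_1$) with the hypothesis $L_1 \subseteq M$ (implicit in $L_1$ being probabilistic controllable w.r.t.\ $M$), using the transitivity of $\subseteq$ that follows from Definition 6. This step is essentially bookkeeping.

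For the ratio condition, fix any $s \in supp(L_1 \cap L_2)$ and any $\sigma \in \Sigma_{uc}$. Since $L_1 \cap L_2 \subseteq L_i$ by Proposition 1 and sublanguage containment implies support containment (noted right after Definition 6), we have $s \in supp(L_1) \cap supp(L_2)$. The definition of intersection (Equation (10)) then yields
\begin{equation*}
  \frac{(L_1 \cap L_2)(s\sigma)}{(L_1 \cap L_2)(s)} = \min\!\left\{ \frac{L_1(s\sigma)}{L_1(s)},\; \frac{L_2(s\sigma)}{L_2(s)} \right\}.
\end{equation*}
Applying the probabilistic controllability of $L_1$ and of $L_2$ (Definition 12) separately to each argument of the $\min$, both ratios equal $M(s\sigma)/M(s)$, so the $\min$ collapses to $M(s\sigma)/M(s)$, which is exactly what is required.

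I do not anticipate a real obstacle here: the argument is essentially a one-line computation hinging on the fact that $\min\{a,a\}=a$, combined with the already-established support-inheritance for intersections. The only subtlety worth flagging is the need to justify $s \in supp(L_1) \cap supp(L_2)$ before invoking Definition 12 on each $L_i$, since that definition only constrains ratios at strings lying in the support of the sublanguage in question; this is handled by the Proposition 1 chain mentioned above.
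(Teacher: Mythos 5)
Your proposal is correct and follows essentially the same route as the paper's proof: apply the definition of intersection to express the ratio as a minimum, then use the controllability of $L_1$ and $L_2$ to collapse $\min\{M(s\sigma)/M(s),\,M(s\sigma)/M(s)\}$ to $M(s\sigma)/M(s)$. The only difference is that you explicitly justify $s \in supp(L_1)\cap supp(L_2)$ and the containment $L_1 \cap L_2 \subseteq M$, steps the paper leaves implicit.
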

\begin{proof}
 By the definition of $L_{1} \cap L_{2}$,
 we have, $\forall s \in supp(L_{1} \cap L_{2})$,
    \begin{equation}
            \frac{(L_{1} \cap L_{2})(s\sigma)}{(L_{1} \cap L_{2})(s)}    =  \min \{ \frac{L_{1}(s\sigma)}{L_{1}(s)}, \frac{L_{2}(s\sigma)}{L_{2}(s)} \}.
    \end{equation}

    \par
  By means of the probabilistic controllabilities of $L_{1}$ and $L_{2}$, we obtain, $\forall \sigma \in \Sigma_{uc}$,
  \[
    \frac{(L_{1} \cap L_{2})(s\sigma)}{(L_{1} \cap L_{2})(s)}    = \min \{ \frac{M(s\sigma)}{M(s)}, \frac{M(s\sigma)}{M(s)} \} = \frac{M(s\sigma)}{M(s)}.
  \]
    Hence, $L_{1} \cap L_{2}$ is  probabilistic controllable.
\end{proof}

\par

\begin{proposition}
  Given three probabilistic languages $L_{1}$, $L_{2}$ and $M$ over events set $\Sigma$.
  If $L_{1}$ and $L_{2}$ are both probabilistic observable w.r.t. $M$, $\Sigma_{c}$  and $\Sigma_{o}$,
  then so is $L_{1} \cap L_{2}$.
\end{proposition}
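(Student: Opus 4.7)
The plan is to mimic the structure of the proof of Proposition 4, exploiting the fact that the intersection formula in Definition 6 expresses the one-step conditional probability of $L_1 \cap L_2$ as the minimum of the corresponding conditional probabilities of $L_1$ and $L_2$. So I will fix $s_1, s_2 \in supp(L_1 \cap L_2) \subseteq supp(L_1) \cap supp(L_2) \subseteq supp(M)$ with $P(s_1) = P(s_2)$ and $\sigma \in \Sigma_c$, and try to establish
\[
  \frac{M(s_1\sigma)}{M(s_1)} \cdot \frac{(L_1 \cap L_2)(s_2\sigma)}{(L_1 \cap L_2)(s_2)}
  = \frac{M(s_2\sigma)}{M(s_2)} \cdot \frac{(L_1 \cap L_2)(s_1\sigma)}{(L_1 \cap L_2)(s_1)},
\]
which by Definition 13 is exactly the observability condition for $L_1 \cap L_2$. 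Substituting the intersection formula reduces the task to showing
\[
  A \cdot \min\{\beta_1, \beta_2\} = B \cdot \min\{\alpha_1, \alpha_2\},
\]
where $A = M(s_1\sigma)/M(s_1)$, $B = M(s_2\sigma)/M(s_2)$, $\alpha_i = L_i(s_1\sigma)/L_i(s_1)$, and $\beta_i = L_i(s_2\sigma)/L_i(s_2)$.

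The algebraic core is the observation that the probabilistic observability of each $L_i$ gives $A\beta_i = B\alpha_i$ for $i = 1, 2$. In the generic case $A, B > 0$, this forces $\beta_i = (B/A)\alpha_i$, so $\beta_1$ and $\beta_2$ are a common positive rescaling of $\alpha_1$ and $\alpha_2$. The crucial fact that makes the argument go through is that a strictly monotone positive rescaling commutes with $\min$, i.e., $\min\{\beta_1, \beta_2\} = (B/A)\min\{\alpha_1, \alpha_2\}$. Multiplying by $A$ yields the desired identity immediately.

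The only remaining concern is the degenerate regime where $A$ or $B$ vanishes. Here I will handle the cases separately: if $A = 0$, then from $A\beta_i = B\alpha_i$ either $B = 0$ (both sides of the target identity are trivially zero because the leading factors vanish) or $B > 0$, in which case $\alpha_i = 0$ for both $i$ and hence $\min\{\alpha_1,\alpha_2\} = 0$ while the left-hand side is $0 \cdot \min\{\beta_1,\beta_2\} = 0$; the case $B = 0$ is symmetric. This exhausts every situation and closes the argument.

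The main obstacle, and in fact the only one requiring care, is the degenerate case analysis just mentioned, since the observability identity is stated in a cross-multiplied form precisely so that it remains well-posed when some of the ratios are zero; one must not divide prematurely. Once that is handled, the proof is a short calculation relying on the elementary commutation of $\min$ with scaling by a non-negative constant.
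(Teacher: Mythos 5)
Your proof is correct and takes essentially the same route as the paper's: both reduce the claim to the identity $\frac{M(s_1\sigma)}{M(s_1)}\cdot\min\{\beta_1,\beta_2\} = \frac{M(s_2\sigma)}{M(s_2)}\cdot\min\{\alpha_1,\alpha_2\}$ via the $\min$ formula for $\cap$, and close it using the cross-multiplied observability of each $L_i$ together with the compatibility of $\min$ with non-negative scaling. The only difference is cosmetic: the paper distributes the factor $\frac{M(s_1\sigma)}{M(s_1)}$ into the $\min$ directly (an identity valid even when that factor is zero), which renders your separate treatment of the degenerate cases $A=0$ or $B=0$ unnecessary.
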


\begin{proof}
    For $ s_{1}, s_{2}  \in supp(L_{1} \cap L_{2})$, $P(s_{1}) = P(s_{2})$, and $\sigma \in \Sigma_{c}$, we have
     \begin{align}
     & \frac{M(s_{1}\sigma)}{M(s_{1})} * \frac{(L_{1} \cap L_{2})(s_{2}\sigma)}{(L_{1} \cap L_{2})(s_{2})}           \nonumber \\
         &=    \frac{M(s_{1}\sigma)}{M(s_{1})} * \min \{ \frac{L_{1}(s_{2}\sigma)}{L_{1}(s_{2})} , \frac{L_{2}(s_{2}\sigma)}{L_{2}(s_{2})} \} \text{ (By the definition}   \text{ of $\cap$) }\nonumber \\
         &=  \min \{  \frac{M(s_{1}\sigma)}{M(s_{1})} * \frac{L_{1}(s_{2}\sigma)}{L_{1}(s_{2})}, \frac{M(s_{1}\sigma)}{M(s_{1})} * \frac{L_{2}(s_{2}\sigma)}{L_{2}(s_{2})} \} \nonumber \\
         &=  \min \{  \frac{M(s_{2}\sigma)}{M(s_{2})} * \frac{L_{1}(s_{1}\sigma)}{L_{1}(s_{1})}, \frac{M(s_{2}\sigma)}{M(s_{2})} * \frac{L_{2}(s_{1}\sigma)}{L_{2}(s_{1})} \} \text{ (By} \nonumber \\
         &    \text{ the probabilistic  observabilities of $L_{1}$ and $L_{2}$) }\nonumber \\
         &= \frac{M(s_{2}\sigma)}{M(s_{2})} * \min \{  \frac{L_{1}(s_{1}\sigma)}{L_{1}(s_{2})}, \frac{L_{2}(s_{1}\sigma)}{L_{2}(s_{1})} \} \nonumber \\
         &= \frac{M(s_{2}\sigma)}{M(s_{2})} * \frac{(L_{1} \cap L_{2})(s_{1}\sigma)}{(L_{1} \cap L_{2})(s_{1})} \text{ (By the definition of $\cap$) }
            \end{align}
Therefore, $L_{1} \cap L_{2}$ is probabilistic observable.
\end{proof}

\par
Since the class $PCO(L)$ is not empty, as mentioned before,
Propositions 1, 5 and 6 can guarantee the existence of $\inf \{ PCO(L) \}$.

 \par
In the rest of this section, we would focus on the computation of $\inf \{ PCO(L) \}$.
Since the support language of a probabilistic language is always prefix-closed,
the non-probabilistic version of $PCO(L)$ could be defined as follows.
\begin{multline}
  CO(supp(L)) = \{ R_{s} | supp(L) \subseteq R_{s} \subseteq supp(M) \wedge R_{s} = \overline{R_{s}} \wedge \\
R_{s} \text{ is  controllable and observable }
       \text{w.r.t. } supp(M), \sigma_{c} \text{ and } \Sigma_{o}\}.
\end{multline}
\par

By means of the definition of sublanguage and Proposition 4, we have the following proposition immediately.
\begin{proposition}
The support language of the element in $PCO(L)$ must be in $CO(supp(L))$, that is,
    \begin{equation*}
      R \in PCO(L) \Rightarrow supp(R) \in  CO(supp(L)).
    \end{equation*}
\end{proposition}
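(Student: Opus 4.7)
The plan is to unpack the definitions of $PCO(L)$ and $CO(supp(L))$ and verify each conjunct of the latter using facts already established in the excerpt, so the proof is essentially a bookkeeping argument rather than a new calculation.

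First I would fix $R \in PCO(L)$, which by definition (38) gives $L \subseteq R \subseteq M$ together with the probabilistic controllability and observability of $R$ w.r.t.\ $M$, $\Sigma_c$ (and $\Sigma_o$). To land in $CO(supp(L))$ as defined in (42), I need to check four things about $supp(R)$: (i) $supp(L) \subseteq supp(R)$, (ii) $supp(R) \subseteq supp(M)$, (iii) $supp(R) = \overline{supp(R)}$, and (iv) $supp(R)$ is controllable and observable w.r.t.\ $supp(M)$, $\Sigma_c$ and $\Sigma_o$ in the non-probabilistic sense.

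For (i) and (ii), I would invoke the observation made right after Definition~7, namely that $L_{1} \subseteq L_{2}$ implies $supp(L_{1}) \subseteq supp(L_{2})$, applied once to $L \subseteq R$ and once to $R \subseteq M$. For (iii), I would cite the remark preceding Definition~7 that the support language of any probabilistic language is automatically prefix-closed, so $supp(R) = \overline{supp(R)}$ holds without any further work. For (iv), this is exactly what Proposition~4 delivers: the probabilistic controllability and observability of $R$ w.r.t.\ $M$ (which are hypotheses since $R \in PCO(L)$) imply the ordinary controllability and observability of $supp(R)$ w.r.t.\ $supp(M)$. Combining the four facts shows $supp(R) \in CO(supp(L))$.

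There is no real obstacle here: the statement is a compatibility check between the probabilistic notions of Definitions~11--12 and their non-probabilistic counterparts underlying the class $CO(\cdot)$ in (42), and all the substantive work has already been done in Proposition~4 and in the structural remarks about $supp(\cdot)$. The only thing to be careful about is writing out the three containment/closure conditions explicitly so that the reader sees that every clause in the definition of $CO(supp(L))$ is accounted for, rather than silently folding everything into a single appeal to Proposition~4.
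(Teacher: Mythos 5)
Your proposal is correct and matches the paper's argument: the paper derives this proposition ``immediately'' from the definition of sublanguage (which gives the two support containments) and Proposition~4 (which gives the non-probabilistic controllability and observability of $supp(R)$), with prefix-closedness of $supp(R)$ being automatic for support languages. You have simply written out explicitly the same bookkeeping the paper leaves implicit.
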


\par
Actually, the set $CO(\cdot)$ defined above is called as the \emph{prefix-closed controllable and observable superlanguages} in literature.
It has been well investigated in \cite{Rudie1}, \cite{Shayman} and \cite{Masopust}.
Hence, we could compute $\inf \{ PCO(L) \}$ by means of some known results in the non-probabilistic situation.
\par

\par
By means of the approach introduced by Masopust \cite{Masopust},
we can obtain the finite automaton $H_{s}$ that generates the infimal element of $CO(supp(L))$,
that is, $L_{H_{s}} = \inf \{ CO(supp(L)) \}$.

\par
Suppose the probabilistic languages $M$ and $L$ are generated by the probabilistic automata $G$ and $H$, respectively. That is, $L_{G} = M$ and $L_{H} = L$.
\par
In order to simplify the computation of  $\inf \{ PCO(L) \}$,
it is necessary to refine the automaton $H_{s}$, and the probabilistic automata $G$ and $H$.
Our final goal is to obtain two \emph{normal probabilistic automata} $G_{n}$ and $H_{n}$, $H_{n} \sqsubseteq  G_{n}$,
such that $L_{H_{n}}(s) = L(s)$ for $s \in supp(L)$ and $supp(L_{H_{n}}) = \inf \{ CO(supp(L))$, and $L_{G_{n}} = M$.

\par
We have the following explanations for the aforementioned refinements.
\begin{enumerate}

  \item
   The notion of \emph{normal automaton} and the normalization procedure were first proposed by Cho and Marcus \cite{Cho}.
   Normal probabilistic automata mentioned here are probabilistic automata whose logic parts are normal automata.

    \item
    Takai and Ushio \cite{Takai} pointed out an excellent property of normal automata:
     \emph{the state space of the observer  of a normal automaton is exactly a partition of the state space of the normal automaton}.

  \item
  $L_{H_{n}}(s) = L(s)$ for $s \in supp(L)$, and $supp(L_{H_{n}}) = \inf \{ CO(supp(L)) $ mean that
  $H_{n}$ encodes not only the quantitative information from $L$, but also the logic information from $\inf \{ CO(supp(L)) \}$.

  \item
  The refinement operation for the subautomaton relation ($H_{n} \sqsubseteq G_{n}$) can be realized efficiently.
  It helps us to simplify the subsequent computation for the automaton representation of $\inf \{ PCO(L) \}$.
\end{enumerate}

The refinement algorithm that output $G_{n}$ and $H_{n}$ is presented in Appendix A.
\par

\par
In the following, we would present an algorithm to compute an automaton representation for $\inf \{ PCO(L) \}$ based on
the normal probabilistic automata $H_{n}$ and $G_{n}$.

\par

\begin{algorithm}
  Given normal probabilistic automata $H_{n} = \{ X_{H_{n}}, x_{0,H_{n}},  \Sigma,\delta_{H_{n}}, \rho_{H_{n}} \} $ and $G_{n} = \{ X_{G_{n}}, x_{0,G_{n}},\Sigma,\delta_{G_{n}},  \rho_{G_{n}} \}$, $H_{n} \sqsubseteq  G_{n}$, such that
  $L_{H_{n}}(s) = L(s)$ for $s \in supp(L)$ and $supp(L_{H_{n}}) = \inf \{ CO(supp(L)) \}$,
   and $L_{G_{n}} = M$.

  \begin{enumerate}
  \item
   $\widetilde{H} \Leftarrow H_{n} $, and suppose $\widetilde{H} = \{ X_{\widetilde{H}}, x_{0,\widetilde{H}},  \Sigma,\delta_{\widetilde{H}}, \rho_{\widetilde{H}} \} $.

    \item
    Handle the transitions driven by uncontrollable events of $\widetilde{H}$.
    Specifically, for $\forall x \in X_{\widetilde{H}}$ and $\forall \sigma \in \Sigma_{uc}$, do the following.
                    \begin{enumerate}

                         \item
                       $\rho_{\widetilde{H}}(x,\sigma) \Leftarrow \rho_{G_{n}}(x,\sigma)$;

                         \item
                       If $\rho_{\widetilde{H}}(x,\sigma) > 0$ and $\rho_{H_{n}}(x,\sigma) = 0$ and $\sigma \in \Sigma_{uc}$,
                      then $ X_{\widetilde{H}} \Leftarrow  X_{\widetilde{H}} \cup \delta_{G_{n}}(x,\sigma)$ and
                       $\delta_{\widetilde{H}}(x,\sigma) \Leftarrow \delta_{G_{n}}(x,\sigma)$.

                    \end{enumerate}

    \item
     Handle the  transitions driven by controllable events of $\widetilde{H}$.
     Specifically, compute the observer $Obs(logic(H_{n}))$
     with the states set $X_{obs} = \{ X_{1}, X_{2}, \ldots, X_{r} \}$.
     Since $logic(H_{n})$ is normal, $X_{k_{1}} \cap X_{k_{2}} = \varnothing$ for $k_{1} \neq k_{2}$, $k_{1},k_{2} \in [1,r]$,
     and $\bigcup_{k \in [1,r]} X_{k} = X_{\widetilde{H}}$.
    For each $X_{k} = \{ x_1, x_2, \ldots,  x_{q} \}$, do the following.
    \begin{enumerate}
          \item
          Suppose $\Sigma_{c} = \{\sigma_{1}, \ldots, \sigma_{m}\}$.
          Compute $q$ vectors with $m$  size:
            $K(j) = [
            k_{j}^{1}  \quad
            k_{j}^{2}
            \ldots
            k_{j}^{m}
            ]^{T}$, $j \in [1,q]$, where $k_{j}^{i} = 0$, if $\rho_{H_{n}}(x_{j},\sigma_{i}) = 0$;
            otherwise, $k_{j}^{i} = \frac{\rho_{H_{n}}(x_{j},\sigma_{i})}{\rho_{G_{n}}(x_{j},\sigma_{i})}$.

             \item
                Compute the vector $K = [ k_{1}  \quad  k_{2} \ldots k_{m} ]^{T} $,
                where the $k_{i} = \max_{j = 1}^{q} \{ k_{j}^{i} \}$,
                 $i \in [1,m]$.

             \item
                 $\rho_{\widetilde{H} }(x_{j},\sigma_{i}) \Leftarrow k_{i} * \rho_{G_{n}}(x_{j},\sigma_{i})$,
                 $\forall j \in [1,q]$ and $\forall i \in [1,m]$.

             \item
                 If $\rho_{\widetilde{H}}(x_{j},\sigma_{i}) > 0$ and $\rho_{H_{n}}(x_{j},\sigma_{i}) = 0$,
                 then $\delta_{\widetilde{H}}(x_{j},\sigma_{i}) \Leftarrow \delta_{G_{n}}(x_{j},\sigma_{i})$ and $X_{\widetilde{H}} \Leftarrow X_{\widetilde{H}} \cup \delta_{G_{n}}(x_{j},\sigma_{i})$,
                 $\forall j \in [1,q]$ and $\forall i \in [1,m]$.

        \end{enumerate}
      \end{enumerate}
\end{algorithm}

\par

What the step 2) of Algorithm 3 do is searching each uncontrollable transitions in $G_{n}$ and $H_{n}$.
Since $H_{n} \sqsubseteq G_{n}$, the complexity of step 2) is $O(|X_{G_{n}}|*|\Sigma_{uc}|)$.
The step 3) of Algorithm 3 is based on observer automaton that has an exponential states space generally.
However, the normal automaton $logic(H_{n})$ has the following excellent property:
the state space of the observer of $logic(H_{n})$ is exactly a partition of the state space of $logic(H_{n})$ \cite{Takai}.
Hence, the complexity of step 3) is $O(|X_{G_{n}}|*|\Sigma_{c}|)$.
Therefore, the complexity of Algorithm 3 is $O(|X_{G_{n}}|*|\Sigma|)$.
\par

\begin{example}
 The normal probabilistic automaton $G_{n}$ that generates the behavior of the plant is  shown in Fig. 5 - (a).
 The normal probabilistic automaton $H_{n}$ that encodes the (quantitative) information from $L$, and the (logic)
information from $\inf \{ CO(supp(L)) \}$ is shown in Fig. 5 - (b).
The probabilistic automaton $H$, as the generator of the original specification $L$, is shown in Fig. 5 - (c).
Let $\Sigma_{c} = \{\sigma_{1}, \sigma_{2}\}$ and $\Sigma_{o} = \{\sigma_{2}, \sigma_{3}\}$.
\par
  According to Algorithm 3, we obtain $\widetilde{H}$, as shown in Fig. 5 - (d).

\begin{figure}[htbp]
\centering
\begin{tabular}{C{.4\textwidth}C{.4\textwidth}}
\subfigure [ ] {
    \resizebox{0.3\textwidth}{!}{%
                \centering
               \begin{tikzpicture}[scale=0.25,->,>=stealth',shorten >=1pt,auto,node distance=2.5cm, semithick,
                                  every state/.style={fill=blue!10,thick,font=\LARGE}]

                                \node[state] (x0) {$x_0$};
                                \node[state] [below right   of = x0] (x5) {$x_5$};
                                \node[state] [below left of = x0] (x1) {$x_1$};
                                \node[state] [below  of = x1] (x2) {$x_2$};
                                \node[state] [below  of = x2] (x3) {$x_3$};
                                \node[state] [below  of = x3] (x4) {$x_4$};
                                \node[state] [below  of = x5] (x6) {$x_6$};
                                \node[state] [below  of = x6] (x7) {$x_7$};

                                \draw[every node/.style={font=\large }][align=center]
                                       (x0) edge [bend right] node {$\sigma_{1}/0.2$} (x1)
                                       (x0) edge [bend left] node {$\sigma_{2}/0.2  $ \\ $  \sigma_{3}/0.4$} (x5)
                                       (x1) edge  node {$\sigma_{2}/0.5 $ \\ $ \sigma_{3}/0.5$} (x2)
                                       (x2) edge  node {$\sigma_{2}/1$} (x3)
                                       (x3) edge  node {$\sigma_{2}/0.5  $ \\ $  \sigma_{3}/0.5$} (x4)
                                       (x4) edge [bend left] node{$\sigma_{2}/1$} (x1)
                                       (x5) edge  node {$\sigma_{2}/1$} (x6)
                                       (x6) edge  node {$\sigma_{2}/0.5  $ \\ $  \sigma_{3}/0.5$} (x7)
                                       (x7) edge  node {$\sigma_{2}/0.8$} (x0);
                                \end{tikzpicture}
    }
}&
   \subfigure [ ] {
    \resizebox{0.323\textwidth}{!}{%
                \centering
               \begin{tikzpicture}[scale=0.25,->,>=stealth',shorten >=1pt,auto,node distance=2.5cm, semithick,
                                  every state/.style={fill=blue!10,thick,font=\LARGE}]

                                \node[state] (x0) {$x_0$};
                                \node[state] [below right   of = x0] (x5) {$x_5$};
                                \node[state] [below left of = x0] (x1) {$x_1$};
                                \node[state] [below  of = x1] (x2) {$x_2$};
                                \node[state] [below  of = x2] (x3) {$x_3$};
                                \node[state] [below  of = x3] (x4) {$x_4$};
                                \node[state] [below  of = x5] (x6) {$x_6$};
                                \node[state] [below  of = x6] (x7) {$x_7$};

                                \draw[every node/.style={font=\large}][align=center]
                                       (x0) edge [bend right] node {$\sigma_{1}/0.2$} (x1)
                                       (x0) edge [bend left] node {$\sigma_{2}/0^{+}$ \\ $  \sigma_{3}/0^{+}$} (x5)
                                       (x1) edge  node {$\sigma_{2}/0.25  $ \\ $  \sigma_{3}/0.25$} (x2)
                                       (x2) edge  node {$\sigma_{2}/0.5$} (x3)
                                       (x3) edge  node {$\sigma_{3}/0.25$} (x4)
                                       (x4) edge [bend left] node{$\sigma_{2}/0.75$} (x1)
                                       (x5) edge  node {$\sigma_{2}/0^{+}$} (x6)
                                       (x6) edge  node {$\sigma_{2}/0^{+}$ \\ $ \sigma_{3}/0^{+}$} (x7)
                                       (x7) edge  node {$\sigma_{2}/0^{+}$} (x0);

                                \end{tikzpicture}
    }
} \\
\subfigure [ ] {
 \resizebox{0.21\textwidth}{!}{%
           \centering
              \begin{tikzpicture}[scale=0.25,->,>=stealth',shorten >=1pt,auto,node distance=2.5cm, semithick,
                              every state/.style={fill=blue!10,thick,font=\LARGE}]

                            \node[state] (x0) {$x_0$};
                            \node[state] [below left of = x0] (x1) {$x_1$};
                            \node[state] [below  of = x1] (x2) {$x_2$};
                            \node[state] [below  of = x2] (x3) {$x_3$};
                            \node[state] [below  of = x3] (x4) {$x_4$};

                            \draw[every node/.style={font=\large}][align=center]
                                   (x0) edge  [bend right] node {$\sigma_{1}/0.2$} (x1)
                                   (x1) edge  node {$\sigma_{2}/0.25$ \\ $  \sigma_{3}/0.25$} (x2)
                                   (x2) edge  node {$\sigma_{2}/0.5$} (x3)
                                   (x3) edge  node {$\sigma_{3}/0.25$} (x4)
                                   (x4) edge [bend left] node{$\sigma_{2}/0.75$} (x1);
                            \end{tikzpicture}
    }
 }&
 \subfigure [ ] {
         \resizebox{0.305\textwidth}{!}{%
                    \centering
                    \begin{tikzpicture}[scale=0.25,->,>=stealth',shorten >=1pt,auto,node distance=2.5cm, semithick,
                                  every state/.style={fill=blue!10,thick,font=\LARGE}]

                                \node[state] (x0) {$x_0$};
                                \node[state] [below right   of = x0] (x5) {$x_5$};
                                \node[state] [below left of = x0] (x1) {$x_1$};
                                \node[state] [below  of = x1] (x2) {$x_2$};
                                \node[state] [below  of = x2] (x3) {$x_3$};
                                \node[state] [below  of = x3] (x4) {$x_4$};
                                \node[state] [below  of = x5] (x6) {$x_6$};
                                \node[state] [below  of = x6] (x7) {$x_7$};

                                \draw[every node/.style={font=\large}][align=center]
                                       (x0) edge [bend right] node {$\sigma_{1}/0.2$} (x1)
                                       (x0) edge [bend left] node {$\sigma_{2}/0.1$ \\ $  \sigma_{3}/0.4$} (x5)
                                       (x1) edge  node {$\sigma_{2}/0.25$ \\ $  \sigma_{3}/0.5$} (x2)
                                       (x2) edge  node {$\sigma_{2}/0.5$} (x3)
                                       (x3) edge  node {$\sigma_{3}/0.5$} (x4)
                                       (x4) edge [bend left] node{$\sigma_{2}/0.75$} (x1)
                                       (x5) edge  node {$\sigma_{2}/0.5$} (x6)
                                       (x6) edge node {$\sigma_{3}/0.5$} (x7)
                                       (x7) edge node {$\sigma_{2}/0.6$} (x0);

                                \end{tikzpicture}
         }
    }
\end{tabular}
\caption{(a). $G_{n}$ (the normal generator for plant behavior); (b). $H_{n}$ (the normal automaton that encodes the information in specification $L$ and $\inf\{supp(L)\}$);
 (c). $H$ (the generator of $L$); (d). $\widetilde{H}$ (the output of Algorithm 3).}
\end{figure}
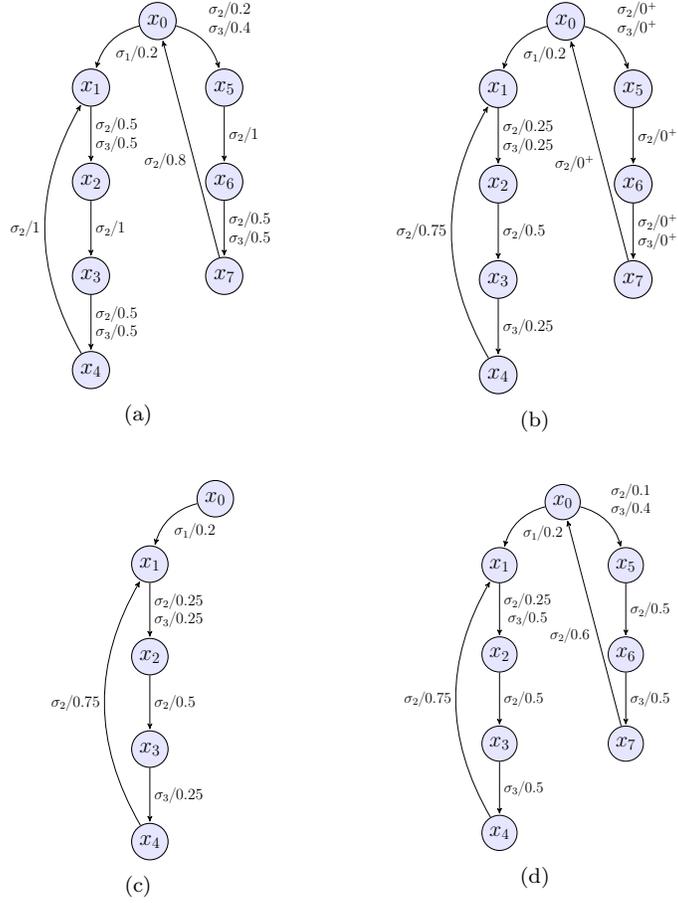

\end{example}

\par
At the end of this section, we would prove that $\widetilde{H}$ is the generator of $\inf \{ PCO(L)\}$,
that is, $L_{\widetilde{H}} = \inf \{ PCO(L)\} $.
\par
We present several necessary Lemmas as follows.
\begin{lemma}
  $logic( \widetilde{H} ) = logic( H_{n} ) $.
\end{lemma}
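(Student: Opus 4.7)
The plan is to prove $logic(\widetilde{H}) = logic(H_n)$ by tracking exactly what Algorithm 3 can modify and showing that the two conditional state/transition additions, in step 2(b) and step 3(d), never actually fire. Since step 1 initializes $\widetilde{H} \Leftarrow H_n$, steps 2(a) and 3(c) only rewrite numerical weights $\rho_{\widetilde{H}}$, and only steps 2(b) and 3(d) can enlarge $X_{\widetilde{H}}$ or extend $\delta_{\widetilde{H}}$. If I rule these out, I conclude $X_{\widetilde{H}} = X_{H_n}$ and $\delta_{\widetilde{H}} = \delta_{H_n}$, which is precisely $logic(\widetilde{H}) = logic(H_n)$.

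For step 2(b), I would argue by contradiction: fix $x \in X_{H_n}$ and $\sigma \in \Sigma_{uc}$ and assume the guard $\rho_{\widetilde{H}}(x,\sigma) > 0 \wedge \rho_{H_n}(x,\sigma) = 0$ holds. After step 2(a), $\rho_{\widetilde{H}}(x,\sigma) = \rho_{G_n}(x,\sigma)$, so $\delta_{G_n}(x,\sigma)!$. Pick $s \in supp(L_{H_n})$ reaching $x$ in $H_n$; by $H_n \sqsubseteq G_n$ it also reaches $x$ in $G_n$, so $s\sigma \in supp(L_{G_n}) = supp(M)$. Because $supp(L_{H_n}) = \inf\{CO(supp(L))\}$ is controllable w.r.t.\ $supp(M)$ and $\Sigma_c$, we get $s\sigma \in supp(L_{H_n})$, whence $\rho_{H_n}(x,\sigma) > 0$, contradicting the guard.

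For step 3(d), fix an observer block $X_k = \{x_1,\dots,x_q\}$ of $logic(H_n)$ and suppose the guard $\rho_{\widetilde{H}}(x_j,\sigma_i) > 0 \wedge \rho_{H_n}(x_j,\sigma_i) = 0$ holds for some $j,i$. After step 3(c), $\rho_{\widetilde{H}}(x_j,\sigma_i) = k_i \rho_{G_n}(x_j,\sigma_i)$, so $k_i > 0$; hence there is some $j^* \in [1,q]$ with $\rho_{H_n}(x_{j^*},\sigma_i) > 0$, equivalently $\delta_{H_n}(x_{j^*},\sigma_i)!$. Since $x_j, x_{j^*}$ lie in the same observer block, there exist $s_j, s_{j^*} \in supp(L_{H_n})$ with $P(s_j) = P(s_{j^*})$ and $\delta_{H_n}(x_0,s_j) = x_j$, $\delta_{H_n}(x_0,s_{j^*}) = x_{j^*}$. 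Moreover $s_{j^*}\sigma_i \in supp(L_{H_n})$ and, because $\rho_{G_n}(x_j,\sigma_i) > 0$, also $s_j\sigma_i \in supp(L_{G_n}) = supp(M)$. Invoking the observability of $supp(L_{H_n})$ w.r.t.\ $supp(M)$, $\Sigma_c$, $\Sigma_o$, I conclude $s_j\sigma_i \in supp(L_{H_n})$, i.e.\ $\rho_{H_n}(x_j,\sigma_i) > 0$, contradicting the guard.

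The step I expect to be the main obstacle is the observability argument in step 3(d): I must be careful that observer blocks of $logic(H_n)$ really do capture the non-probabilistic observational equivalence used in the definition of observability of $supp(L_{H_n})$. This is where the normality of $H_n$ is essential, since it guarantees that the observer states partition $X_{H_n}$ so that every $x_j \in X_k$ is reached by some string whose projection equals the common observation associated with $X_k$, which is exactly the hypothesis needed to apply the observability condition to the pair $(s_j, s_{j^*})$.
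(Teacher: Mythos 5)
Your proof is correct and follows essentially the same route as the paper's: both reduce the claim to showing that the guards in step 2(b) and step 3(d) of Algorithm 3 never fire, and both derive a contradiction with $supp(L_{H_{n}})=\inf\{CO(supp(L))\}$ via the (non-probabilistic) controllability and observability of that infimal superlanguage, using the partition property of the observer of the normal automaton for the step 3(d) case. The only difference is presentational: the paper argues the contrapositive (the guard firing would make $supp(L_{H_{n}})$ uncontrollable or unobservable), while you invoke controllability/observability directly to force $s\sigma\in supp(L_{H_{n}})$.
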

\begin{proof}
Obviously, we need to show that the steps 2) and 3) of Algorithm 3 do not bring any new transition and state to $\widetilde{H}$.
It is sufficient to prove that the conditions in step 2) - b) and step 3) - d) both always do not hold.
 \par

  We first prove the condition in step 2) - b), that is
   $\rho_{\widetilde{H}}(x,\sigma) > 0$ and $\rho_{H_{n}}(x,\sigma) = 0$ and $\sigma \in \Sigma_{uc}$,
  always do not hold.
  It is equivalent to prove $\delta_{H_{n}}(x,\sigma)$\sout{!} and $\delta_{G_{n}}(x,\sigma)!$ and $\sigma \in \Sigma_{uc}$
  do not hold for $\forall x \in X_{H_{n}}$.
  For contradiction, suppose there exists an $x^{'} \in X_{H_{n}}$,  such that
  $\delta_{H_{n}}(x^{'},\sigma)$\sout{!} and $\delta_{G_{n}}(x^{'},\sigma)!$ and $\sigma \in \Sigma_{uc}$.
  Assume $\delta_{H_{n}}(x_{0,H_{n}}, s^{'})= \delta_{G_{n}}(x_{0,G_{n}}, s^{'}) = x^{'}$.
  This means $s^{'} \in supp(L_{H_{n}})$ and $s^{'}\sigma \not\in supp(L_{H_{n}})$ and $s^{'}\sigma \in supp(L_{G_{n}}) = supp(M)$.
  Thus, $supp(L_{H_{n}})$ is not controllable w.r.t $supp(M)$ and $\Sigma_{c}$.
  This contradicts $supp(L_{H_{n}}) = \inf \{ CO(supp(L)) \}$.
  \par

    We then prove that the condition in step 3) - d),
  that is $\rho_{\widetilde{H}}(x_{j},\sigma_{i}) > 0$ and $\rho_{H_{n}}(x_{j},\sigma_{i}) = 0$, do not hold for
  $\forall j \in [1,q]$ and $\forall i \in [1,m]$.
  For contradiction, suppose  there exist $j_{1} \in [1,q]$ and $i_{1} \in [1,m]$, such that
  $\rho_{\widetilde{H}}(x_{j_{1}},\sigma_{i_{1}}) >  0$ and $\rho_{H_{n}}(x_{j_{1}},\sigma_{i_{1}}) = 0$.
  According to the step 3) of Algorithm 3, $\rho_{G_{n}}(x_{j_{1}},\sigma_{i_{1}}) > 0$, and
   there exists $ j_{2} \in [1,q]$,  such that
  $k_{i_{1}} = k_{j_{2}}^{i_{1}} = \frac{\rho_{H_{n}}(x_{j_{2}},\sigma_{i_{1}})}{\rho_{G_{n}}(x_{j_{2}},\sigma_{i_{1}})} > 0$,
  and there exists $k \in [1,r]$, such that $x_{j_{1}}, x_{j_{2}} \in X_{k} \in X_{obs}$.
  That is, $\rho_{H_{n}}(x_{j_{2}},\sigma_{i_{1}}) > 0$, and there exist $s_{1}, s_{2} \in supp(L_{H_{n}})$
  such that $P(s_{2}) = P(s_{1})$, and $\delta_{H^{n}}(x_{0,H^{n}}, s_{1} ) = x_{j_{1}}$, and $\delta_{H^{n}}(x_{0,H^{n}}, s_{2} ) = x_{j_{2}}$.
  In summary, $s_{2}\sigma_{i_{1}}, s_{1} \in supp(L_{H_{n}})$,
  and $s_{1}\sigma_{i_{1}} \not\in supp(L_{H_{n}})$,
  and $s_{1}\sigma_{i_{1}} \in supp(L_{G_{n}}) = supp(M)$.
  It means $supp(L_{H_{n}})$ is not observable w.r.t. $supp(M)$, $\Sigma_{c}$ and $\Sigma_{o}$.
  This also contradicts $supp(L_{H_{n}}) = \inf \{ CO(supp(L)) \}$.
\end{proof}

\begin{lemma}
  $L_{\widetilde{H}} \in PCO(L)$.
\end{lemma}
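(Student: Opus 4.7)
The goal is to verify that $\widetilde{H}$ meets all three requirements in the definition of $PCO(L)$: (i) the sandwich $L\subseteq L_{\widetilde{H}}\subseteq M$, (ii) probabilistic controllability of $L_{\widetilde{H}}$ w.r.t.\ $M$ and $\Sigma_c$, and (iii) probabilistic observability of $L_{\widetilde{H}}$ w.r.t.\ $M$, $\Sigma_c$ and $\Sigma_o$. The plan is to derive everything from a single structural observation about Algorithm 3, namely that it produces a probabilistic automaton sandwiched between $H_n$ and $G_n$ in the subautomaton order, while the logic part is preserved (Lemma 2). Concretely, I will check that $H_n\sqsubseteq\widetilde{H}\sqsubseteq G_n$, and then use Proposition 2 together with the hypothesis $L_{H_n}(s)=L(s)$ for $s\in supp(L)$ to deduce the containment chain. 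For Steps 2) and 3) of Algorithm 3, the update rule for an uncontrollable event is $\rho_{\widetilde{H}}(x,\sigma)=\rho_{G_n}(x,\sigma)\ge \rho_{H_n}(x,\sigma)$, and for a controllable event $\sigma_i$ the rule $\rho_{\widetilde{H}}(x_j,\sigma_i)=k_i\,\rho_{G_n}(x_j,\sigma_i)$ with $k_i=\max_j k_j^i\in[0,1]$ lies between $\rho_{H_n}(x_j,\sigma_i)$ and $\rho_{G_n}(x_j,\sigma_i)$; combined with Lemma 2 this yields both subautomaton inclusions.

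With the sandwich in place, item (i) follows at once: from $H_n\sqsubseteq\widetilde{H}$ and Proposition 2, $L_{H_n}\subseteq L_{\widetilde{H}}$, and then since $L_{H_n}(s)=L(s)$ on $supp(L)$ the ratio test in Definition 7 passes, giving $L\subseteq L_{\widetilde{H}}$; similarly $\widetilde{H}\sqsubseteq G_n$ gives $L_{\widetilde{H}}\subseteq M$. The support equality $supp(L_{\widetilde{H}})=supp(L_{H_n})=\inf\{CO(supp(L))\}$ is a free consequence of Lemma 2, and I will record it because the two remaining items only need to be verified on this support.

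Item (ii) is immediate from Step 2): for any $s\in supp(L_{\widetilde{H}})$, write $x=\delta_{\widetilde{H}}(x_{0,\widetilde{H}},s)=\delta_{G_n}(x_{0,G_n},s)$ (using $\widetilde{H}\sqsubseteq G_n$); then for every $\sigma\in\Sigma_{uc}$,
\[
\frac{L_{\widetilde{H}}(s\sigma)}{L_{\widetilde{H}}(s)}=\rho_{\widetilde{H}}(x,\sigma)=\rho_{G_n}(x,\sigma)=\frac{M(s\sigma)}{M(s)}.
\]

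The main obstacle is item (iii), and here the normality of $H_n$ is the essential ingredient. By the Takai--Ushio property cited in the refinement, the observer of $logic(H_n)$ partitions $X_{H_n}$ into blocks $X_1,\dots,X_r$; any two strings $s_1,s_2\in supp(L_{\widetilde{H}})=supp(L_{H_n})$ with $P(s_1)=P(s_2)$ end at states $x_1,x_2$ lying in a common block $X_k$. Step 3) of Algorithm 3 assigns a single scaling vector $K=(k_1,\dots,k_m)$ to the whole block $X_k$, so for any controllable event $\sigma_i$,
\[
\frac{L_{\widetilde{H}}(s_\ell\sigma_i)}{L_{\widetilde{H}}(s_\ell)}=\rho_{\widetilde{H}}(x_\ell,\sigma_i)=k_i\,\rho_{G_n}(x_\ell,\sigma_i)=k_i\,\frac{M(s_\ell\sigma_i)}{M(s_\ell)},\qquad \ell=1,2.
\]
Substituting these two expressions into Definition 13 makes both sides equal to $k_i\cdot\frac{M(s_1\sigma_i)}{M(s_1)}\cdot\frac{M(s_2\sigma_i)}{M(s_2)}$, which establishes probabilistic observability. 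Combining (i)--(iii) then gives $L_{\widetilde{H}}\in PCO(L)$, completing the argument.
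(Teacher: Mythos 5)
Your proof is correct and follows essentially the same route as the paper's: both arguments rest on Lemma 2, use Step 2)-a) of Algorithm 3 for probabilistic controllability, use the observer-partition of the normal automaton $logic(H_n)$ together with the single per-block scaling vector for probabilistic observability, and obtain the containment chain from $H_n \sqsubseteq \widetilde{H} \sqsubseteq G_n$ via Proposition 2. Your version is in fact slightly more careful on one point the paper glosses over, namely passing from $L_{H_n} \subseteq L_{\widetilde{H}}$ to $L \subseteq L_{\widetilde{H}}$ using $L_{H_n}(s)=L(s)$ on $supp(L)$.
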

\begin{proof}
We first prove that $L_{\widetilde{H}}$ is probabilistic controllable.
Given $s \in supp(L_{\widetilde{H}})$, and suppose that $ x_{j} = \delta_{\widetilde{H}}( x_{0,\widetilde{H}} , s ) = \delta_{G_{n}}( x_{0,G_{n}} , s ) $.
Since we have shown in Lemma 2 that the steps 2) and 3) of Algorithm 3 do not bring any new transition and state to $\widetilde{H}$,
 we have
$\rho_{\widetilde{H}}(x_{j},\sigma_{i}) = \rho_{G_{n}}(x_{j},\sigma_{i})$
for $\forall x_{j} \in X_{\widetilde{H}}$ and $\forall \sigma_{i} \in \Sigma_{uc}$,
 by the step 2) - a) of Algorithm 3.
Thus, $\widetilde{H}$ is probabilistic controllable w.r.t. $G_{n}$, $\Sigma_{c}$.
By Proposition 3, $L_{\widetilde{H}}$ is probabilistic controllable w.r.t. $L_{G_{n}}$, $\Sigma_{c}$.

\par

We then prove that $L_{\widetilde{H}}$ is probabilistic observable.
We take $\forall s_{1},s_{2} \in supp(L_{\widetilde{H}})$, such that $P(s_{1}) = P(s_{2})$,
and suppose $\delta_{\widetilde{H}}( x_{0,\widetilde{H}} , s_{j} ) = \delta_{G_{n}}( x_{0,G_{n}} , s_{j} ) = x_{j} $,
    $j = \{1 , 2\}$.
Since we have shown in Lemma 2 that the steps 2) and 3) of Algorithm 3 do not bring any new  state to $H$,
there exists $X_{k} \in X_{obs}$, such that $x_{1},x_{2} \in X_{k}$.
By the step 3) of Algorithm 3,  we have
$\rho_{\widetilde{H} }(x_{j},\sigma_{i}) = k_{i} * \rho_{G_{n}}(x_{j},\sigma_{i})$, $j = \{1,2\}$, for $\forall \sigma_{i} \in \Sigma_{c}$.
This implies $ \rho_{\widetilde{H} }(x_{1},\sigma_{i}) * \rho_{G_{n}}(x_{2},\sigma_{i}) =
                     \rho_{\widetilde{H} }(x_{2},\sigma_{i}) * \rho_{G_{n}}(x_{1},\sigma_{i})$.
Thus, $\widetilde{H}$ is probabilistic observable w.r.t. $G_{n}$, $\Sigma_{c}$ and $\Sigma_{o}$.
By Proposition 3, $L_{\widetilde{H}}$ is probabilistic observable w.r.t. $L_{G_{n}}$, $\Sigma_{c}$, and $\Sigma_{o}$.

\par
By Algorithm 3, we have $ H_{n} \sqsubseteq \widetilde{H} \sqsubseteq G_{n}$.
According to Proposition 2,
 we obtain $ L_{H_{n}} \subseteq L_{\widetilde{H}} \subseteq L_{G_{n}}$.

 \par
  Therefore, $L_{\widetilde{H}} \in PCO(L)$.
\end{proof}

\begin{lemma}
   $ supp( L_{\widetilde{H}} ) = supp( \inf \{ PCO(L)\}) $
\end{lemma}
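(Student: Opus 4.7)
The plan is to sandwich $\text{supp}(L_{\widetilde{H}})$ between $\inf\{CO(\text{supp}(L))\}$ on both sides, using Lemma~2 on one side and Lemma~3 together with Proposition~7 on the other.

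First, I would observe that Lemma~2 gives $\text{logic}(\widetilde{H}) = \text{logic}(H_n)$, and by the construction feeding Algorithm~3 we have $\text{supp}(L_{H_n}) = \inf\{CO(\text{supp}(L))\}$. Since the support of the probabilistic language generated by an automaton depends only on its logic part, this immediately yields the identity
\[
    \text{supp}(L_{\widetilde{H}}) \;=\; \text{supp}(L_{H_n}) \;=\; \inf\{CO(\text{supp}(L))\}.
\]
So it remains to show that $\text{supp}(\inf\{PCO(L)\}) = \inf\{CO(\text{supp}(L))\}$.

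Next, I would establish the two inclusions. For the ``$\subseteq$'' direction, Lemma~3 tells us $L_{\widetilde{H}} \in PCO(L)$, so by the defining property of the infimal element, $\inf\{PCO(L)\} \subseteq L_{\widetilde{H}}$. Invoking the remark that probabilistic sublanguage implies support inclusion gives $\text{supp}(\inf\{PCO(L)\}) \subseteq \text{supp}(L_{\widetilde{H}}) = \inf\{CO(\text{supp}(L))\}$. For the ``$\supseteq$'' direction, Proposition~7 says that $\text{supp}(\inf\{PCO(L)\}) \in CO(\text{supp}(L))$, and since $\inf\{CO(\text{supp}(L))\}$ is the smallest element of this class (it contains $\text{supp}(L)$ and is controllable and observable), we get $\inf\{CO(\text{supp}(L))\} \subseteq \text{supp}(\inf\{PCO(L)\})$.

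Combining the two inclusions yields $\text{supp}(\inf\{PCO(L)\}) = \inf\{CO(\text{supp}(L))\} = \text{supp}(L_{\widetilde{H}})$, which is exactly the desired conclusion. I do not expect any serious obstacle here: the argument is essentially a bookkeeping exercise that pushes all the substantive content into Lemma~2 (showing Algorithm~3 adds no new logical transitions), Lemma~3 (showing membership in $PCO(L)$), and Proposition~7 (support of a probabilistic CO sublanguage is CO). The only subtlety to state carefully is that the sublanguage relation $\subseteq$ for probabilistic languages does indeed monotonically carry support inclusion, which was noted explicitly after Definition~7.
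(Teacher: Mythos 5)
Your proof is correct and follows essentially the same route as the paper's: one inclusion from Lemma~3 plus the fact that probabilistic sublanguage implies support inclusion, and the other from Proposition~7 together with the infimality of $\inf\{CO(supp(L))\}$, the premise $supp(L_{H_{n}}) = \inf\{CO(supp(L))\}$, and Lemma~2. The only cosmetic difference is that you first pin down $supp(L_{\widetilde{H}}) = \inf\{CO(supp(L))\}$ and then compare both sides to it, whereas the paper chains the same facts directly; the substance is identical.
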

\begin{proof}
Lemma 3 implies $\inf \{ PCO(L)\} \subseteq L_{\widetilde{H}} $.
 Hence, $supp( \inf \{ PCO(L)\}) \subseteq supp( L_{\widetilde{H}} ) $.
\par

Obviously, $\inf \{ PCO(L) \} \in  PCO(L)$, by means of Proposition 7, we have  $supp (\inf \{ PCO(L) \})   \in  CO(supp(L))$.
 It means $ \inf\{CO(supp(L))\}  \subseteq  supp (\inf \{ PCO(L) \})$.
According to the premise $ supp(L_{H_{n}}) = \inf \{ CO(supp(L)) \} $,
we have $ supp(L_{H_{n}}) \subseteq supp (\inf \{ PCO(L) \})$.
In addition, Lemma 2 implies $supp( L_{\widetilde{H}} ) = supp( L_{H_{n}} ) $.
Thus, we obtain $ supp( L_{\widetilde{H}} ) \subseteq supp (\inf \{ PCO(L) \})$.
\par
Therefore,  $ supp( L_{\widetilde{H}} ) = supp( \inf \{ PCO(L)\}) $.
\end{proof}

\begin{theorem}
  $L_{\widetilde{H}} = \inf \{ PCO(L) \}$.
\end{theorem}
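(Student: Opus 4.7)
The plan is to verify the two defining clauses of $\inf\{PCO(L)\}$ for $L_{\widetilde{H}}$: membership in $PCO(L)$ and being a sublanguage of every element of $PCO(L)$. Lemma 3 already supplies the first clause, so the task reduces to establishing $L_{\widetilde{H}}\subseteq R$ for an arbitrary $R\in PCO(L)$. Together with the existence of $\inf\{PCO(L)\}$ (guaranteed by Propositions 1, 5 and 6) and the antisymmetry of $\subseteq$ on probabilistic languages with identical supports, uniqueness of the infimum then yields $L_{\widetilde{H}}=\inf\{PCO(L)\}$; Lemma 4 is consistent with this and can serve as a sanity check on the support sides.

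Fix $R\in PCO(L)$. The support inclusion $supp(L_{\widetilde{H}})\subseteq supp(R)$ follows from Lemma 2 together with the hypothesis on $H_n$ (which gives $supp(L_{\widetilde{H}})=\inf\{CO(supp(L))\}$) and Proposition 7 applied to $R$ (which gives $supp(R)\in CO(supp(L))$). It remains to establish the ratio inequality of Definition 7 at every $s\in supp(L_{\widetilde{H}})$ and $\sigma\in\Sigma$. For $\sigma\in\Sigma_{uc}$ equality actually holds: by Lemma 3 the language $L_{\widetilde{H}}$ is probabilistic controllable w.r.t.\ $M$, as is $R$, so both ratios agree with $M(s\sigma)/M(s)$.

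For $\sigma\in\Sigma_c$, I would compare scaling factors. By the probabilistic observability of $R$, the quantity $\alpha_R(t,\sigma):=(R(s\sigma)/R(s))/(M(s\sigma)/M(s))$ is a well-defined function of $t=P(s)$ (whenever $M(s\sigma)/M(s)>0$), while step 3 of Algorithm 3 sets the corresponding scaling used by $\widetilde{H}$ to $k_i=\max_{x_{j'}\in X_k}k_{j'}^i$, where $X_k$ is the observer cell of $logic(H_n)$ containing $\delta_{\widetilde{H}}(x_{0,\widetilde{H}},s)$. For each contributing $x_{j'}\in X_k$ (i.e., one with $\rho_{H_n}(x_{j'},\sigma)>0$), picking a string $s'$ reaching $x_{j'}$ in $H_n$ with $P(s')=t$ places both $s'$ and $s'\sigma$ in $\inf\{CO(supp(L))\}\subseteq supp(R)$; the inclusion $L\subseteq R$ together with $L_{H_n}(\cdot)=L(\cdot)$ on $supp(L)$ then translates $\rho_{H_n}(x_{j'},\sigma)$ into the bound $\alpha_R(t,\sigma)\ge k_{j'}^i$. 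Maximizing over $j'$ yields $\alpha_R(t,\sigma)\ge k_i$, which is exactly the desired inequality.

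The main obstacle I foresee is handling witness states $x_{j'}$ reached in $H_n$ only by strings outside $supp(L)$, since the sublanguage relation $L\subseteq R$ does not directly constrain $R$'s ratios at such strings. I expect to resolve this through the normality of $logic(H_n)$, whose observer cells form a true partition of the state space by the Takai--Ushio property, together with the minimality of $\inf\{CO(supp(L))\}$: every such $x_{j'}$ shares its observer cell with some $L$-reachable state $x_{j''}$, and one argues that $k_{j'}^i$ is dominated by $k_{j''}^i$, where the preceding comparison with $R$ goes through directly. This reduces the max $k_i$ to $L$-reachable contributors and preserves the bound $k_i\le\alpha_R(t,\sigma)$. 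Combined with Lemma 3, this establishes $L_{\widetilde{H}}=\inf\{PCO(L)\}$.
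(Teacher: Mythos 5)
Your argument is, up to contraposition, the paper's own proof: the paper assumes some $L' \in PCO(L)$ with a strictly smaller ratio at some $(s_1,\sigma_1)$ and derives a contradiction, while you directly bound $k_i \le \alpha_R(t,\sigma)$ for every $R \in PCO(L)$; the ingredients are identical in both versions (Lemma 3 for membership, transporting ratio-of-ratios across an observer cell via the probabilistic observability of \emph{both} languages, comparing at the state that realizes the maximum $k_i$, and invoking $L \subseteq R$ at that state). The uncontrollable-event case and the support inclusion are handled the same way in both.

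The one point of divergence is the obstacle you flag at the end, and the patch you sketch there is not the right one. When the maximizing witness $x_{j'}$ carries a transition of $H_n$ lying in $supp(L_{H_n})\setminus supp(L)$, it need not be true that its observer cell contains an $L$-reachable state $x_{j''}$ whose genuine $L$-transition on $\sigma_i$ dominates $k_{j'}^{i}$: the entire cell may contribute only such logic-only transitions for that event, in which case there is nothing to dominate with. The paper instead resolves this case through the $0^{+}$ convention built into the construction of $H_n$ (Algorithm 4, Appendix A): every transition of $H_n$ in $supp(L_{H_n})\setminus supp(L)$ receives probability $0^{+}$, the formal positive quantity below every genuine positive number, so in this case $k_{j'}^{i}$, and hence $k_i$ if the maximum is of this type, is itself $0^{+}$. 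The required bound $k_i \le \alpha_R(t,\sigma)$ then follows from the support inclusion you already established, namely $supp(L_{\widetilde{H}}) = \inf\{CO(supp(L))\} \subseteq supp(R)$, which forces $R(s\sigma) > 0$ and therefore makes $\alpha_R(t,\sigma)$ a genuine positive number exceeding $0^{+}$. So the gap is closable with material already present in your outline, but not by the domination argument you propose; you need the $0^{+}$ mechanism (or an equivalent treatment of the logic-only transitions of $H_n$) to finish.
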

\begin{proof}
We have shown that $L_{\widetilde{H}} \in PCO(L)$ in Lemma 3.
For contradiction, assume $L_{\widetilde{H}} \neq \inf \{ PCO(L)\} $.
Then there must exist $L^{'} \in PCO(L)$,  $s_{1} \in \Sigma^{*}$ and $\sigma_{1} \in \Sigma$, such that
 $\frac{L^{'}(s_{1}\sigma_{1})}{L^{'}(s_{1})} < \frac{L_{\widetilde{H}}(s_{1}\sigma_{1})}{L_{\widetilde{H}}(s_{1})}$.
 We deduce contradiction by dividing the following two cases.
 \begin{enumerate}
   \item
   $\sigma_{1} \in \Sigma_{uc}$
   \par
   $L_{\widetilde{H}} \in PCO(L)$ means that $L_{\widetilde{H}}$ is probabilistic controllable. Then we have
   $\frac{L_{\widetilde{H}}(s_{1}\sigma_{1})}{L_{\widetilde{H}}(s_{1})} = \frac{L_{G_{n}}(s_{1}\sigma_{1})}{L_{G_{n}}(s_{1})} $.
   Thus, $\frac{L^{'}(s_{1}\sigma_{1})}{L^{'}(s_{1})} < \frac{L_{G_{n}}(s_{1}\sigma_{1})}{L_{G_{n}}(s_{1})}$, which implies that
   $L^{'}$ is not probabilistic controllable.
   This  contradicts $L^{'} \in  PCO(L)$.

   \item
   $\sigma_{1} \in \Sigma_{c}$
   \par

   Suppose $\delta_{\widetilde{H}}(x_{0,\widetilde{H}}, s_{1}) = \delta_{G_{n}}(x_{0,G_{n}}, s_{1})= x_{1}$.
   $\frac{L^{'}(s_{1}\sigma_{1})}{L^{'}(s_{1})} < \frac{L_{\widetilde{H}}(s_{1}\sigma_{1})}{L_{\widetilde{H}}(s_{1})}$ implies that $L_{\widetilde{H}}(s_{1}\sigma_{1})  > 0$.
    According to Algorithm 3,
    we have $k_{1} = \max_{j = 1}^{q} \{ k_{j}^{1} \} > 0$.
  Without loss of generality,
  let $\frac{\rho_{\widetilde{H}}(x_{1},\sigma_{1})}{\rho_{G_{n}}(x_{1},\sigma_{1})} = k_{1} = k_{2}^{1} = \frac{\rho_{H_{n}}(x_{2},\sigma_{1})}{\rho_{G_{n}}(x_{2},\sigma_{1})}$.
  It means that $\rho_{\widetilde{H}}(x_{2},\sigma_{1}) = \rho_{H_{n}}(x_{2},\sigma_{1})$, and
  there exists $X_{1} \in X_{obs}$, such that $x_{1}, x_{2} \in X_{1}$.
  Suppose $\delta_{\widetilde{H}}(x_{0,\widetilde{H}}, s_{2}) = \delta_{G_{n}}(x_{0,G_{n}}, s_{2})= x_{2}$.
  Then we have $P(s_{1}) = P(s_{2})$, $s_{1}\sigma_{1},s_{2}\sigma_{1} \in supp(L_{\widetilde{H}})$, and
  $ \frac{L_{\widetilde{H}}(s_{2}\sigma_{1})}{L_{\widetilde{H}}(s_{2})} = \frac{L_{H_{n}}(s_{2}\sigma_{1})}{L_{H_{n}}(s_{2})}$.
  According to Lemma 4, we also have $s_{1}\sigma_{1},s_{2}\sigma_{1} \in supp(L^{'})$.
  By means of the probabilistic observability of $L_{\widetilde{H}}$ and $L^{'}$, we obtain
   $$\frac{L_{\widetilde{H}}(s_{1}\sigma_{1})}{L_{\widetilde{H}}(s_{1})} * \frac{L_{G_{n}}(s_{2}\sigma_{1})}{L_{G_{n}}(s_{2})} =
     \frac{L_{\widetilde{H}}(s_{2}\sigma_{1})}{L_{\widetilde{H}}(s_{2})} * \frac{L_{G_{n}}(s_{1}\sigma_{1})}{L_{G_{n}}(s_{1})} $$
   and
   $$ \frac{L^{'}(s_{1}\sigma_{1})}{L^{'}(s_{1})} * \frac{L_{G_{n}}(s_{2}\sigma_{1})}{L_{G_{n}}(s_{2})} =
     \frac{L^{'}(s_{2}\sigma_{1})}{L^{'}(s_{2})} * \frac{L_{G_{n}}(s_{1}\sigma_{1})}{L_{G_{n}}(s_{1})}. $$
   Since $\frac{L^{'}(s_{1}\sigma_{1})}{L^{'}(s_{1})} < \frac{L_{\widetilde{H}}(s_{1}\sigma_{1})}{L_{\widetilde{H}}(s_{1})}$
   and $ \frac{L_{\widetilde{H}}(s_{2}\sigma_{1})}{L_{\widetilde{H}}(s_{2})} = \frac{L_{H_{n}}(s_{2}\sigma_{1})}{L_{H_{n}}(s_{2})}$,
    as mentioned before,
   we obtain
   \begin{equation}
     \frac{L^{'}(s_{2}\sigma_{1})}{L^{'}(s_{2})}  < \frac{L_{H_{n}}(s_{2}\sigma_{1})}{L_{H_{n}}(s_{2})}.
   \end{equation}
   We show that Equation (51) is impossible by dividing the following two cases:
   \begin{enumerate}
     \item
   If $s_{2}\sigma_{1} \in supp(L_{H_{n}}) \backslash supp(L)$, according to the construction algorithm for $H_{n}$ (see Appendix A),
   $\frac{L_{H_{n}}(s_{2}\sigma_{1})}{L_{H_{n}}(s_{2})} = 0^{+}$, where $0^{+}$ is viewed as the ``minimum" positive number.
   Hence, Equation (51) is impossible.

     \item
     If $s_{2}\sigma_{1} \in supp(L)$, then $L_{H_{n}}(s_{2}\sigma_{1}) = L(s_{2}\sigma_{1})$ and $L_{H_{n}}(s_{2}) = L(s_{2})$.
        Equation (51) is equivalent to $\frac{L^{'}(s_{2}\sigma_{1})}{L^{'}(s_{2})}  < \frac{L(s_{2}\sigma_{1})}{L(s_{2})}$.
        On the other hand, $L^{'} \in  PCO(L) $ implies
        $\frac{L(s_{2}\sigma_{1})}{L(s_{2})} \leq \frac{L^{'}(s_{2}\sigma_{1})}{L^{'}(s_{2})}$.
          Hence, Equation (51) is impossible.
   \end{enumerate}
 \end{enumerate}
\end{proof}

\section{Conclusions}
In this paper, we have formulated a comprehensive theory for the supervisory control problem of PDESs with the assumptions that the supervisor is probabilistic
 and has a partial observation. The main contributions of this paper are as follows.
 \begin{enumerate}
   \item
     The partial observation probabilistic supervisor has been defined as
     a set of probability distributions on the control patterns, called as the probabilistic P-supervisor.
     The equivalence between the  probabilistic P-supervisor and the scaling-factor function has been demonstrated.
     As a result, the scaling-factor function could be viewed as a compact form of the probabilistic P-supervisor.
     \item
     The notions of the probabilistic controllability and observability,
     and their polynomial verification algorithms have been proposed.
     The probabilistic controllability and observability theorem has been put forward,
     in which the probabilistic controllability and observability are demonstrated to be the necessary and sufficient conditions for the existence of the probabilistic P-supervisors.
     Moreover, the probabilistic P-supervisors synthesizing approach also has been presented.

     \item
     The optimal control problem of PDESs has been considered.
     The infimal probabilistic controllable and observable superlanguage,
     as the solution of optimal control problem of PDESs,
     has been introduced and computed.
 \end{enumerate}

  \par
  The centralized control of PDESs has been considered in this paper.
  A further issue to be considered is the decentralized control of PDESs.
  Moreover, Lin \cite{networked} investigated the control problem of networked DES that
  deals with the communication losses and delays.
  However, \cite{networked} does not consider the probabilities of the communication losses and delays,
  which might exist and could be obtained in probabilistic systems.
  The control of networked PDESs that deals with the probabilistic communication losses and delays could be another challenge.
  These two aforementioned problems should be worthy of consideration in subsequent work.

\begin{appendices}
\section{ Construction algorithm for $G_{n}$ and $H_{n}$}

\begin{algorithm}
  For the sets $PCO(L)$ and $CO(supp(L))$ defined before,
  given automaton $H_{s} =  \{ Q_{2}, q_{0,2}, \Sigma, \delta_{2}  \}$ such that $L_{H_{s}} = \inf \{ CO(supp(L)) \}$,
  and probabilistic automata $G = \{ Q_{1}, q_{0,1}, \Sigma, \delta_{1}, \rho_{1}  \} $ and $H = \{ Q_{3}, q_{0,3}, \Sigma, \delta_{3},  \rho_{3} \}$, such that
  $L_{H} = L$ and $L_{G} = M$.
  \begin{enumerate}
    \item
    Construct probabilistic automata $H^{'} = \{ X_{H^{'}}, x_{0,H^{'}},\Sigma, $ $ \delta_{H^{'}}, \rho_{H^{'}}  \} $,
    $H_{s}^{'} = \{ X_{H^{'}_{s}}, x_{0,H^{'}_{s}},\Sigma, \delta_{H^{'}_{s}}, \rho_{H^{'}_{s}} \}$ and
    $G^{'} = \{ X_{G^{'}}, x_{0,G^{'}},\Sigma,  \delta_{G^{'}}, \rho_{G^{'}} \} $,
    such that $L_{H^{'}} = L_{H}$, $supp(L_{H_{s}^{'}}) = L_{H_{s}}$, $ L_{G^{'}} = L_{G}$, and
    $H^{'} \sqsubseteq H_{s}^{'} \sqsubseteq G^{'}$. Specifically, do the following.
        \begin{enumerate}
         \item
         Let $logic(H^{'}) = logic(G) \times H_{s} \times logic(H)$ first, then
          specify a transition probability for each transition in $logic(H^{'})$.
          Specifically,
         let $\rho_{H^{'}}((q_{1},q_{2},q_{3}), \sigma) = \rho_{3}(q_{3},\sigma)$.
         Thus, obtain the probabilistic automaton $H^{'}$.

         \item
         Examine each state $q_{3}$ of $logic(H)$ and add a self-loop for each event that is not defined at $q_{3}$, and call the result $logic(H)^{sl}$.

         \item
          Let $0^{+}$ denotes the positive number that is less than any a given positive number.  That is, $0^{+}$ could be viewed as the ``minimum" positive number.

         \item
         Let $logic(H^{'}_{s}) = logic(G) \times H_{s} \times logic(H)^{sl}$ first,
         then specify  a probability for each transition in $logic(H^{'}_{s})$.
         Specifically,
         let $\rho_{H^{'}_{s}}((q_{1},q_{2},q_{3}), \sigma) = \\ \rho_{3}(q_{3},\sigma)$, if $\delta_{3}(q_{3},\sigma)!$; otherwise,
         $\rho_{H^{'}_{s}}((q_{1},q_{2},q_{3}), \\ \sigma) = 0^{+}$.
         Thus, obtain the probabilistic automaton $H^{'}_{s}$.

         \item
         Examine each state $q_{2}$ of $H_{s}$ and add a self-loop for each event that is not defined at $q_{2}$, and call the result $H^{sl}_{s}$.

          \item
         Let $logic(G^{'}) = logic(G) \times H_{s}^{sl} \times logic(H)^{sl}$ first, then specify  a probability for each transition in $logic(G^{'})$.
         Specifically,
         let $\rho_{G^{'}}((q_{1},q_{2},q_{3}), \sigma) = \rho_{1}(q_{1},\sigma)$.
         Thus, obtain the probabilistic $G^{'}$.

    \end{enumerate}

    \item
    Construct the observers for $logic(G^{'})$ of $logic(H^{'}_{s})$, namely, $Obs(logic(G^{'}))$ and $Obs(logic(H^{'}_{s}))$,
    such that $Obs(logic(H^{'}_{s})) \sqsubseteq  Obs(logic(G^{'}))$.
    Specifically, do the following.
        \begin{enumerate}
              \item
                Obtain the observers for $logic(G^{'})$ of $logic(H^{'}_{s})$, denoted by $Obs_{G}$ and $Obs_{H}$, respectively.
              \item
                Examine each state $q_{obs}$ of $Obs_{H}$ and add a self-loop for each event that is not defined at $q_{obs}$, and call the result $Obs_{H}^{sl}$.
              \item
               Let $Obs(logic(H^{'}_{s})) = Obs_{G} \times Obs_{H}$, and $ Obs(logic(G^{'})) = Obs_{G} \times Obs_{H}^{sl}$.
          \end{enumerate}

    \item
    Suppose $Obs(logic(G^{'})) = \{X_{obs},  x_{0,obs}, \Sigma_{o}, \delta_{obs} \}$.
    Then construct the normal probabilistic automaton $G_{n}$, such that $L_{G_{n}} = L_{G^{'}}$.
    Specifically,
      let $G_{n} = \{ X_{G_{n}}, x_{0,G_{n}},\Sigma,  \delta_{G_{n}}, \rho_{G_{n}} \}$,
      where $X_{G_{n}} = X_{G^{'}} \times X_{obs}$ and $x_{0,G_{n}} = (x_{0,G^{'}}, x_{0,obs})$.
      The transition function $\delta_{G_{n}}$ is defined as follows.
      \begin{align}
          \delta_{G_{n}}((x^{'},x_{obs}),\sigma) =
              \begin{cases}
                      (\delta_{G^{'}}(x^{'},\sigma), &  \delta_{obs}(x_{obs},\sigma)),
                        \\ & \text{ if }    \delta_{G^{'}}(x^{'},\sigma)! \text{ and } \sigma \in \Sigma_{o},     \\
                      (\delta_{G^{'}}(x^{'},\sigma), &  x_{obs}),
                          \\ & \text{ if }    \delta_{G^{'}}(x^{'},\sigma)! \text{ and } \sigma \in \Sigma_{uo},
              \end{cases}
      \end{align}
      The transition probability function $\rho_{G_{n}}$ is defined as follows.
      \begin{equation}
            \rho_{G_{n}}((x^{'},x_{obs}), \sigma) = \rho_{G^{'}}(x^{'}, \sigma).
       \end{equation}

     \item
    Similarly, construct the normal probabilistic automaton $H_{n}$, such that $L_{H_{n}} = L_{H^{'}_{s}}$.

  \end{enumerate}

\end{algorithm}

\par
\par
    The main idea of the step 1) of Algorithm 4 that constructs three probabilistic automata subject to subautomaton relation is from \cite{desbook} (page 87),
    in which a general procedure
     to build two non-probabilistic automata subject to subautomaton relation was presented.
    The method of the step 2) of Algorithm 4 is directly from \cite{desbook} (page 87).
    The main idea of the step 3) of Algorithm 4 that makes a normalization for $G^{'}$ is from \cite{Cho} and \cite{Takai}.
    Since $H^{'}_{s} \sqsubseteq G^{'}$ and $Obs(logic(H^{'}_{s})) \sqsubseteq  Obs(logic(G^{'}))$,
    the step 3) and 4) of Algorithm 4 preserve the subautomaton relation.
    Hence, $H_{n} \sqsubseteq G_{n}$.
    \par
    In the following, we would like to prove the correctness of Algorithm 4. Since the step 2) is similar to the step 1), and the step 4) is similar to step 3),
    we only present the proofs for the step 1) and step 3).
    \par
    \begin{proposition}
      After executing the step 1) of Algorithm 4,
      $L_{H^{'}} = L_{H}$, $supp(L_{H_{s}^{'}}) = \widehat{L}_{H_{s}}$, $L_{H_{s}^{'}}(s) = L_{H^{'}}(s)$ for $\forall s \in supp(L_{H^{'}})$,
      $ L_{G^{'}} = L_{G}$, and     $H^{'} \sqsubseteq H_{s}^{'} \sqsubseteq G^{'}$.
    \end{proposition}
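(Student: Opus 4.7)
The plan is to verify each of the five claims of the proposition in sequence, using (i) the known behaviour of the product operation on (non-probabilistic) automata, (ii) the self-loop ``padding'' of $logic(H)^{sl}$ and $H_s^{sl}$, which makes the corresponding factors transparent in a product, and (iii) the explicit choice of transition-probability function in each of the three constructions.

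First I would fix the inclusions $supp(L_H) \subseteq supp(L_{H_s}) \subseteq supp(L_G)$, which follow from $L = L_H \subseteq M = L_G$ together with $L_{H_s} = \inf\{CO(supp(L))\}$ and the fact that every element of $CO(supp(L))$ lies between $supp(L)$ and $supp(M)$. This will let me argue that restricting the logic part of each factor to the product never eliminates any behaviour already present in the ``most restrictive'' factor. For $L_{H^{'}}=L_H$, I would show by induction on $|s|$ that $\delta_{H^{'}}(x_{0,H^{'}},s)!$ iff $\delta_3(q_{0,3},s)!$ and that, in that case, the ``running'' product in the definition of the generated probabilistic language reduces to $\prod \rho_3$, because step~1 sets $\rho_{H^{'}}((q_1,q_2,q_3),\sigma) = \rho_3(q_3,\sigma)$. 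The case $L_{G^{'}}=L_G$ is the symmetric argument: the two self-looped factors $H_s^{sl}$ and $logic(H)^{sl}$ contribute a transition on every event at every state, so the logic product is just $logic(G)$ up to an isomorphism of reachable states, and the chosen probability $\rho_{G^{'}} = \rho_1$ on the first coordinate closes the claim.

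For $supp(L_{H_s^{'}}) = L_{H_s}$ I would argue that, because $logic(H)^{sl}$ has a transition on every event at every state, the support-generating product $logic(G)\times H_s \times logic(H)^{sl}$ accepts exactly $supp(L_G) \cap L_{H_s} \cap \Sigma^{*} = L_{H_s}$, using the inclusion $L_{H_s} \subseteq supp(L_G)$ established above. For the equality $L_{H_s^{'}}(s) = L_{H^{'}}(s)$ on $s \in supp(L_{H^{'}}) = supp(L_H)$ I would note that along such a string every prefix satisfies $\delta_3(q_{0,3},\cdot)!$, so the conditional branch of the definition of $\rho_{H_s^{'}}$ always chooses $\rho_3(q_3,\sigma)$, never the placeholder $0^+$; hence the two languages agree factor by factor in the recursion (6).

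The last and hardest step is the chain $H^{'} \sqsubseteq H_s^{'} \sqsubseteq G^{'}$, which requires nested state sets, equal initial states and pointwise-dominated probabilities. The state-inclusion half is essentially bookkeeping: the reachable state set of the product becomes larger as each factor is replaced by its self-loop extension, and the initial triple $(q_{0,1},q_{0,2},q_{0,3})$ is common to all three. The delicate part is the probability inequality. For $H^{'} \sqsubseteq H_s^{'}$ I compare $\rho_3(q_3,\sigma)$ (in $H^{'}$) with either $\rho_3(q_3,\sigma)$ or the symbolic $0^{+}$ (in $H_s^{'}$); the first case gives equality, and in the second case $H^{'}$ has no transition at all, so equality $\delta_{H^{'}}(q,\sigma) = \delta_{H_s^{'}}(q,\sigma)$ only needs to be checked on the common reachable part, which is exactly the set where $\delta_3$ is defined. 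For $H_s^{'} \sqsubseteq G^{'}$ I would use the inclusion $L_{H_s^{'}} \subseteq L_G$ obtained above, together with $\rho_3(q_3,\sigma) \leq \rho_1(q_1,\sigma)$, which follows from the hypothesis $L \subseteq M$ applied along any realising string. The main obstacle is thus handling the $0^{+}$ symbol coherently inside Definition~7 (subautomaton): I expect to have to argue that $0^{+} \leq \rho_1(q_1,\sigma)$ precisely when that transition is defined in $G$ (i.e.\ is part of $CO(supp(L))$), which is exactly where $H_s^{'}$ puts its $0^{+}$ entries, so the comparison is well-posed.
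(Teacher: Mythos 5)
Your proposal is correct and follows essentially the same route as the paper's own proof: establish $supp(L_H)\subseteq supp(L_{H_s})\subseteq supp(L_G)$ first, use the fact that the self-looped factors are transparent in the products to identify the logic parts (hence the supports), prove the probability equalities by induction on string length using the explicit assignments $\rho_{H^{'}}=\rho_{3}$, $\rho_{G^{'}}=\rho_{1}$, and verify the subautomaton chain directly from Definition 7. Your treatment of the $0^{+}$ entries in $H_{s}^{'}$ (restricting the transition-function comparison to where $\delta_{H^{'}}$ is defined, and bounding $0^{+}$ by $\rho_{1}$ where $G^{'}$ has a transition) matches how the paper handles $H^{'}\sqsubseteq H_{s}^{'}$ and fills in the half the paper leaves as ``similar.''
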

    \begin{proof}
    We first show that after executing the step 1) of Algorithm 4, $L_{H^{'}} = L_{H}$.
    By $L = L_{H}$, $M = L_{G}$ and $\widehat{L}_{H_{s}} = \inf\{CO(supp(L))\}$, we have $\widehat{L}_{logic(H)} \subseteq \widehat{L}_{H_{s}} \subseteq \widehat{L}_{logic(G)}$.
    Since $logic(H^{'}) = logic(G) \times H_{s} \times logic(H)$, we obtain $\widehat{L}_{logic(H^{'})} = \widehat{L}_{logic(H)}$.
    That is, $ supp(L_{H^{'}}) = supp(L_{H})$.
    Furthermore, $logic(H^{'}) = logic(G) \times H_{s} \times logic(H)$ also implies that if there exists an $s \in \Sigma^{*}$ such that $\delta_{3}(q_{0,3},s) = q_{3}$, then
    there must exist a $(q_{1},q_{2},q_{3}) \in X_{H^{'}}$, such that $\delta_{H^{'}}((q_{0,1}, q_{0,2}, q_{0,3}),s) = (q_{1},q_{2},q_{3})$.
    In addition, we have $\rho_{H^{'}}((q_{1},q_{2},q_{3}), \sigma) = \rho_{3}(q_{3},\sigma)$.
    By induction on the length of the events sequence, it is easy to prove that $L_{H^{'}} = L_{H}$.
    \par
    We then show that after executing the step 1) of Algorithm 4, $supp(L_{H_{s}^{'}}) = \widehat{L}_{H_{s}}$ and $L_{H_{s}^{'}}(s) = L_{H^{'}}(s)$ for $\forall s \in supp(L_{H^{'}})$.
    According to the construction method of $login(H)^{sl}$, we have $\widehat{L}_{login(H)^{sl}} = \Sigma^{*}$.
    Hence, $ \widehat{L}_{H_{s}} \subseteq \widehat{L}_{logic(G)} \subseteq \widehat{L}_{login(H)^{sl}}$.
    Since $logic(H^{'}_{s}) = logic(G) \times H_{s} \times logic(H)^{sl}$, thus $\widehat{L}_{logic(H^{'}_{s})} = \widehat{L}_{H_{s}}$.
    That is, $supp(L_{H^{'}_{s}}) = \widehat{L}_{H_{s}}$.
    Similar to the proof of $L_{H^{'}} = L_{H}$ presented above,  we can prove that
     $L_{H_{s}^{'}}(s) = L_{H}(s)$ for $\forall s \in supp(L_{H})$.
    That is, $L_{H_{s}^{'}}(s) = L_{H^{'}}(s)$ for $\forall s \in supp(L_{H^{'}})$.
    \par
    Similarly, we could prove that after executing the step 1) of Algorithm 4, $ L_{G^{'}} = L_{G}$.
    \par
    The remain is to prove that after executing the step 1) of Algorithm 4, $H^{'} \sqsubseteq H_{s}^{'} \sqsubseteq G^{'}$.
    We only prove $H^{'} \sqsubseteq H_{s}^{'}$, as $H_{s}^{'} \sqsubseteq G^{'}$ can be proved similarly.
    Since
    $logic(H^{'}) = logic(G) \times H_{s} \times logic(H),$ and
    $logic(H^{'}_{s}) = logic(G) \times H_{s} \times logic(H)^{sl},$
    we have $$x_{0,H^{'}} = x_{0,H^{'}_{s}} = (q_{0,1},q_{0,2},q_{0,3}),$$ and $X_{H^{'}} \subseteq X_{H^{'}_{s}}$, and
    for $\forall (q_{1},q_{2},q_{3}) \in X_{H^{'}}$ and $\sigma \in \Sigma$, if $\delta_{H^{'}}((q_{1},q_{2},q_{3}),\sigma)$!,
    $$\delta_{H^{'}_{s}}((q_{1},q_{2},q_{3}),\sigma) = \delta_{H^{'}}((q_{1},q_{2},q_{3}),\sigma). $$
    Since
    $\rho_{H^{'}_{s}}((q_{1},q_{2},q_{3}), \sigma) =  \rho_{3}(q_{3},\sigma)$, if $\delta_{3}(q_{3},\sigma)!$,
         and
     $ \rho_{H^{'}}((q_{1},q_{2},q_{3}), \sigma) = \rho_{3}(q_{3},\sigma)$,
     thus we could obtain
     $$ \rho_{H^{'}}((q_{1},q_{2},q_{3}), \sigma) \leq  \rho_{H^{'}_{s}}((q_{1},q_{2},q_{3}), \sigma),$$
     for $\forall (q_{1},q_{2},q_{3}) \in X_{H^{'}}$ and $\sigma \in \Sigma$.
     Therefore, by the definition of subautomaton, $H^{'} \sqsubseteq H_{s}^{'}$.
    \end{proof}
    \par
    \begin{proposition}
      After executing the step 3) of Algorithm 4, $logic(G_{n})$ is a normal automaton, and $L_{G_{n}} = L_{G^{'}}$.
    \end{proposition}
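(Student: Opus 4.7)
The plan is to establish the two conclusions separately by exploiting the fact that, in (52)--(53), the second coordinate of a state of $G_n$ simply tracks the observer state of $logic(G')$ while the first coordinate and the transition probabilities are copied verbatim from $G'$.

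First I would prove by induction on $|s|$ that for every $s \in \Sigma^{*}$ with $\delta_{G'}(x_{0,G'},s)!$ one has
\[
\delta_{G_n}(x_{0,G_n},s) = \bigl(\delta_{G'}(x_{0,G'},s),\, \delta_{obs}(x_{0,obs},P(s))\bigr).
\]
The base case $s=\epsilon$ is immediate from $x_{0,G_n}=(x_{0,G'},x_{0,obs})$. For the induction step, the two branches of (52) together with $P(s\sigma)=P(s)P(\sigma)$ (and the fact that the observer ignores $\sigma\in\Sigma_{uo}$) give the claim. The language equality $L_{G_n}=L_{G'}$ then follows by a second routine induction: by (53) the transition probability at any reachable $(x',x_{obs})$ equals $\rho_{G'}(x',\sigma)$, so the recursive definition of probabilistic languages (Definition~3) produces identical values at each step for $G_n$ and $G'$.

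For normality, I would invoke the characterization of Takai and Ushio recalled earlier in the paper, namely that $logic(G_n)$ is normal iff the state space of $Obs(logic(G_n))$ is a partition of the reachable state space of $logic(G_n)$. Let $T_t$ denote the state of $Obs(logic(G_n))$ reached by $t\in\Sigma_o^{*}$, and set $q_t=\delta_{obs}(x_{0,obs},t)$. Using the first induction together with the fact that the unobservable branch of (52) leaves the observer coordinate untouched, I obtain $T_t\subseteq X_{G'}\times\{q_t\}$; conversely, every element of the observer state of $logic(G')$ at $q_t$ lifts, via its unobservable reach, to an element of $T_t$, so $T_t$ is determined entirely by $q_t$. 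Consequently, for $t_1,t_2\in\Sigma_o^{*}$, either $q_{t_1}=q_{t_2}$ (and then $T_{t_1}=T_{t_2}$) or $q_{t_1}\ne q_{t_2}$ (and then $T_{t_1}\cap T_{t_2}=\varnothing$, since the two sets lie in disjoint slices $X_{G'}\times\{q_{t_i}\}$). Hence $\{T_t : t\in\Sigma_o^{*}\}$ partitions the reachable part of $X_{G_n}$, which is precisely the normality condition.

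The main obstacle is the normality argument: one has to verify both directions of the partition property, and in particular to show that two observation-equivalent observer states coincide, not merely that observationally inequivalent ones are disjoint. Both directions rest on the fact, established in the first induction, that the embedded observer coordinate $x_{obs}$ of a reachable state of $G_n$ is driven by the observation $P(s)$ alone and is frozen under unobservable events; once this is in place, the two claims of the proposition follow immediately.
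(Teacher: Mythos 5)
Your proposal is correct, and for the language equality $L_{G_n}=L_{G'}$ it takes essentially the same route as the paper: an induction on $|s|$ showing that the reachable states of $G_n$ are pairs $(\delta_{G'}(x_{0,G'},s),\delta_{obs}(x_{0,obs},P(s)))$ and that Equation (53) makes the step-by-step probabilities coincide. The difference is in the normality claim: the paper does not prove it at all, but simply refers the reader to the cited works of Cho--Marcus and Takai--Ushio, whereas you supply a self-contained argument --- showing that each observer state $T_t$ of $logic(G_n)$ equals $q_t\times\{q_t\}$ for $q_t=\delta_{obs}(x_{0,obs},t)$, so that two such states are either identical or disjoint and hence partition the reachable state space. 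This is the standard argument for why the product-with-observer construction normalizes an automaton, and it is a genuine improvement in completeness over the paper's citation-only treatment. The one point to be careful about: the paper only records the forward implication (``the observer state space of a \emph{normal} automaton is a partition''), while your proof needs the converse, i.e., that the partition property \emph{suffices} for normality. Under the Cho--Marcus definition this equivalence does hold for accessible automata, but since the paper never states a formal definition of normality, you should make explicit which definition you are invoking so that the partition property is legitimately a characterization rather than merely a necessary condition.
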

    \begin{proof}
        The proof of the normality of $logic(G_{n})$ can refer to \cite{Cho} and \cite{Takai}.
        We only prove $L_{G_{n}} = L_{G^{'}}$.
        According to the definitions of observer automaton and $G_{n}$, $\delta_{G^{'}}(x^{'},\sigma)!$ always implies $\delta_{obs}(x_{obs},\sigma)!$,
         for $(x^{'},x_{obs}) \in X_{G_{n}}$.
        Thus, $supp(L_{G^{'}}) = supp(L_{G_{n}})$.
        By the definition of $G_{n}$,
        if there exists an $s \in \Sigma^{*}$ such that $\delta_{G^{'}}(x_{0,G^{'}},s) = x^{'}$, then
    there must exist an $(x^{'},x_{obs}) \in X_{G_{n}}$, such that $\delta_{G_{n}}((x_{0,G^{'}}, x_{0,obs}),s) = (x^{'},x_{obs})$.
         Moreover, we have $\rho_{G_{n}}((x^{'},x_{obs}), \sigma) = \rho_{G^{'}}(x^{'}, \sigma)$.
    By induction on the length of the events sequence, it is easy to prove that $L_{G^{'}} = L_{G_{n}}$.
    \end{proof}

\end{appendices}

\section*{Acknowledgements}

 This work is supported in part by the National
Natural Science Foundation of China (Nos. 61572532, 61272058),  the Natural Science
Foundation of Guangdong Province of China (No. 2017B030311011), and the Fundamental Research Funds for the Central Universities of China (No. 17lgjc24),
and Deng is supported partially by the Natural Science Foundation of Guangdong Province of China (Nos. 2017A030310583).

%

  \newpage

%

\begin{thebibliography}{10}
\providecommand{\url}[1]{#1}
\csname url@samestyle\endcsname
\providecommand{\newblock}{\relax}
\providecommand{\bibinfo}[2]{#2}
\providecommand{\BIBentrySTDinterwordspacing}{\spaceskip=0pt\relax}
\providecommand{\BIBentryALTinterwordstretchfactor}{4}
\providecommand{\BIBentryALTinterwordspacing}{\spaceskip=\fontdimen2\font plus
\BIBentryALTinterwordstretchfactor\fontdimen3\font minus
  \fontdimen4\font\relax}
\providecommand{\BIBforeignlanguage}[2]{{%
\expandafter\ifx\csname l@#1\endcsname\relax
\typeout{** WARNING: tran.bst: No hyphenation pattern has been}%
\typeout{** loaded for the language `#1'. Using the pattern for}%
\typeout{** the default language instead.}%
\else
\language=\csname l@#1\endcsname
\fi
#2}}
\providecommand{\BIBdecl}{\relax}
\BIBdecl

\bibitem{desbook}
C.~G. Cassandras and S.~Lafortune, \emph{Introduction to Discrete Event
  Systems}.\hskip 1em plus 0.5em minus 0.4em\relax Springer Science \&
  Business Media, 2008.

\bibitem{first-supevisory}
R.~J. Ramadge and W.~M. Wonham, ``Supervisory control of a class of discrete
  event processes,'' \emph{SIAM Journal Control Optimization}, vol.~25, no.~1,
  pp. 206--230, Jan. 1987.



\bibitem{observability}
F.~Lin and W.~M.~Wonham, ``On observability of discrete-event systems,''
  \emph{Information sciences}, vol.~44, no.~3, pp. 173--198, 1988.

\bibitem{observability2}
  R.~Cieslak, C.~Desclaux, A.~S.~Fawaz, and P.~Varaiya, ``Supervisory control
of discrete-event processes with partial observations," \emph{ Transactions on Automatic Control},
vol.~33, vol.~3, pp. 249--260, 1988.


\bibitem{Cho}
H.~Cho and S.~I.~Marcus. ``On supremal languages of classes of sublanguages that arise in supervisor synthesis problems with partial observation."
 \emph{Mathematics of Control, Signals and Systems}, vol.~2, no.~1, pp. 47--69, 1989.

\bibitem{Rudie1}
K.~Rudie and W.~M.~Wonham, ``The infimal prefix-closed and observable superlanguange of a given language,"
\emph{System \& Control Letters}, vol.~15, no.~5, pp. 361--371, 1990.


\bibitem{Shayman}
R.~Kumar and M.~Shayman, ``Formulae relating controllability, observability,and co-observability,"
\emph{Automatica}, vol.~32, no.~2, pp. 211--215, 1998.

\bibitem{Takai}
S.~Takai and T.~Ushio, ``Effective computation of an Lm(G)-closed, controllable, and observable sublanguage arising in supervisory control,"
\emph{Systems \& Control Letters}, vol.~49, no.~3, pp. 191--200, 2003.

\bibitem{Ray}
I.~Chattopadhyay and A.~Ray,
``A language measure for partially observed discrete event systems, International Journal of Control,"
Vol.~79, No.~9, pp. 1074--1086, 2006.

\bibitem{Thistle}
J.~G.~Thistle and H.~M.~Lamouchi,
``Effective control synthesis for partially observed discrete-event systems,"
\emph{SIAM Joural on Control and Optimization}, vol.~48, no.~3,
pp. 1858--1887, 2009.

\bibitem{Yin1}
X.~Yin and S.~Lafortune, ``Synthesis of maximally permissive supervisors for
partially observed discrete event systems," \emph{
  Transactions on Automatic Control}, vol.~61, no.~5,
pp. 1239--1254, 2016.

\bibitem{Yin2}
X. Yin and S. Lafortune. ``Synthesis of maximally-permissive supervisors
for the range control problem," \emph{ Transactions on Automatic Control},
vol.~62, no.~8, pp. 3914-3929, 2017.

\bibitem{Masopust}
T.~Masopust, ``Complexity of infimal observable superlanguages,'' \emph{
  Transactions on Automatic Control}, vol.~63, no.~1, pp. 249--254, 2018.

\bibitem{Lin-1}
F.~Lin, ``Supervisory control of stochastic discrete event systems,'' in
  \emph{Book of Abstracts, SIAM Conference Control 1990's}, San Francisco,
  1990.

\bibitem{Lin-2}
Y.~Li, F.~Lin, and Z.~Lin, ``Supervisory control of probabilistic
  discrete-event systems with recovery,'' \emph{ Transactions on Automatic
  Control}, vol.~44, no.~10, pp. 1971--1975, Oct. 1999.

\bibitem{Lawford}
M.~Lawford and W.~M. Wonham, ``Supervisory control of probabilistic discrete
  event systems,'' in \emph{Proceedings of the 36th Midwest Symposium on
  Circuits and Systems}, Detroit, MI, USA, Aug. 1993, pp. 327--331.

\bibitem{Pantelic-1}
V.~Pantelic, S.~Postma, and M.~Lawford, ``Probabilistic supervisory control of
  probabilistic discrete event systems,'' \emph{ Transactions on Automatic
  Control}, vol.~54, no.~8, pp. 2013--2018, Aug. 2009.


\bibitem{Kumar}
R.~Kumar and V.~K. Garg, ``Control of stochastic discrete event systems modeled
  by probabilistic languages,'' \emph{ Transactions on Automatic Control},
  vol.~46, no.~4, pp. 593--606, Apr. 2001.


\bibitem{Rudie2}
C.~Winacott and K.~Rudie, ``Limited lookahead supervisory control of
  probabilistic discrete-event systems,'' in \emph{47th Annual Allerton
  Conference on Communication, Control, and Computing}, Allerton House, UIUC,
  lllinois, USA, Oct. 2009, pp. 660--667.

\bibitem{Pantelic-2-optimal-control}
V.~Pantelic and M.~Lawford, ``Optimal supervisory control of probabilistic
  discrete event systems,'' \emph{ Transactions on Automatic Control},
  vol.~57, no.~5, pp. 1110--1124, May 2012.

\bibitem{Pantelic-4}
V.~Pantelic and M.~Lawford, ``A pseudometric in supervisory control of
  probabilistic discrete event systems,'' \emph{Discrete Event Dynamic
  Systems}, vol.~22, no.~4, pp. 479--510, Dec. 2012.

\bibitem{optimal-control}
I.~Chattopadhyay and A.~Ray, ``Language-measure-theoretic optimal control of
  probabilistic finite-state systems,'' \emph{International Journal of
  Control}, vol.~80, no.~8, pp. 1271--1290, 2007.

\bibitem{diagnosis1}
D.~Thorsley and D.~Teneketzis, ``Diagnosability of stochastic discrete-event
  systems,'' \emph{ Transactions on Automatic Control}, vol.~50, no.~4, pp.
  476--492, 2005.

\bibitem{diagnosis2}
F.~Liu, D.~Qiu, H.~Xing, and Z.~Fan, ``Decentralized diagnosis of stochastic
  discrete event systems,'' \emph{ Transactions on Automatic Control},
  vol.~53, no.~2, pp. 535--546, 2008.

\bibitem{diagnosis3}
F.~Liu and D.~Qiu, ``Safe diagnosability of stochastic discrete event
  systems,'' \emph{ Transactions on Automatic Control}, vol.~53, no.~5, pp.
  1291--1296, 2008.

\bibitem{Predictability1}
F.~Nouioua, P.~Dague, and L.~Ye, ``Predictability in probabilistic discrete
  event systems,'' in \emph{Soft Methods for Data Science, Springer
  International Publishing}, 2017, pp. 381--389.

\bibitem{Predictability2}
M.~Chang, W.~Dong, Y.~Ji, and L.~Tong, ``On fault predictability in stochastic
  discrete event systems,'' \emph{Asian journal of Control}, vol.~15, no.~5,
  pp. 1458--1467, 2013.

\bibitem{Predictability3}
J.~Chen and R.~Kurmar, ``Failure prognosability of stochastic discrete event
  systems,'' in \emph{American Control Conference (ACC), 2014}, Portland, OR,
  USA, July 2014, pp. 2041--2046.

\bibitem{detectability1}
S.~Shu, F.~Lin, H.~Ying, and X.~Chen, ``State estimation and detectability of
  probabilistic discrete event systems,'' \emph{Automatica}, vol.~44, no.~12,
  pp. 3054--3060, 2008.

\bibitem{detectability2}
X.~Yin, ``Initial-state detectability of stochastic discrete-event systems with
  probabilistic sensor failures,'' \emph{Automatica}, vol.~80, pp. 127--134,
  2017.

\bibitem{SDES-book}
A.~Zimmermann, \emph{Stochastic Discrete Event Systems}.\hskip 1em plus 0.5em
  minus 0.4em\relax Springer, Berlin Heidelberg New York, 2007.

\bibitem{robot-navigation}
G.~Mallapragada, I.~Chattopadhyay, and A.~Ray, ``Autonomous robot navigation
  using optimal control of probabilistic regular languages,''
  \emph{International Journal of Control}, vol.~82, no.~1, pp. 13--26, 2009.

\bibitem{path-planning}
I.~Chattopadhyay, G.~Mallapragada, and A.~Ray, ``A robot path planning
  algorithm based on renormalized measure of probabilistic regular languages,''
  \emph{International Journal of Control}, vol.~82, no.~5, pp. 849--867, 2009.

\bibitem{Swarm}
L.~Y. Kaszubowski, S.~M. Trenkwalder, A.~B. Leal, T.~J. Dodd, and R.~Gro{\ss},
  ``Probabilistic supervisory control theory (psct) applied to swarm
  robotics,'' in \emph{Proceedings of the 16th Conference on Autonomous Agents
  and MultiAgent Systems}, S$\tilde{a}$o Paulo, Brazil, May 2017, pp.
  1395--1403.

\bibitem{tobacco-control}
F.~Lin and X.~Chen, ``Estimation of transitional probabilities of discrete
  event systems from cross-sectional survey and its application in tobacco
  control,'' \emph{Information sciences}, vol. 180, no.~3, pp. 432--440, 2010.

\bibitem{multi-risk}
F.~Lin, X.~Chen, and W.~Chen, ``Modelling and control of multi-risk behaviours
  using probabilistic discrete event systems,'' \emph{International Journal of
  Automation and Control}, vol.~8, no.~4, pp. 295--308, 2014.

\bibitem{Farkas}
D.~Gale, H.~Kuhn, and W.~Albert, ``Linear programming and the theory of games -
  chapter xii,'' in \emph{Activity Analysis of Production and Allocation, edit
  by T. C. Koopmans}, 1951, pp. 317--335.

\bibitem{networked}
F.~Lin, ``Control of networked discrete event systems: dealing with
  communication delays and losses,'' \emph{SIAM Journal on Control and
  Optimization}, vol.~52, no.~2, pp. 1276--1298, 2014.

\end{thebibliography}

\end{document}